\documentclass[a4paper,UKenglish,cleveref, autoref, thm-restate]{lipics-v2021}
\usepackage{graphicx} 
\usepackage{algpseudocodex}
\usepackage{algorithm}
\usepackage{cleveref}
\usepackage{hyperref}
\hypersetup{colorlinks=true,linkcolor=blue,urlcolor=blue,citecolor=blue}
\usepackage{framed}
\usepackage{amsmath,amsfonts,amssymb}
\newtheorem{question}{Question}
\usepackage{lmodern} 
\DeclareFontShape{T1}{lmr}{m}{scit} { <-> ssub * cmr/m/sc }{}

\usepackage{placeins} 

\usepackage[appendix=inline,bibliography=common]{apxproof} 
\newtheorem{thm}{Theorem}

\newtheorem{defn}[thm]{Definition}
\newtheorem{obs}[thm]{Observation}
   \newtheoremrep{theorem}[thm]{Theorem}
 \newtheoremrep{lemma}[thm]{Lemma}
 \newtheoremrep{corollary}[thm]{Corollary}
 \newtheoremrep{observation}[thm]{Observation}

\newtoggle{sketch}
\togglefalse{sketch}
\makeatletter
 \@ifpackagewith{apxproof}{appendix=inline}{}{}
 \ifthenelse{\equal{\axp@appendix}{strip}}{}{}
 \NewEnviron{sketch}
 { \ifthenelse{\equal{\axp@appendix}{inline}}{}{
     \iftoggle{sketch}{
     \begin{proofsketch}
       \BODY
     \end{proofsketch}
       }{}
   }}
 \makeatother

\newcommand{\apxblue}[1]{\textcolor{black}{#1}}

\usepackage[shortlabels,inline]{enumitem} 
\usepackage{ulem}
\usepackage{silence}
\newcommand{\bconcat}[0]{\smile}

\nolinenumbers

\title{\texorpdfstring{$O(n)+f(k)$}{O(n)+f(k)}: Truly Linear FPT}

\author{Benjamin Merlin Bumpus}{Instituto de Matemática e Estatística, Universidade de São Paulo. Rua do Matão, 1010 — 05508–090, São
Paulo, SP, Brasil.}{bumpus@usp.br}{https://orcid.org/0000-0002-8686-2319}{}
\author{Rod Downey}{Victoria University of Wellington, School of Mathematics and Statistics, PO Box 600 Wellington, New Zealand}{rod.downey@vuw.ac.nz}{https://orcid.org/0000-0003-4381-2845}{}
\author{Tala Eagling-Vose}{Department of Computer Science, Durham University, Durham, UK}{tala.j.eagling-vose@durham.ac.uk}{https://orcid.org/0009-0008-0346-7032}{}
\author{Jessica Enright}{School of Computing Science, University of Glasgow}{jessica.enright@glasgow.ac.uk}{0000-0002-0266-3292}{Partly supported by EPSRC grants  EP/T004878/1 and EP/V032305/1}
\author{Michael R. Fellows}{Institute of Informatics, University of Bergen (Emeritus), Bergen, Norway}{michael.fellows@uib.no}{https://orcid.org/0000-0002-6148-9212}{}
\author{David C. Kutner}{Department of Computer Science, Durham University, Durham, UK}{david.c.kutner@durham.ac.uk}{https://orcid.org/0000-0003-2979-4513}{Partly supported by EPSRC grant EP/T004878/1, \emph{Multilayer Algorithmics to Leverage Graph Structure (MultilayerALGS)}.}
\author{Laura Larios-Jones}{School of Computing Science, University of Glasgow, Glasgow, UK}{l.larios-jones.1@research.gla.ac.uk}{0000-0003-3322-0176}{}
\author{Barnaby Martin}{Department of Computer Science, Durham University, Durham, UK}{barnaby.d.martin@durham.ac.uk}{https://orcid.org/0000-0002-4642-8614}{supported by Leverhulme Trust Research Project Grant RPG-2024-182.}
\author{Frances Rosamond}{Institute of Informatics, University of Bergen (Emerita), Bergen, Norway}{frances.rosamond4@gmail.com}{https://orcid.org/0000-0002-5097-9929}{}
\author{Ella Yates}{School of Computing Science, University of Glasgow, Glasgow, UK}{3032195Y@student.gla.ac.uk}{https://orcid.org/0009-0000-7979-0401}{}

\authorrunning{Bumpus, Downey, Eagling-Vose, Enright, Fellows, Kutner, Larios-Jones, Martin, Rosamond, and Yates} 

\Copyright{Benjamin Bumpus, Rod Downey, Tala Eagling-Vose, Jessica Enright, Michael R. Fellows, David C. Kutner, Laura Larios-Jones, Barnaby Martin, Frances Rosamond, and Ella Yates}

\keywords{Truly Linear FPT, fixed-parameter tractability, parameterized complexity, graph algorithms, kernelization, dynamic programming.}

\ccsdesc{Theory of computation~Fixed parameter tractability}
\ccsdesc{Theory of computation~Complexity classes}
\ccsdesc{Theory of computation~Parameterized complexity and exact algorithms}
\ccsdesc{Mathematics of computing~Graph algorithms}
\ccsdesc{Theory of computation~Dynamic programming}
\ccsdesc{Theory of computation~Streaming, sublinear and near linear time algorithms}

\date{}

\begin{document}

\maketitle

\begin{abstract}
    Parameterized complexity has always been concerned with practical computing: by confining combinatorial explosion to a secondary parameter $k$, one can uncover why and how many NP-hard problems are effectively tackled in practice.  Today, however, the scale of data has changed: scientists study Big Data, which is so large that even quadratic dependence in the total input size $n$ is unaffordable. Therefore, what constitutes a practical algorithm has also changed.   Classically, parameterized complexity is blind to the difference between defining fixed parameter tractability multiplicatively (i.e.~$f(k) \cdot n^c$) or additively (i.e.~$f(k) + n^c$).
    But what if the constant $c$ is one and we require true linearity, is this distinction still inconsequential? 
    Here, we define and explore Truly Linear FPT (TLFPT) -- that is $O(n)+f(k)$ -- and show that it is a strict subset of Linear FPT (LFPT) -- that is $O(n) \cdot f(k)$ -- via diagonalization. 
    
     Populating TLFPT requires careful consideration of linear-time algorithmics and data structures. We meet many inhabitants of TLFPT: {\sc SAT, Vertex Cover, Min-Max Matching, $(n-k)$-Coloring, Diverse Pair of Matchings, $k$-Path}, and {\sc $H$-Coloring}. 
     Our parameterizations are equally varied. Beyond classical parameters like solution size, we leverage two parameters, treedepth and BFS-width, which are particularly well-suited to the TLFPT regime.  
     We do so by developing techniques based on depth- and breadth-first search.

    For parameterized complexity to be of service to the scientific community, we need to contend with Big Data. For sufficiently large inputs, FPT beyond linear may not suffice. 
    Thus, there is a practical and theoretical need for more ambitious goals. 
    TLFPT is a first step forward.
    \end{abstract}
    
    \bigskip

    \vfill

    {\small \noindent For the purpose of open access, the author(s) has applied a Creative Commons Attribution (CC BY) license to any Author Accepted Manuscript version arising from this submission. No data were created or analyzed in this work.}

\newpage

\section{Introduction}\label{sec:intro}

There is a simple premise at the core of parameterized complexity: it is unrealistic to quantify the running time of algorithms solely in terms of the total input size~$n$. In many cases there is a smaller, problem-relevant quantity $k$ to which combinatorial explosion can be confined. From this perspective, an efficient algorithm is one running in time $f(k) \cdot n^{c}$, where $f$ is a function independent of $n$ and the constant $c$ is independent of both $n$ and $k$. This is \textit{fixed-parameter tractability} (FPT). 

This guiding philosophy is far from a mere mathematical curiosity; it bears genuine practical weight. Even for modest values of $n$ and $k$, the difference between an $n^{k}$ running time and a $2^{k} n$ running time can span dozens of orders of magnitude~\cite{downey2013fundamentals}. In other words, isolating the exponential dependence within~$k$ is not just elegant theory, but is often the only hope for obtaining an algorithm that has guarantees of both correctness and a running time that terminates within the remaining thermodynamic lifetime of the universe.

At this point, an observer might reasonably ask: if the goal is to design provably efficient algorithms, why insist on a running time \textit{multiplicative} in $n$ of the form $n^{c}\cdot f(k)$? Would the seemingly stronger bound $n^{c}+f(k)$ be even better, especially when $n$ is expected to dwarf $k$? This instinct is sound, but in the classical framing of parameterized complexity, the distinction barely matters: the additive and multiplicative notions of FPT coincide up to polynomial factors\footnote{A problem admits a $\mathrm{poly}(n)+ f(k)$ algorithm if and only if it admits a $\mathrm{poly}(n)\cdot f(k)$ algorithm.} in $n$.

However, today inputs can be so large that even a \textit{single} superlinear pass over one is infeasible. Applications to giant instances were the subject of some discussion at the recent Dagstuhl seminar ``Recent Trends in Graph Decomposition''~\cite{DagstuhlReports13_8}), where the present paper's foundations were conceived.
When linear-time processing is the limit of feasibility, there is a natural question that classical parameterized complexity did not originally need to confront:

\begin{quote}
\textit{Forced to incur only \textbf{linear} dependence on~$n$, do the additive and multiplicative definitions of FPT remain equivalent?}
\end{quote}

Our answer is a definitive \textbf{no}: we draw a sharp distinction between \textit{linear FPT} (\textsc{LFPT})\footnote{Sometimes referred to as fixed-parameter linear (FPL) \cite{KAMMER2015126}.} -- algorithms running in time $n \cdot f(k)$ -- and \textit{truly linear FPT} (\textsc{TLFPT}) -- algorithms running in time $O(n) + f(k)$. Furthermore, we show that TLFPT is a strict subset of LFPT, by diagonalization, yielding the refined hierarchy
\[
    \textsc{TLFPT} \subset \textsc{LFPT} \subset \textsc{FPT} \subseteq W[1] \subseteq W[2] \subseteq \dots
\]

Populating \textsc{TLFPT} requires careful consideration of the assumed model of computation (cf.~\cite{HHRTT24}). 
As is standard in many fine-grained contexts, we work in a word-RAM model. Furthermore, mirroring the classical \textsc{FPT}/kernelization equivalence, we show that being in \textsc{TLFPT} corresponds exactly to admitting kernelization running in time $O(n+k)$\footnote{As we shall see, any problem for which a kernel is computable in time $O(n+f(k))$ for some $f:\mathbb N \to \mathbb N$ also admits a truly linear-time kernelization, i.e. one running in time $O(n+k)$.}.
This insight lets us identify problems that are already known to be in TLFPT (e.g. \textsc{Max Leaf Spanning Tree} on weighted graphs \cite{Jansen2010} and \textsc{$d$-Hitting Set} on hypergraphs \cite{Bevern2014}). 
Linear time kernelization\footnote{In the literature, ``linear time'' has varyingly been used to refer to running times of form $O(n \cdot f(k))$ where $f$ is linear \cite{Mertzios2020}, polynomial but nonlinear \cite{GiannopoulouMN17} (there called ``polynomial-linear FPT''), or superpolynomial \cite{bodlaender_linear-time_1996}. In the formalism of \cite{downey2013fundamentals}, ``kernelization'' entails at most polynomial dependency on $k$.} has also been applied to problems in P. For example, \textsc{Maximum Matching} has been shown to have such a kernelization \cite{Mertzios2020} and this result is bolstered by experimental evidence of its applicability \cite{koana2021data}.

Here, we identify further problems that inhabit \textsc{TLFPT} by establishing appropriate meta-theorems.
We summarize the problems, parameters and methods we use below. These examples illustrate that truly linear FPT

is not prohibitively restrictive,

but a robust and attainable algorithmic goal.

\[
\begin{tabular}{l|l|l}
\textbf{Problems} & \textbf{Parameter} & \textbf{Meta-theorem applied} \\ \hline
\textsc{Vertex Cover} & size of the vertex cover & \Cref{thm:deg-alg} (also~\cite{downey2013fundamentals})\\
\textsc{Dominating Set} & vertex cover number & \Cref{thm:matchTwin}\\
\textsc{Min--Max Matching} & size of the matching  & \Cref{thm:matchTwin}\\
\textsc{Diverse Pair of Matchings} & $|M_{1} \Delta M_{2}|$  & \Cref{thm:matchTwin}\\
 \textsc{$(n-k)$-Coloring} & $k$  & \Cref{thm:matchTwin} \\
\textsc{SAT} & incidence treedepth  & \Cref{thm:matchTwinSubgraphs}\\
\textsc{$k$-Path} & $k$  & \Cref{thm:matchTwinSubgraphs}\\
\textsc{$k$-Vertex Ranking (Treedepth)} & $k$ & \Cref{thm:matchTwinSubgraphs} \\
\textsc{$H$-Coloring} & BFS-width  & \Cref{thm:bfs-nice-gives-tlfpt}\\
\end{tabular}
\]

\begin{figure}
    \centering
 \begin{tikzpicture}[a/.style={draw=blue,fill=white},b/.style={draw=red,fill=white},c/.style={draw=green,fill=white},d/.style={draw,fill=white}
]
\draw (-4,0) -- (7.8,0); 
\clip (-4.1,0) rectangle (7.9,3.8); 
\begin{scope}
    \clip [rounded corners=16mm] (-3,1.6)--(-3,-1.6)--(3,-1.6)--(3,1.6)--cycle;
    \fill [ gray!50!white, rounded corners=30mm] (-0.5,3)--(-0.5,-3)--(6,-3)--(6,3)--cycle;
\end{scope}
\begin{scope}
    \clip [rounded corners=20mm] (0.5,2)--(0.5,-2)--(5,-2)--(5,2)--cycle;
    \fill [gray!50!white, rounded corners=25mm] (-4,2.5)--(-4,-2.5)--(4,-2.5)--(4,2.5)--cycle;
\end{scope}
\draw[rounded corners=3mm, fill=gray!50!white] (-3.5,0.3)--(-3.5,-0.3)--(7.8,-0.3)--(7.8,0.3)--cycle; 
\draw[rounded corners=7mm] (-2,0.7)--(-2,-0.7)--(2,-0.7)--(2,0.7)--cycle; 
\draw[rounded corners=16mm] (-3,1.6)--(-3,-1.6)--(3,-1.6)--(3,1.6)--cycle; 
\draw[rounded corners=25mm] (-4,2.5)--(-4,-2.5)--(4,-2.5)--(4,2.5)--cycle; 
\draw[rounded corners=30mm] (-0.5,3)--(-0.5,-3)--(6,-3)--(6,3)--cycle; 
\draw[rounded corners=20mm] (0.5,2)--(0.5,-2)--(5,-2)--(5,2)--cycle; 
\node[] at (0,0.9){$d$-degree reducible};
\node[] at (-0.3,1.8){$\tau$-twin reducible};
\node[] at (-0.5,2.7){twin subgraph reducible};
\node[] at (3.5,3.2){$\ell$-bounded depth};
\node[] at (3.4,2.2){$m$-matching bounded};
\node[] at (6.8,0.5){BFS-nice};
\draw (-4,0) -- (7.8,0); 
\end{tikzpicture}
\caption{A graphical summary of our results. Containment of the properties we will use to categorize problems within in this paper. The shaded areas show TLFPT results.}
    \label{fig:venn-diagram}
\end{figure}

\subparagraph{TLFPT is useful: Big Data and Vertex Cover}

We continue in the tradition of illustrating the value of FPT with a concrete example using \textsc{Vertex Cover}, which asks, given a graph $G=(V,E)$ and integer $k$, whether there exist $k$ vertices in the graph which together cover all edges \cite{Cyg15,downey2013fundamentals,DF99Book}.
In those textbooks' motivating examples, values of $n$ ranged from $1000$ to $10000$, with of $k$ between $10$ and $70$.  
In 2026, major scientific endeavors are supported by the processing of petabytes of data~\cite{CERN:2931994}: for our example, we take $n$ equal to a hundred billion and $k=100$. 
The state-of-the art algorithm for \textsc{Vertex Cover} \cite{CKX10} runs in TLFPT time\footnote{The runtime of the algorithm as presented in \cite{CKX10} is $O(1.2738^k + kn)$, but $n$ denotes the number of vertices; in the definition of TLFPT, $n$ denotes the input size (i.e.~$|V|+|E|$, for a graph). The term $kn$ in the expression comes entirely from Buss kernelization, which takes truly linear time $O(|V|+|E|)$ \cite{downey2013fundamentals}.}: $O(1.27^k + |V| + |E|)$.
Our example values of $n$ and $k$ make stark the distinction between TLFPT and LFPT: an algorithm running in time $1.2738^{k}\cdot n$ would take $10^{21}$ operations to terminate; an algorithm running in time $1.2738^{k} + n$ would take only $10^{11}$ operations to terminate. For an average laptop, this is the difference between minutes and millennia.

\subparagraph{Technical preliminaries}

Let $[n] := \{1,2,\ldots,n\}$ and $[i,j] := \{i,i+1,\ldots,j\}$. Unless stated otherwise graphs $G = (V, E)$ are simple, loopless and undirected. The \textit{union} of graphs $(V,E_1)$ and $(V,E_2)$ is $(V,E_1 \cup E_2)$. 
We refer the reader to Garey and Johnson~\cite{garey2002computers} for the standard definitions of a maximal matching and dominating set, and the corresponding decision problems.
Given a loopless graph $H$, an \textit{$H$-coloring} of a graph $G$ is a graph homomorphism $G \to H$. When $H$ is the complete graph on $n$ vertices, we speak of a \textit{proper $n$-coloring.} The $H$-coloring problem asks if a given graph $G$ admits an $H$-coloring. For convenience, we include the definition of a parameterized decision problem below (see~\cite{DF99Book} for background in parameterized complexity).
\begin{defn}[(Parameterized) decision problem]\label{def:problem}
    
    A \emph{parameterized} problem is a language $L \subseteq \Sigma^* \times \mathbb N$. Given an instance $I=(x,k)$, we call $k$ the \emph{parameter}. The \emph{$k$th slice} $\Pi_k$ of a parameterized problem $\Pi$ is a language $L\subseteq \Sigma^*$ with the property that $(x,k) \in \Pi \iff x \in \Pi_k$.
\end{defn}

\subparagraph{Related work} 
Many works in LFPT are directly relevant to the problems considered here~\cite{Bevern2014,FGJPS24,koana2021data,Mertzios2020,nadara_et_al:LIPIcs.ESA.2022.79,Rei14};
further examples include \cite{Jansen2010,GiannopoulouMN17,HMNN24,LokshtanovEtc18}. 
Many problems that cannot be solved in linear time can be solved in \textit{almost linear} time, for example, in $O(n \cdot \mathit{polylog}(n))$. This is not only true of {\sc Sorting} but also, e.g., for deciding first-order properties in graph classes with locally bounded expansion~\cite{DvorakEtc13}. 
Note that in our context, since $\log(n)$ dominates $k$, $f(k) + n \cdot\mathit{polylog}(n)$ and $f(k) \cdot n \cdot \mathit{polylog}(n)$ define the same class.

\section{Technical Foundations for TLFPT}
We start our technical contributions by defining a class of algorithms and a related class of problems, and some important foundational notions for dealing with these classes.  

\begin{defn}[TLFPT]\label{def:tlfpt}
    An algorithm is \emph{Truly Linear Fixed Parameter Tractable} (TLFPT) if, on input $(I,k)$ with $|I|=n$, it runs in time $O(n)+f(k)$. TLFPT denotes the class of all parameterized problems admitting a TLFPT algorithm.
\end{defn}

\begin{defn}[TLFPT-reduction]
    Let $\Pi$ and $\Gamma$ be two parameterized decision problems, a map $r: \Pi \to \Gamma$ is a \emph{TLFPT-reduction} if there exists a function $f$ such that:
    \begin{itemize}
        \item $r(x,k)$ terminates in time $O(|x|+f(k))$ for each $(x,k)$ in the set of instances of $\Pi$, and returns an instance $(x',k')$ of $\Gamma$,
        \item $(x,k) \in \mathrm{YES}(\Pi) \iff (x',k') \in \mathrm{YES}(\Gamma)$, and
        \item $k' \leq f(k)$.
    \end{itemize}
\end{defn}
    We say that $\Pi$ is TLFPT-reducible to $\Gamma$ if there exists a TLFPT-reduction from $\Pi$ to $\Gamma$.

    In the next lemma, we show that TLFPT-reductions are transitive, and that reducing to a destination problem that is in TLFPT implies that that source problem is also in TLFPT. 
\begin{lemmarep}\label{lem:tlfpt-reducibility-transitive}
    TLPFT-reductions are transitive. In addition, if parameterized problem $\Pi$ is TLFPT-reducible to parameterized problem $\Gamma$ and $\Gamma \in $ TLFPT, then $\Pi \in $ TLFPT. 
\end{lemmarep}
\begin{proof}
First we show transitivity, suppose a parameterized problem $\Pi$ is TLPFT-reducible to parameterized problem $\Gamma$ and $\Gamma$ is TLFPT-reducible to parameterized problem $\Sigma$; we will show that $\Pi$ is TLFPT-reducible to $\Sigma$. Let $r$ be a TLFPT-reduction from $\Pi$ to $\Gamma$ and $q$ be a TLFPT-reduction from $\Gamma$ to $\Sigma$. Then we define a function $s:\Pi \to \Sigma$ by $s((x,k))=q(r((x,k)))$ and show that our conditions hold. 
    Let $f$ be a function such that:
    \begin{itemize}
        \item $r((x,k))$ terminates in time $O(|x|+f(k))$ and has output $(x',k')$ satisfying $k' \le f(k)$, and
        \item $q((x',k'))$ terminates in time $O(|x'|+f(k'))$ and has output $(x'',k'')$  satisfying $k'' \le f(k')$
    \end{itemize}
    
    Then $s((x,k))=q(r((x,k)))$ terminates in time $O(|x|+f(k)+f(f(k))$ and has output $(x'',k'')$ satisfying $k'' \le f(k') \le f(f(k))$. Also, $(x'',k'')\in \mathrm{YES}(\Sigma) \iff (x',k') \in \mathrm{YES}(\Gamma) \iff (x,k) \in \mathrm{YES}(\Pi)$ and thus $s$ is TLFPT-reduction from $\Pi$ to $\Sigma$.

Now we show that a TLFPT-reduction from $\Pi$ to $\Gamma \in$ TLFPT gives us $\Pi \in$ TLFPT. 
    Let $r$ be a TLFPT-reduction from $\Pi$ to $\Gamma$ and $\mathcal{A}$ be a TLFPT algorithm solving $\Gamma$. Let $f$ be a function such that both $\mathcal{A}$ and $r$ run in time $O(n+f(k))$ and the instance $(x',k')$ returned by $r((x,k))$ satisfies $k' \le f(k)$.
    
    For an instance $(x,k)$ of $\Pi$, it is possible to compute $r((x,k))=(x',k')$ and then to compute $\mathcal{A}(x',k')$. The runtime of $r((x,k))$ is clearly TLFPT in the size of the input instance; the runtime of $\mathcal{A}(x',k')$ is TLFPT in $(x',k')$ and therefore is $O(|x'|+f(k')) \le O\left(f(k)+|x|+f(f(k))\right)$ which is also TLFPT. 
\end{proof}

\subparagraph{The word-RAM architecture}\label{sec:why-word-ram}
Results in the truly linear paradigm are highly sensitive to the choice of model.
We use the word-RAM model \cite{FW90-STOC,FW90-FOCS}, as it is the canonical model for the analysis and design of fine-grained algorithms \cite{EvdHM24, HHRTT24}.
More limited models of computation in this context are neither as widely studied nor as practically relevant, since they cannot necessarily support even depth-first search in linear time.
Briefly, word-RAM is a unit-cost RAM with the additional property that the maximum number of bits stored in any single register -- called the \textit{word size} -- is $\log n$, where $n$ is length of the input. 
Thus word-RAM prohibits unbounded parallelism wherein constant-time operations are applied to words of length polynomial in the size of the input. 

On the other hand, ``word-level parallelism'' may be exploited \cite{Hag98}.

Such parallelism includes, for example, taking the bitwise AND of two bitstrings each of length $\ell$ in time $O(1)$ rather than time $O(\ell)$. A more powerful example: let $f: \{0,1\}^{\ell_1} \to \{0,1\}^{\ell_2}$ be some function stored as a lookup table with $2^{\ell_1}$ entries, with $\ell_1, \ell_2$ at most than the word size of our RAM. Then for any $x \in \{0,1\}^{\ell_1}$ we may retrieve $f(x)$ in $O(1)$ time.

As is standard in parameterized complexity, we can assume that $n$ is arbitrarily larger than $k$, and in particular that $\log n > f(k)$ for any function $f$ (this statement is made rigorous in the appendix).
This allows us, for example, to represent a graph of size $k$ with a single word, and operate on it in constant time using bitwise operations.\\

\begin{toappendix}

\begin{lemma}\label{lem:words-big-as-fn-of-k}
    When considering the (non-)existence of a word-RAM algorithm for a parameterized problem $\Pi$, it suffices to fix any function f and consider only those instances $(x,k)$ of $\Pi$ satisfying that $|x| > 2^{f(k)}$. 
\end{lemma}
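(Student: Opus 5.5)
The plan is to show that, for any fixed $f$, instances with $|x| \le 2^{f(k)}$ can be solved within the allowed budget by a trivial method, so only instances with $|x| > 2^{f(k)}$ matter. The claim is really two implications. First, if $\Pi$ admits a word-RAM algorithm $\mathcal{A}$ running in time $O(n)+g(k)$ on all instances with $|x|>2^{f(k)}$, then $\Pi$ admits a TLFPT algorithm on all instances. Second, the converse is immediate, since restricting an algorithm to a subset of inputs preserves its running time bound. So only the first direction needs an argument, and it also shows that non-existence on the restricted instances is equivalent to non-existence in general.

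The construction is as follows. On input $(x,k)$ we first compute $n=|x|$, which takes $O(n)$ time. We then decide whether $n > 2^{f(k)}$ while spending at most $O(n)+h(k)$ time for some computable $h$. To do this, we compute $f(k)$, which costs time depending only on $k$, and compare. To avoid writing out $2^{f(k)}$ in full, we instead compare $\lfloor \log n\rfloor$, which fits in one word, against $f(k)$. If $n > 2^{f(k)}$, we run $\mathcal{A}$. Otherwise $n \le 2^{f(k)}$, so any decision procedure for $\Pi$ with running time $t(n,k)$ — here I assume $\Pi$ is decidable, as is implicit in the setting — runs in time at most $\max_{n'\le 2^{f(k)}} t(n',k) =: h(k)$, a function of $k$ alone. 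Either way the total time is $O(n) + g(k) + h(k) + (\text{time to compute } f(k))$, which has the TLFPT form.

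The main obstacle is making this rigorous in the word-RAM model. One issue is computability and time-constructibility of $f$: I would assume, or replace $f$ by, a computable nondecreasing upper bound that is computable in time depending only on $k$. A second issue is the word-size convention, namely that words have $\log n$ bits. On small instances, where $n \le 2^{f(k)}$, the brute-force procedure may need words larger than $\log n$. I would handle this by simulating larger words with multiword arithmetic. That costs a slowdown factor depending only on $f(k)$, which is absorbed into $h(k)$ because $n$ itself is bounded by $2^{f(k)}$ in this case. The payoff is the usage promised in the text: on the remaining instances $\log n > f(k)$, so any object whose description has $f(k)$ bits fits in a single word.
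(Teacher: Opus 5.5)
Your argument is fine for the case you actually treat, but it does not establish the lemma for \emph{any} function $f$, which is what is claimed; the paper explicitly notes that $f$ need not be computable. Your case split has to decide whether $|x|>2^{f(k)}$, which means evaluating $f(k)$ or a computable upper bound on it. The proposed repair, replacing $f$ by a computable nondecreasing upper bound, is not available in general. If $f$ is, say, the busy beaver function, it eventually dominates every computable function, so no such bound exists. Moreover, the predicate $\lfloor\log n\rfloor\ge f(k)$ is then undecidable, so no algorithm can route instances to the correct branch. And since $\mathcal{A}$ is only guaranteed correct on large instances, you cannot just run it and trust its answer.

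Relatedly, in your formulation decidability of $\Pi$ is a genuinely extra hypothesis, not something implicit. Take $f(k)=k$ and the problem ``$|x|\le 2^{k}$ and the $k$-th Turing machine halts on empty input''. It is trivially decided (always NO) on all instances with $|x|>2^{f(k)}$, yet it is undecidable. So under your literal ``restrict the instance set'' reading, what you prove holds only for decidable $\Pi$ and computably bounded $f$.

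The missing idea is to change what is being restricted. The paper proves that $\Pi$ has a TLFPT word-RAM algorithm iff it has a TLFPT algorithm $\mathcal{B}$ on a RAM with $\max(f(k),\log n)$-bit registers. Crucially, $\mathcal{B}$ must be correct on \emph{all} instances; the size condition only serves to justify the larger registers. On input $(x,k)$ you run $\mathcal{B}$ natively. If it ever needs a register longer than $\log n$ bits, that overflow itself certifies $f(k)>\log n$, hence $n<2^{f(k)}$. You then switch to a multiword simulation of $\mathcal{B}$, whose total cost is bounded by a function of $k$. Smallness is thus detected by overflow rather than by computing $f$, and small instances are solved by the same algorithm rather than a separate decider. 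Neither computability of $f$ nor an independent decidability assumption is needed.

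Your multiword-simulation remark is the right tool but aimed at the wrong object. A generic decider can always be written as an ordinary word-RAM program, whereas it is $\mathcal{B}$ that genuinely uses long registers. This is also the form in which the lemma is used later: the census, depth-and-prune and BFS algorithms are correct on every input and invoke the lemma only to perform constant-time operations on words of $f(k)$ bits.
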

\begin{proof}
    We shall show that for any function $f$, a word-RAM algorithm $\mathcal A$ deciding problem $\Pi$ in TLFPT time exists if and only if there exists a RAM algorithm $\mathcal B$ with $\max (f(k), \log n)$-bit registers deciding $\Pi$ in TLFPT time. The statement of the lemma follows directly from this.
    ($\impliedby$): for any input $(x,k)$ of the algorithm, we distinguish between two cases:
    \begin{itemize}
        \item if $f(k) \le \log n$, then $\mathcal B$ can directly be run on a word-RAM with no changes to its behavior or runtime.
        \item if $f(k) > \log n$, then $\mathcal B$ can be simulated in a word-RAM (possibly much more slowly). The size of $n$ is bounded by a function of $k$ (specifically, $n < 2^{f(k)}$), so the runtime of the simulation is also bounded by a function of $k$. 
    \end{itemize}
    Note that this does not require that $f$ is computable; simply attempt to run $\mathcal B$ in the word-RAM model, and if ever a register of length greater than $\log n$ is needed then this witnesses that $f(k) > \log n$.\\
    ($\implies$): trivial direction, since $\mathcal A$ itself already decides $\Pi$ using $\log n$-bit registers (which can of course be simulated by larger registers).
\end{proof}

Note that \cref{lem:words-big-as-fn-of-k} entails that a word size of at least $\max(f(k),\log |x|)$ can be assumed.

\end{toappendix}

\subparagraph{TLFPT is Strictly Contained in LFPT}

Our proof mirrors that given in \cite{CR73} by Cook and Reckhow to prove the time hierarchy theorem for RAMs. Our task is to make the complexity two-dimensional (i.e.~in both $n$ and $k$) and tailor the discussion to word-RAMs.

Suppose we have an encoding $\alpha$ which maps every string over $\{1,2\}$ to some word-RAM.
We say that the word $w$ \textit{encodes} the word-RAM $P_w$ which it maps to under $\alpha$. 
We also assume that for each word-RAM $M$, there are infinitely many strings encoding $M$. 
In particular, it is possible to pad $w$ to increase its length without affecting the word-RAM $P_w$ that it encodes.
Let $f\in \omega(1)$ be a computable function\footnote{In \cite{CR73} the function needed to be time-constructible. Here we may mildly relax this assumption.}. We define the following language $A=A(f)$.
\[ \mathbf{A}=\{ (w,k) : P_w \text{ with input } w \text{ halts within time } f(k)\cdot n \text{ without accepting } w\}\]

\begin{lemmarep}
   For any\footnote{We do not need to impose any restriction on $g$.} 
   $\mathbb{N} \xrightarrow{g} \mathbb{N}$ and $c \in \mathbb N$, no word-RAM recognized $\mathbf{A}$ in time $g(k)+c \cdot n$.
\end{lemmarep}
\begin{proof}
We follow the proof of Theorem 3 in \cite{CR73}.
Suppose a word-RAM $P'$ accepts $\mathbf{A}$ in time $g(k)+c\cdot n$. Then there exists some $(w,k)$ (possibly with $|w|$ extremely large) so that $P_w=P'$ and $g(k)+c\cdot |w|<f(k) \cdot |w|$. Now:
\[
    \begin{array}{ll}
        (w,k) \in \mathbf{A} & \rightarrow \mbox{ $P'$ accepts $w$ within time $g(k)+c \cdot |w|$} \\
        & \rightarrow \mbox{ $P_w$ accepts $w$ within time $g(k)+c\cdot |w|$} \\
        & \rightarrow \mbox{ $P_w$ accepts $w$ within time $f(k) \cdot |w|$} \\
        & \rightarrow \mbox{ $w \notin \mathbf{A}$}
    \end{array}
\]
    And:
\[
    \begin{array}{ll}
        (w,k) \notin \mathbf{A} & \rightarrow \mbox{ $P'$ does not accept $w$ within time $g(k)+c\cdot |w|$} \\
        & \rightarrow \mbox{ $P_w$ does not accept $w$ within time $g(k)+c\cdot |w|$} \\
        & \rightarrow \mbox{ $P_w$ does not accept $w$ within time $f(k) \cdot |w|$} \\
        & \rightarrow \mbox{ $w \in \mathbf{A}$}
    \end{array}
\]
\end{proof}
We note the existence of a form of universal RAM in \cite{CR73} called a \textit{Random Access Stored Program} (RASP) machine. A full discussion of its programming language RAM-ALGOL is beyond the scope of this paper. However, In Theorem 1 of \cite{CR73} it is explained how the RASP can simulate another RAM with time overhead of only a constant factor. We may additionally note that the contents of the registers in the RASP need not exceed in length the contents of the registers of the RAM being simulated. It follows that the RASP functions as a universal word-RAM. We note that the RASP can count the $f(k)\cdot n$ steps of the computation in two registers since $f(k)\leq n$ (applying a fortiori the assumption that $f(k) < \log n$ as discussed above). 
We are now in a position to obtain the following (and this uses that $f$ is computable, with time complexity $h$).

\begin{lemma}
    There is a word-RAM that accepts $\mathbf{A}$ in time $h(k)\cdot n$ for some computable function $h$.
\end{lemma}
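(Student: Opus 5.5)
The plan is to build the recognizer directly from the RASP machine of \cite{CR73}, used as a universal word-RAM, with a step counter as a clock. On input $(w,k)$ with $n=|w|$, the machine $U$ does four things in order.

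First, $U$ computes $f(k)$. Since $f$ is computable with time complexity $h$, this takes $h(k)$ steps. By \cref{lem:words-big-as-fn-of-k} we may restrict to instances with $f(k)<\log n$, so the value $f(k)$ fits in one register. Second, $U$ computes the budget $T=f(k)\cdot n$. One multiplication is a single step, or $O(f(k))$ steps by repeated addition. Since $T\le n\log n$, the budget fits in at most two registers, as the paper already remarks.

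Third, $U$ decodes $w$ under $\alpha$ into the program $P_w$ and loads it into the RASP. The encoding must be chosen so that decoding is linear time, for instance with programs written essentially verbatim; this costs $O(n)$. Fourth, $U$ simulates $P_w$ on input $w$ using Theorem~1 of \cite{CR73}: each step of $P_w$ is simulated in $O(1)$ steps, and after each simulated step the counter $T$ is decremented.

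The output rule is as follows. If $P_w$ halts before the counter reaches zero, $U$ accepts exactly when $P_w$ rejected. If the counter reaches zero first, $U$ rejects. This is exactly the membership condition for $\mathbf{A}$.

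The running time is $h(k)+O(n)+O(1)\cdot f(k)\, n$, which is at most $h'(k)\cdot n$ for a computable $h'$, for example $h'(k)=h(k)+c\,(f(k)+1)$. Instances that violate $f(k)<\log n$ have $n$ bounded by a function of $k$. Those are handled by the simulation argument of \cref{lem:words-big-as-fn-of-k}, and the extra cost can be absorbed into $h'(k)$.

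The main obstacle I expect is making sure the simulation really has constant overhead on a word-RAM. Two points need care. The first is that $P_w$ is itself a word-RAM on input $w$, so its registers have $\log n$ bits. The RASP registers must therefore not need to exceed this, which the paper notes is the case. The second is the overhead for the two-register counter and for decoding. For the first point I would rely on the remark after \cite{CR73} Theorem~1. For decoding I would fix $\alpha$ with padding by trailing symbols, so that parsing is a linear scan.
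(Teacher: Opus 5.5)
Your proposal is correct and takes the same route as the paper, whose justification is the paragraph just before the lemma: compute $f(k)$ in time $h(k)$, use the RASP of \cite{CR73} as a universal word-RAM with constant-factor overhead (its registers need be no longer than those of the simulated machine), and clock the simulation with an $f(k)\cdot n$ step counter held in two registers, using $f(k)<\log n$. Your extra remarks fill in details the paper leaves implicit: choosing $\alpha$ so that $P_w$ decodes in linear time, and absorbing the instances with $f(k)\ge \log n$ into $h'(k)$.
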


\begin{theorem}
    $\mathbf{A}\subset$ LFPT $\setminus$ TLFPT.
\end{theorem}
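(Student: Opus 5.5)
The theorem combines the two preceding lemmas, so the plan is to verify two memberships separately. Read literally, the statement says $\mathbf{A}\in$ LFPT and $\mathbf{A}\notin$ TLFPT. As a consequence, TLFPT $\subsetneq$ LFPT, since every TLFPT algorithm runs in time $O(n)+f(k)\le O(n)\cdot (f(k)+1)$ and so TLFPT $\subseteq$ LFPT trivially.

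For $\mathbf{A}\in$ LFPT, I will invoke the preceding lemma: a word-RAM accepts $\mathbf{A}$ in time $h(k)\cdot n$, which is exactly the LFPT form $f'(k)\cdot n$ with $f'=h$. I will briefly justify that lemma, since it is stated without proof. On input $(w,k)$, first compute $f(k)$ in time $h_0(k)$, using computability of $f$. Then run the universal RASP of \cite{CR73} on $P_w$ with input $w$ for at most $f(k)\cdot|w|$ simulated steps. The constant-factor overhead of the simulation gives $O(f(k)\cdot n)$ time, with the step counter held in $O(1)$ registers. Finally, accept iff the simulation halts within the budget without accepting. The total is $h_0(k)+O(f(k)\, n)\le h(k)\, n$ for a suitable computable $h$.

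Two points need care here. First, the additive $h_0(k)$ must be absorbed multiplicatively, which is fine because $n\ge 1$. Second, the word size must accommodate the counter and the registers of $P_w$. For the counter, appeal to \cref{lem:words-big-as-fn-of-k}: we may assume $f(k)<\log n$, so $f(k)\cdot n\le n^2$ fits in two words. For the registers of $P_w$, the RASP's registers need be no longer than those of the simulated machine, which is itself a word-RAM on input $w$.

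For $\mathbf{A}\notin$ TLFPT, suppose some word-RAM decides $\mathbf{A}$ in time $O(n)+g(k)$, that is, in time at most $g(k)+c\cdot n$ for some constant $c$. This contradicts the diagonalization lemma above, which forbids any such $g$ and $c$.

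The main obstacle is the hidden step inside that diagonalization lemma. It needs a $k$ and a padded code $w$ of $P'$ with $g(k)+c|w|<f(k)|w|$. This works because $f\in\omega(1)$: choose $k$ with $f(k)>c+1$, then pad $w$ until $|w|>g(k)$. Padding is available because every machine has infinitely many, arbitrarily long codes. Since the lemma is already established, I will simply cite it here.
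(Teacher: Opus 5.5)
Your proposal is correct and matches the paper, which also obtains the theorem just by combining the two preceding lemmas: $\mathbf{A}\in$ LFPT via the Cook--Reckhow universal RASP with an $f(k)\cdot n$ step counter (using $f(k)<\log n$), and $\mathbf{A}\notin$ TLFPT via the diagonalization lemma. Your added details are sound and spell out steps the paper leaves implicit: absorbing the $h_0(k)$ term because $n\ge 1$, and first choosing $k$ with $f(k)>c+1$ and then padding $w$ beyond $g(k)$ (as in the paper, the diagonal step tacitly requires $P_w$ to be run on the full instance $(w,k)$ rather than on $w$ alone).
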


\subparagraph{Linear Time Kernelization}

Kernelization has seen extensive use within parameterized algorithmics, and is central to many of our results. Here we define linear time kernelization and then 
argue that it is fundamentally coupled with the class of TLFPT problems.

We say that a parameterized problem $\Pi$ admits a \textit{linear time kernelization} if there exists a function $g$ and a self-reduction $\pi \colon (x,k) \mapsto (x',k')$, computable in time linear in $|x|+k$, such that: \begin{enumerate*}
    \item $(x,k)$ is a yes-instance if and only if $(x',k')$ is a yes-instance; and
    \item $|x'| \leq g(k)$ and $k' \le k$.
\end{enumerate*} 
If there is a self-reduction respecting properties (1) and (2) but with runtime $O(|x|+f(k))$ for some superlinear $f$ then we refer to ``TLFPT time kernelization'' or say that a kernel for $\Pi$ is computable in time $O(|x|+f(k))$. There is some variability in the definition of kernelization. By default (e.g., in \cite{downey2013fundamentals}), a kernelization's runtime is polynomial in $|x|+k$. In this sense, TLFPT time kernelizations are not strictly kernelizations themselves, though as we shall see in \cref{cor:kern-iff-tlfpt-reduction-iff-tlfpt}, any problem admitting a TLFPT time kernelization also admits a linear-time kernelization.

\begin{theoremrep}\label{thm:lintimekern-iff-tlfpt}
A nontrivial parameterized problem $\Pi$ admits a linear time kernelization if and only if it is in TLFPT.
\end{theoremrep}
\begin{proof}
    
    A kernelization of an instance $(x, k)$ of parameterized problem $\Pi$, linear in $|x|+k$, runs in time  $c|x|+ck$ for some constant $c$ and yields a kernel of size at most some function $g(k)$. Overall, the time complexity to solve $(x,k)$ is $c|x|+ck+g'(k)$ for some $g'$: applying the kernelization takes time $c|x|+ck$ and solving the resulting kernel takes time $f(|x'|,k')$ for some $f$.

    Conversely, we choose fixed yes- and no-instances to reduce to (they exist, since $\Pi$ is nontrivial). There must exist an algorithm $\mathcal A$ with complexity $c|x| + f(k)$ which witnesses TLFPT membership. We distinguish between two cases. 
    \begin{itemize}
        \item If $|x| > f(k)$ then the runtime of $\mathcal{A}((x,k))$ is $c|x| + f(k) \le (c+1)|x|$ which is linear in $|x|$ (and in $|x|+k$), so we can simply solve the instance and then map it to a constant-size yes- or no-instance, as appropriate. 
        \item Otherwise, $|x|\le f(k)$ and we map $(x,k)$ to itself, which satisfies the definition of a linear time kernelization (by letting $g=f$).
    \end{itemize}
    We note that this does not require that $f$ be computable: we can check whether $|x|> f(k)$ by simply running $\mathcal{A}((x,k))$ for $(c+1)|x|$ steps. If it has terminated then we are in the first case (and return a constant-size yes- or no-instance), and otherwise we are in the second case (and return $(x,k)$ directly).     
\end{proof}

\begin{corollary}\label{cor:kern-iff-tlfpt-reduction-iff-tlfpt}
    For any parameterized problem $\Pi$, the following properties are equivalent:
    \begin{enumerate}
        \item a kernel for $\Pi$ is computable in time $O(|x|+k)$, i.e. $\Pi$ admits a linear-time kernelization,
        \item a kernel for $\Pi$ is computable in time $O(|x|+f(k))$ for some function $f$, i.e. $\Pi$ admits a TLFPT time kernelization,
        \item $\Pi$ is decidable in time $O(|x|+f(k))$, i.e. $\Pi$ is in TLFPT.
    \end{enumerate}
\end{corollary}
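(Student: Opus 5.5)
We prove the cycle of implications (1) $\Rightarrow$ (2) $\Rightarrow$ (3) $\Rightarrow$ (1). The implication (1) $\Rightarrow$ (2) is immediate: a running time of $O(|x|+k)$ is a special case of $O(|x|+f(k))$ with $f$ the identity.

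For (2) $\Rightarrow$ (3), we run the TLFPT time kernelization on $(x,k)$, spending $O(|x|+f(k))$ time. This produces an equivalent instance $(x',k')$ with $|x'|\le g(k)$ and $k'\le k$. We then decide $(x',k')$ by any algorithm for $\Pi$. Since $(x',k')$ is a single instance whose size and parameter are bounded in terms of $k$, its cost is some function $f'(k)$. Here we implicitly assume $\Pi$ is decidable, which is the same tacit assumption made in the proof of \cref{thm:lintimekern-iff-tlfpt}. To make the bound uniform over all kernels of bounded size, we set $f'(k)$ to be the maximum cost of the decision procedure over all instances of size at most $g(k)$ with parameter at most $k$. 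This maximum is finite, so it is a well-defined function of $k$; it need not be computable, and the TLFPT definition does not require it to be. The total time is $O(|x|)+f(k)+f'(k)$, so $\Pi\in$ TLFPT.

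For (3) $\Rightarrow$ (1), we invoke \cref{thm:lintimekern-iff-tlfpt}, which gives a linear time kernelization whenever $\Pi$ is nontrivial. The main subtlety is the trivial case, where $\Pi$ has no yes-instances or no no-instances. In that case the map sending every instance to a fixed instance, for example $(\varepsilon,0)$, is a kernelization computable in constant time. It trivially preserves the (constant) answer, its output has size $0\le g(k)$, and its parameter satisfies $0\le k$.

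The only real obstacle is this degenerate case together with the non-uniformity of $f'$ in (2) $\Rightarrow$ (3). Both are handled as above, following the same reasoning used for \cref{thm:lintimekern-iff-tlfpt}, which likewise does not require $f$ to be computable.
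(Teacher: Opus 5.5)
Your proof is correct and follows the paper's route: the paper also calls (1)$\Rightarrow$(2)$\Rightarrow$(3) trivial and obtains (3)$\Rightarrow$(1) directly from \cref{thm:lintimekern-iff-tlfpt}. Both of your additions are needed and go beyond the paper's one-line proof: that theorem assumes nontriviality, so your separate treatment of trivial $\Pi$ is required for the statement to hold for ``any'' $\Pi$, and decidability is genuinely needed in (2)$\Rightarrow$(3) --- e.g.\ $\{(x,k) : k \in K\}$ with $K$ the halting set has the linear-time kernelization $(x,k)\mapsto(\varepsilon,k)$ but is not decidable at all.
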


\begin{proof}
    The implications (1) $\implies$ (2) $\implies$ (3) are trivial, and the implication (3) $\implies$ (1) follows directly from \cref{thm:lintimekern-iff-tlfpt}.
\end{proof}

\section{TLFPT via Degree-Based Kernelization: a Warmup}\label{sec:degree-based-kern}

In this section we describe a general approach to proving membership in TLFPT. In particular, we show that if a problem is \textit{degree reducible} and has \textit{bounded matching number} then we can construct a kernel for the problem in TLFPT time. 

We illustrate this method using \textsc{Vertex Cover}, which is both one of the classic NP-complete combinatorial problems \cite{Kar72}, and also (with parameter $k$) one of the prototypical FPT problems \cite{DF99Book}. 
A linear kernelization algorithm for this problem has been previously described \cite{buss1993nondeterminism} and is known to be linear \cite{downey2013fundamentals}. Here we use the problem to demonstrate our more general method.

\begin{toappendix}
\begin{framed}
    \noindent
    \textsc{Vertex Cover}\\
    \textbf{Input:} graph $G=(V,E)$, integer $k$.\\
    \textbf{Question:} does there exist a set $S\subseteq V$ with $|S|\le k$ such that for every $(u,v) \in E$, $\{u,v\} \cap S \ne \emptyset$?
\end{framed}
 \end{toappendix}

A common observation

is that any vertex with degree at least $\delta$ must be contained in any vertex cover of size at most $\delta - 1$. That is, if a graph $G$ contains some vertex cover $C$ of size at most $k$, then $C$ necessarily contains every vertex of $G$ with degree at least $k+1$. Thus, for any vertex $v$ with degree at least $k+1$, $(G,k)$ is a yes-instance of {\sc Vertex Cover} if and only if the same holds for $(G-v,k-1)$. Further, if $G$ contains some isolated vertex $v$, then $(G,k)$ is a yes-instance of {\sc Vertex Cover} if and only if the same holds for $(G-v,k)$.
We call a problem with these properties \textit{degree reducible} as we formally define below. 

\begin{defn}\label{def:d-degree-reducible}
Let $\Pi$ be some parameterized graph problem. We say that $\Pi$ is $d$-degree reducible for some function $d: \mathbb{N} \to \mathbb{N}$, if for any instance $(G,k)$ of $\Pi$ and any vertex $v$ of $G$ the following holds:
\begin{itemize}
    \item if $|N(v)| > d(k)$, then $(G,k) \in YES(\Pi)$ if and only if $(G-v,k-1) \in YES(\Pi)$; and
    \item if $|N(v)| =0$, $(G,k) \in YES(\Pi)$ if and only if $(G-v,k) \in YES(\Pi)$.
\end{itemize}
\end{defn}

As suggested above, \textsc{Vertex Cover} is $d$-degree reducible where $d$ is the identity function.

\begin{lemmarep}\label{lem:vc-deg}
    \textsc{Vertex Cover} is $d$-degree reducible when $d(k)=k$ for each $k \in \mathbb{N}$.
\end{lemmarep}
\begin{proof}
    We show $d$-degree reducibility of \textsc{Vertex Cover} by showing each condition in Definition~\ref{def:d-degree-reducible} holds.

    Let $(G,k)$ be an instance of \textsc{Vertex Cover}. Suppose $G$ contains some vertex $v$ with degree at least $k+1$. We claim that $G$ contains a vertex cover of size at most $k$, if and only if $G-v$ contains a vertex cover of size at most $k-1$.

    Suppose that there is some vertex cover $C \subseteq V$ of size at most $k$. We claim that $v \in C$. Otherwise, given every edge in $G$ has at least one endpoint in $C$, $N(v) \subset C$. This is a contradiction as $|N(v)| > |C|$. It follows that $C\setminus \{v\}$ is a vertex cover of $G-v$ and so $(G-v,k-1)$ must also be a yes-instance of \textsc{Vertex Cover}.

    Now let $(G-v,k-1)$ be a yes-instance of \textsc{Vertex Cover}, that is, there is some vertex cover $C \subseteq V \setminus \{v\}$ of size at most $k-1$. It follows that $C \cup \{v\}$ is a vertex cover of $G$ with size at most $k$.
    
    We now consider isolated vertices. Let $v$ be some vertex with degree $0$ in $G$. It follows that $v$ is not contained in any minimal vertex cover of $G$ and so $(G,k)$ is a yes-instance of \textsc{Vertex Cover} if and only if $(G-v,k)$ is.
\end{proof}

The following now holds by definition:

\begin{obs}\label{obs:remove-high-deg}
    Let $\Pi$ be some $d$-degree reducible problem and let $G$ be a graph. For any set $R$ of vertices with degree greater than $d(k)$ and any set $I$ of vertices which are isolated in $G-R$, $(G,k) \in YES(\Pi)$ if and only if $(G-(R \cup I),k-|R|) \in YES(\Pi)$.
\end{obs}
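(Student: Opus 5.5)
The plan is to unfold Definition~\ref{def:d-degree-reducible} one vertex at a time, exactly as the phrase ``holds by definition'' suggests. Fix an enumeration $R=\{r_1,\dots,r_{|R|}\}$ and $I=\{u_1,\dots,u_{|I|}\}$. Put $G_0=G$, $G_i=G-\{r_1,\dots,r_i\}$ and $k_i=k-i$. The argument then proceeds in two phases.

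\textbf{Phase one: removing $R$.} I will prove by induction on $i$ that $(G,k)\in YES(\Pi)$ if and only if $(G_i,k_i)\in YES(\Pi)$. The base case $i=0$ is trivial. For the step, I apply the first bullet of Definition~\ref{def:d-degree-reducible} to the instance $(G_i,k_i)$ and the vertex $r_{i+1}$. This yields $(G_i,k_i)\in YES(\Pi)\iff(G_{i+1},k_{i+1})\in YES(\Pi)$, provided $|N_{G_i}(r_{i+1})|>d(k_i)$. Chaining these equivalences gives $(G,k)\in YES(\Pi)\iff(G-R,k-|R|)\in YES(\Pi)$.

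\textbf{Phase two: removing $I$.} Set $H_0=G-R$ and $H_j=H_0-\{u_1,\dots,u_j\}$, keeping the parameter fixed at $k-|R|$. Each $u_{j+1}$ is isolated in $G-R$, so it is still isolated in the subgraph $H_j$. The second bullet of the definition therefore applies at every step, with no change of parameter. A second induction gives $(G-R,k-|R|)\in YES(\Pi)\iff(G-(R\cup I),k-|R|)\in YES(\Pi)$. Combining the two phases proves the observation.

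\textbf{Main obstacle.} The only nontrivial point is the degree hypothesis in the inductive step of phase one. We are told $|N_G(r_{i+1})|>d(k)$, but the definition must be invoked on $(G_i,k_i)$. Here $|N_{G_i}(r_{i+1})|\ge|N_G(r_{i+1})|-i$, and the threshold has become $d(k-i)$. So I must check that $|N_{G_i}(r_{i+1})|>d(k-i)$.

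My first attempt is to choose the enumeration of $R$ cleverly, for instance in nonincreasing order of degree into $V\setminus R$. I would then bound the degree loss by the number of neighbours inside $R$ already removed. If this can be reconciled with the drop in threshold from $d(k)$ to $d(k-i)$, the step goes through. For the motivating case $d(k)=k$ (Lemma~\ref{lem:vc-deg}), it does: $|N_G(r_{i+1})|-i>k-i=d(k-i)$, so the hypothesis of the definition holds at every step.

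For a general $d$, this comparison between the loss of degree and the change in $d$ is precisely the step I expect to need the most care. I would verify it directly against the wording of Definition~\ref{def:d-degree-reducible} before relying on the observation elsewhere.
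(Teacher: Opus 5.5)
Your plan is what the paper means by ``holds by definition'': peel off $R$ with the first bullet of Definition~\ref{def:d-degree-reducible}, then $I$ with the second. Phase two and your check for $d(k)=k$ are correct. However, the inductive step of Phase one is left open for general $d$, and the fix you sketch cannot work. Degrees only go down as you delete, while the threshold $d(k-i)$ may go up, so no ordering of $R$ helps:
\begin{itemize}
\item If $d(k-1)\ge \deg_G(r)$ for every $r\in R$ (allowed, since $d$ need not be monotone), the second deletion is never licensed.
\item Even for constant $d\equiv D$, take $R$ to be a clique whose members have degree exactly $D+1$. After deleting any one of them, every other member has degree $D=d(k-1)$, so the rule no longer applies.
\end{itemize}
As written, your argument proves the observation only when $d(k)-d(k-i)\ge i$ for $1\le i<|R|$; in that case $|N_{G_i}(r_{i+1})|\ge d(k)+1-i\ge d(k-i)+1$. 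The function $f'$ built in Lemma~\ref{lem:monotone-increasing-funcs-match} satisfies $f'(x+1)\ge f'(x)+1$, so under the paper's ``wlog monotone'' convention your computation goes through verbatim. But replacing $d$ by a larger $d'$ shrinks the admissible sets $R$. That gives a weaker statement (enough for the algorithm), not the observation for the original $d$.

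The missing idea for arbitrary $d$ is to use the definition backwards to inflate the degrees of $R$ before deleting anything. Take $r$ with $\deg_G(r)>d(k)$. Let $G^r$ be $G$ together with $N$ fresh vertices, each joined only to $r$. Then $G^r-r$ is $G-r$ plus $N$ isolated vertices, and
\[
(G,k)\in YES(\Pi)\iff(G-r,k-1)\in YES(\Pi)\iff(G^r-r,k-1)\in YES(\Pi)\iff(G^r,k)\in YES(\Pi).
\]
Here you use the first bullet, then the second bullet $N$ times, then the first bullet again, since $\deg_{G^r}(r)=\deg_G(r)+N>d(k)$. Do this for each $r\in R$ in turn; this is legitimate because degrees only grow, so each $r$ still exceeds $d(k)$. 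Choose $N>\max_{0\le j\le k}d(j)$. Now delete $r_1,\dots,r_{|R|}$ in any order: each $r_i$ keeps its $N$ private leaves, so its degree exceeds $d(k-i+1)$ when it is removed. Finally, the leaves are now isolated, and together with $I$ they are removed by the second bullet, giving $(G-(R\cup I),k-|R|)$. As in your argument, this assumes $|R|\le k$ so that all parameters stay in $\mathbb{N}$.
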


The first phase of our algorithm will, in time $O(|E|+|V|+d(k))$, compute the set $R$ of vertices of degree greater than $d(k)$ and the set $I$ which are isolated in $G-R$. We then build the graph $G-(R \cup I)$. By definition, this graph has degree at most $d(k)$ and contains no isolated vertices.

We then make use of a second property, both of \textsc{Vertex Cover}, as well as many other problems regarding matching. If some graph $G$ contains a matching of size $m$, then necessarily every vertex cover of $G$ has size at least $m$. It follows that if $G$ contains a matching of size $k+1$ then $(G,k)$ is a no-instance of \textsc{Vertex Cover}. More generally, we consider problems which become trivial if our graph contains a sufficiently large matching. We say that such problems have \textit{bounded matching number}, as defined below.

\begin{defn}\label{def:m-bounded-matching}
    Let $\Pi$ be some parameterized graph problem. We say that $\Pi$ has \emph{positive (respectively negative) $m$-bounded matching number} for some function $m: \mathbb{N} \to \mathbb{N}$, if, given any instance $(G,k)$ of $\Pi$, whenever $G$ contains a matching of size greater than $m(k)$, then $(G,k)$ is a positive (resp.~negative) instance of $\Pi$. Note that this trivially holds if no instance of $\Pi$ contains a matching of size greater than $m(k)$.
\end{defn}

Thus, \textsc{Vertex Cover} has negative $m$-bounded matching number:

\begin{lemmarep}\label{lem:vc-match}
    \textsc{Vertex Cover} has $m$-bounded matching number for $m(k)=k$.
\end{lemmarep}
\begin{proof}
    Let $(G,k)$ be an instance of {\sc Vertex Cover}. If $G$ contains a matching of size at least $k+1$, then any vertex cover of $G$ has size at least $k+1$. It follows that $(G,k)$ is a no-instance and {\sc Vertex Cover} has $m(k)$-bounded matching number, where $m$ is the identity function.
\end{proof}

Our later proofs shall make use of the functions $m$ and $d$ being monotone increasing without loss of generality:

\begin{lemma}\label{lem:monotone-increasing-funcs-match}
    If a problem $\Pi$ has $m$-bounded matching number (respectively is $d$-degree reducible) then there exists a function $m'$ (resp.~$d'$) which is monotone increasing such that $\Pi$ has $m'$-bounded matching number (resp.~is $d'$-degree reducible).
\end{lemma}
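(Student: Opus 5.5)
The natural choice is to take prefix maxima: set $m'(k) := \max_{j \le k} m(j)$ and $d'(k) := \max_{j\le k} d(j)$. Both are monotone non-decreasing and satisfy $m'\ge m$ and $d'\ge d$ pointwise. If strict increase is wanted, add $k$, i.e.\ use $\max_{j\le k} m(j)+k$. The remaining task is to check that each defining property survives when the bounding function is pointwise enlarged.

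For the matching number this check is immediate. Suppose $\Pi$ has positive (resp.\ negative) $m$-bounded matching number, and let $(G,k)$ be an instance in which $G$ has a matching of size greater than $m'(k)$. Since $m'(k)\ge m(k)$, that matching also has size greater than $m(k)$. By \Cref{def:m-bounded-matching} the instance is therefore positive (resp.\ negative). So $\Pi$ has $m'$-bounded matching number, because the property is upward closed in $m$.

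For degree reducibility, take an instance $(G,k)$ and a vertex $v$. The second condition of \Cref{def:d-degree-reducible}, about isolated vertices, does not mention $d$, so it transfers unchanged. For the first condition, suppose $|N(v)|>d'(k)$. Then $|N(v)|>d(k)$, so the equivalence $(G,k)\in YES(\Pi) \iff (G-v,k-1)\in YES(\Pi)$ holds by $d$-degree reducibility. Hence $\Pi$ is $d'$-degree reducible.

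I do not expect a real obstacle. The one point to verify is that both definitions only ever use $m$ and $d$ as thresholds of the form ``greater than $m(k)$'' or ``greater than $d(k)$''. Raising a threshold shrinks the set of instances to which the guarantee applies, so it never invalidates the guarantee. Computability of $m'$ and $d'$ is not required by the statement, and in any case the prefix maximum of a computable function is computable.
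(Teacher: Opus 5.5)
Your proposal is correct and follows essentially the same route as the paper. The paper sets $f'(x)=\max\{f(i)+x : 1\le i\le x\}$, which is exactly your prefix maximum plus $k$, and then argues, as you do, that $m$ and $d$ appear in the definitions only as thresholds, so enlarging them pointwise preserves both properties; your clause-by-clause check is just a more explicit version of that argument.
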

\begin{proof}
    Let $f:\mathbb N \to \mathbb N$ be any function. Define $f'(x) = \max(\{f(i) + x: 1 \le i \le x\})$. It is clear that $f'$ is monotone increasing, and additionally that if a problem has $f$-bounded matching number (respectively is $f$-degree reducible) then it also has $f'$-bounded matching number (resp.~is $f'$-degree reducible).
    
    This follows from the fact that $f'(x) \ge f(x)$ for all $x$, and that the usage of the function $f$ in the respective definitions is to give a bound beyond which distinction is irrelevant (i.e. graphs containing matchings larger than $m(k)$ are either all YES-instances or all NO-instances, vertices of degree greater than $d(k)$ can all be reduced away).
\end{proof}

We now leverage the properties of $d$-degree reducibility and $m$-bounded matching number to kernelize our problems.

\begin{observation}\label{obs:VC-kernel-size}
    Let $G$ be a graph with maximum degree $\Delta$ and no isolated vertices and let $M$ be some maximum matching, then $G$ has size at most $2|V(M)|\Delta$.
\end{observation}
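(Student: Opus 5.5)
Since $M$ is a maximum matching, $V(M)$ is a vertex cover of $G$: an edge with both endpoints outside $V(M)$ could be added to $M$, contradicting maximality. (Maximality with respect to inclusion already suffices.) I will use this to bound both the number of edges and the number of vertices of $G$ in terms of $|V(M)|$ and $\Delta$. Here ``size'' means $|V|+|E|$, as in the definition of TLFPT.

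\emph{Edges.} Every edge has at least one endpoint in $V(M)$. Charging each edge to such an endpoint gives $|E| \le \sum_{v\in V(M)} \deg(v) \le |V(M)|\Delta$.

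\emph{Vertices.} $V$ splits into $V(M)$ and $V\setminus V(M)$. Since $G$ has no isolated vertices, each $u\in V\setminus V(M)$ has a neighbour. That neighbour must lie in $V(M)$, because $V\setminus V(M)$ is independent ($V(M)$ being a vertex cover). Mapping each such $u$ to one of its incident edges gives an injection into the edges incident to $V(M)$, since distinct vertices outside $V(M)$ cannot share an edge. Hence $|V\setminus V(M)| \le |V(M)|\Delta$, and so $|V| \le |V(M)| + |V(M)|\Delta$.

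\emph{Combining.} This yields $|V|+|E| \le |V(M)|(2\Delta+1)$, which is slightly more than the stated $2|V(M)|\Delta$. The main obstacle is shaving this additive $|V(M)|$. I would do it with a sharper joint count. Each $v\in V(M)$ contributes at most $\deg(v)\le\Delta$ edges, and one of these is its own matching edge. That matching edge is shared with its partner, so it is counted twice in the sum, and it accounts for no outside vertex. So the naive bounds double-count at least $|M| = |V(M)|/2$ edges, which can be subtracted from the edge bound. The outside vertices are charged only to non-matching edges, so their bound drops by $|V(M)|$ as well.

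Carrying these corrections through gives $|E| \le |V(M)|\Delta - |V(M)|/2$ and $|V\setminus V(M)| \le |V(M)|(\Delta-1)$. Adding $|V(M)|$ for the matched vertices gives a total of at most $2|V(M)|\Delta - |V(M)|/2 \le 2|V(M)|\Delta$, which is the claimed bound. If this careful accounting turned out fiddly, I would note that the looser $O(|V(M)|\Delta)$ bound is all that the subsequent kernelization argument needs.
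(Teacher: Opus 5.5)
Your argument is correct. Its core matches the paper's reasoning, which is given implicitly in the proof of Theorem~\ref{thm:deg-alg}: $V(M)$ is a vertex cover, hence a dominating set since there are no isolated vertices, and each of its vertices accounts for at most $\Delta$ further vertices. The paper only counts vertices, getting $|V| \le |V(M)|(\Delta+1) \le 2|V(M)|\Delta$ at once since $\Delta \ge 1$. Your sharper bookkeeping of the matching edges is needed only because you read ``size'' as $|V|+|E|$, and it correctly yields that stronger bound of $2|V(M)|\Delta - |V(M)|/2$.
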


We highlight that for any graph $G$ we can find some arbitrary maximal matching $M$, in time $O(|V|)$ using depth first search \cite{CLRS}. This and reduction of high degree vertices allows us to obtain a kernel of size $2k^2$ for {\sc Vertex cover} in time $O(n)+f(k)$. We first summarize our algorithm for {\sc Vertex cover}, before giving the more general algorithm in full detail alongside a formal analysis of runtime.

\pagebreak

\textbf{Algorithm} VertexCover$(G,k)$
\begin{enumerate}
    \item Find the set $R$ of vertices each with degree greater than $k$ in $G$.
    \item Find the set $I$ of vertices with degree $0$ in $G-R$.
    \item Construct the graph $\hat{G} = G-(R \cup I)$. (note that $\hat{G}$ may be disconnected)
    \item Let $\hat{k} = k - |I|$.
    \item Find some arbitrary maximal matching $M$ of $\hat{G}$.
    \item If $|M| > k$, then return $(M_{k+1}, \hat{k})$ where $M_{k+1}$ is that graph consisting exactly of a matching of size $k+1$.
    \item Else, return $(\hat{G}, \hat{k})$.
\end{enumerate}

\begin{algorithm}[!ht]
    \caption{Solving $d$-degree reducible and $m$-matching bounded graph problems.}\label{alg:d-deg-m-match}
    \begin{algorithmic}[1]
        \Require $G=([n],E)$ is a graph on $n$ vertices; integer $k$; computable functions $m$ and $d$.
        \State Compute \texttt{deg} the array of length $n$ with \texttt{deg}[$u$] the degree of vertex $u$. \label{alg-line:deg}
        
        \State Compute $d(k)$. 
        \State Let $G'=(V',E')$ be the graph induced by $V \setminus \{u : \texttt{deg}[u] > d(k)\}$ in $G$ and let $k'=k - |\{u : \texttt{deg}[u] > d(k)\}|$. \label{alg-line:induced-subgraph-round-one}
        \State Recompute \texttt{deg} for the graph $G'$. \label{alg-line:recompute-deg}

        \State Let $G''=(V'',E'')$ be the graph induced by $V' \setminus \{u : \texttt{deg}[u] = 0\}$ in $G$ and let $k''=k'$. \label{alg-line:induced-subgraph-round-two}
        
        \State By depth-first search on $G''$, and greedily build a maximal matching $M \subseteq E(G'')$. \label{alg-line:dfs-on-induced-subgraph}
        \If{$|M|>m(k'')$} 
        \Return $(m(k'')+1)P_2$ (i.e. the matching on $m(k''+1)$ edges)\label{alg-line:case-big-matching}
        \Else 
        ~\Return $G'', k''$ \label{alg-line:case-small-matching}
        \EndIf
    \end{algorithmic}
\end{algorithm}

\begin{theoremrep}\label{thm:deg-alg}
    Given an instance $(G,k)$ of a $d$-degree reducible problem $\Pi$ which has $m$-bounded matching number, there is some algorithm that  runs in $O(|V|+|E|+d(k))$ time and returns a kernel of $\Pi$ on at most $2m(k)\cdot (d(k)+1)$ vertices.
\end{theoremrep}

\begin{proof}
    In particular, Algorithm~\ref{alg:d-deg-m-match} has these properties.
    The algorithm begins (line \ref{alg-line:deg}) by initializing \texttt{deg} as a zero-array of length $|V|$ and then iterating of $E$ and incrementing entries of \texttt{deg} accordingly for each edge. This requires time $O(|E|+|V|)$  (likewise for line \ref{alg-line:recompute-deg} later on). The algorithm then computes $d(k)$.

    In line~\ref{alg-line:induced-subgraph-round-one} (resp.~\ref{alg-line:induced-subgraph-round-two}), the algorithm simply iterates over vertices and edges and uses the lookup table \texttt{deg} (resp.~\texttt{deg}) to decide whether to include them in the newly constructed instance $(G',k')$ (resp.~$(G'',k'')$. This runs in time $O(|V|+|E|)$ and can be implemented in-place, so that the memory used by $G$ is overwritten to hold $G'$ and then $G''$.
    
    Finally, line \ref{alg-line:dfs-on-induced-subgraph} runs in $O(|V|+|E|)$ time as detailed in \cite{CLRS}, (including if $G'$ is disconnected). The algorithm then returns one of two possible instances. Therefore, the algorithm requires a total of $O(|V|+|E|+d(k))$ time.

    Note that applying \cref{def:d-degree-reducible} directly entails that $(G'',k'')$ is a yes-instance iff $(G,k)$ was a yes-instance.
    
    Line \ref{alg-line:case-big-matching} is correct by definition of $m$-matching bounded (Definition~\ref{def:m-bounded-matching}). This follows from the fact that if $|M| > m(k'')$, the graph returned has a matching of size greater than $m(k'')$. The returned graph is on $2+2m(k'')$ vertices and $1+m(k'')$ edges.
    
    Otherwise, $|M| \le m(k')$ and line \ref{alg-line:case-small-matching} returns $(G'',k'')$ with the property that $G''$ is a graph on at most $2m(k)\cdot (d(k)+1)$ vertices. $G''$ admits $M$ as as a maximal matching. We denote $M_V$ the set of vertices incident to an edge of $M$ (so $|M_V| \le 2m(k'')$). Note that since there are no vertices of degree $0$ in $G''$ and $G''$ has maximum degree at most $d(k)$, we obtain that $M_V$ is a dominating set of maximum degree $d(k)$ and thus $G''$ is a graph on at most $2m(k'')\cdot (d(k)+1)$ vertices.
    Since $k'' = k' \le k$ and $m$ and $d$ are monotone increasing, we obtain that $|V(G'')| \le 2m(k) \cdot (d(k)+1)$.
\end{proof}

By Lemmas~\ref{lem:vc-deg} and~\ref{lem:vc-match},

\textsc{Vertex Cover} is $d$-degree bounded and has $m$-bounded matching number, this allows us to apply Theorem~\ref{thm:deg-alg}.
    
\begin{corollary}\label{cor:vc-deg-match}
    \textsc{Vertex Cover} is in TLFPT parameterized by $k$.
\end{corollary}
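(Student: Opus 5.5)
The plan is to chain \cref{thm:deg-alg} with \cref{cor:kern-iff-tlfpt-reduction-iff-tlfpt}. By \cref{lem:vc-deg}, \textsc{Vertex Cover} is $d$-degree reducible with $d(k)=k$. By \cref{lem:vc-match}, it has (negative) $m$-bounded matching number with $m(k)=k$. Both functions are already monotone increasing, so \cref{lem:monotone-increasing-funcs-match} is not even needed. \Cref{thm:deg-alg} then yields an algorithm running in time $O(|V|+|E|+k)$. This is $O(n+k)$ for input size $n=|V|+|E|$. It outputs an equivalent instance $(G'',k'')$ with $k''\le k$ and at most $2k(k+1)$ vertices. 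Since $G''$ has maximum degree at most $k$, it also has $O(k^3)$ edges, so its encoding size is bounded by a function of $k$.

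Two checks remain before invoking the kernel equivalence. First, the returned parameter must satisfy $k''\le k$. It is obtained by subtracting the number of removed high-degree vertices, so this holds. If $k''$ becomes negative, we instead output a fixed trivial no-instance. Second, in the large-matching branch the output $(k+1)P_2$ has a matching larger than $k''$ and is therefore a no-instance, consistent with negative matching-boundedness. Given these checks, the procedure is a linear-time kernelization in the sense defined before \cref{thm:lintimekern-iff-tlfpt}. Implication (1)$\implies$(3) of \cref{cor:kern-iff-tlfpt-reduction-iff-tlfpt} then places \textsc{Vertex Cover} in TLFPT.

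To be concrete about (1)$\implies$(3), we solve the kernel by brute force in time $f(k)$: try all subsets of size at most $k''$ of the $O(k^2)$ vertices and check each against the $O(k^3)$ edges. Any branching algorithm would also do, for instance the $O(1.2738^k)$ bound of \cite{CKX10}. The total running time is $O(|V|+|E|)+f(k)$.

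The main obstacle is the bookkeeping in \cref{thm:deg-alg} on the word-RAM. We must confirm that computing $d(k)=k$ and the degree array fits in $O(n)$ time, and that vertex removal correctly updates the parameter. In particular, isolated vertices must not decrease $k$; the summary algorithm writes $\hat k = k-|I|$, which should read $k-|R|$. We will follow Algorithm~\ref{alg:d-deg-m-match}, which handles this correctly.
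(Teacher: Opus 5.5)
Your proposal is correct and follows the paper's route: \cref{lem:vc-deg} and \cref{lem:vc-match} feed into \cref{thm:deg-alg}, and the resulting linear-time kernel gives TLFPT membership via \cref{cor:kern-iff-tlfpt-reduction-iff-tlfpt}. Your extra checks are sound details the paper leaves implicit: outputting a trivial no-instance when $|R|>k$, and noting that the summary algorithm's $\hat{k}=k-|I|$ should read $k-|R|$, as in Algorithm~\ref{alg:d-deg-m-match}.
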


\section{TLFPT kernelization via Matchings and Twins: the Census Method}\label{sec: Matchings and Twins}

We now describe a second method for linear time kernelization. 

To apply this, the problem must satisfy two properties. The first is \textit{bounded matching number} (c.f.~Definition~\ref{def:m-bounded-matching}) which implies that for any non-trivial instance we can compute a vertex cover of bounded size in linear time. We then construct a \textit{census}, with respect to this vertex cover, which characterizes the remaining \textit{cloud} of vertices according to their neighborhood, thus describing an equivalence relation over the vertices.

Two vertices are \textit{open twins} if they are non-adjacent and have the same neighborhood. That is, a set of open twins must be an independent set. As each equivalence class corresponds to a set of open twins, the second property that we use, \textit{twin reducibility}, allows us to bound the size of each equivalence class and yields a problem kernel.

 To illustrate this method, we show that \textsc{Dominating Set} is in TLFPT when paramterized by vertex cover number. Note that \textsc{Dominating Set} is shown to be W[2]-hard with respect to the size the dominating set~\cite{downey1995fixed}; we instead parameterize by the vertex cover number.
\begin{toappendix}
    \textsc{Dominating Set} is defined as follows.

\begin{framed}
    \noindent
    \textsc{Dominating Set}\\
    \textbf{Input:} graph $G=(V,E)$, integer $c$.\\
    \textbf{Question:} is there a set $S\subseteq V$ with $|S|\le c$ such that for each $u \in V$, $N_G[u]\cap S \ne \emptyset$?
\end{framed}

\end{toappendix}

We use \textsc{DomSetVC} to refer to the problem \textsc{Dominating Set} parameterized by vertex cover number $k$ of the input graph $G$. That is, given an integer $c$ and a graph $G$ with vertex cover number $k$, we ask if there is a dominating set $S$ of $G$ of size at most $c$.

Our kernelization algorithm for \textsc{DomSetVC} begins by, in linear time, removing any isolated vertices in the graph and decrementing $c$ for each of these. Observe that, following this, any vertex cover of cardinality $c$ in $G$ is also a dominating set. This allows us to assume that $c<k$, otherwise we have a trivial yes-instance. This yields the following lemma.

\begin{lemmarep}\label{lem:ds-bounded-matching}
    \textsc{DomSetVC} has negative $m$-bounded matching number where $m(k)=k$.
\end{lemmarep}
\begin{proof}
    Note that the set of yes-instances with maximal matching greater than $k$ is empty.
\end{proof}

\begin{toappendix}
Following from above, we can now assume that we have some maximal matching $M$ of size at most $m(k)$, that is $|V(M)| \leq 2m(k)$. Observe that, since this matching is maximal, every edge of $G$ has at least one endpoint in this matching, hence those vertices not contained in $M$ must form an independent set. We call these vertices the \textit{independence cloud}, $I$.
We now describe how those vertices of $I$ can be represented using a \textit{census vector}. In defining this, we assign every vertex in $I$ a label according to its neighbors in $M$. This is a bit-string of length $2m(k)$ (a bit for each vertex in the maximal matching), such that the $i$th bit of this label is $1$ if the vertex in question is adjacent to vertex $i$ in $M$.

\begin{defn}\label{def:open-twins}
    A subset $S$ of vertices of some graph $G$ is a set of \underline{open twins} if for all $x$ and $y$ in $S$, $N(x) = N(y)$ in $G$. 
\end{defn}

\begin{obs}\label{obs: open twins are independent}
    Any set $S$ of open twins is an independent set. 
\end{obs}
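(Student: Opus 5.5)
The claim is that any set $S$ of open twins is independent. I will argue by contradiction using only \cref{def:open-twins} and the standing convention that graphs are simple and loopless.

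Suppose $x,y\in S$ are adjacent. Then $x\ne y$, since the graph is loopless. Adjacency gives $y\in N(x)$. Because $x$ and $y$ are open twins, $N(x)=N(y)$, so $y\in N(y)$. That says $y$ is adjacent to itself, i.e.\ $y$ carries a loop, contradicting looplessness. Hence no two vertices of $S$ are adjacent, and $S$ is independent.

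There is no real obstacle. The only point to make explicit is that $N(\cdot)$ denotes the open neighbourhood, which excludes the vertex itself. This is exactly what the word ``open'' in \cref{def:open-twins} records, and it is why the argument would fail for closed twins.

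The consequence the paper needs in the census argument follows at once. Each census equivalence class on the independence cloud $I$ groups vertices with identical neighbourhoods. Such a class is therefore a set of open twins, and so is independent. This is consistent with $I$ already being independent because $M$ is a maximal matching.
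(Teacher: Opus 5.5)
Your proof is correct and is the immediate argument the paper has in mind. The paper states this as an observation with no written proof, and its informal definition of open twins already builds in non-adjacency; your contradiction, that an edge $xy$ inside $S$ would put $y\in N(x)=N(y)$ and so give a loop, is exactly why the observation follows from the formal definition.
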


Notice that any two vertices with the same label in the census vector relative to some maximal matching are open twins.

We note that each label forms an equivalence class of open twins in the independence cloud, and there are at most $2^{m(k)}$ such labels. We create a \textit{census vector} which, for each of the equivalence classes, counts how many of the vertices in the independence cloud has the corresponding label.
 
To ensure that the algorithm runs in TLFPT time overall, we truncate the count by a function of $k$. That is, we take the minimum of some $\tau(k)$ and the number of vertices with that label. We refer to problems where we can truncate the count in the census vector by some function $\tau$, as $\tau$-twin reducible.
\end{toappendix}

The second property that we require for the census method ensures that we only have to consider a bounded number of vertices with the same neighborhood. For dominating set, we need only consider at most $c$ vertices with the same neighborhood. This follows from the fact that no dominating set of size $c$ can be a strict subset of a set of at least $c$ open twins; either the remaining vertices are undominated, or all are dominated by a shared neighbor. With our preprocessing of isolated vertices, this implies there is always at least one twin in the set which is not in a dominating set of size $c$. Ensuring that each such vertex is dominated will also dominate all vertices not included in our bounded census.

\begin{defn}\label{def:tau-twin-reducible}
    Let $\Pi$ be a some parameterized graph problem. We say that $\Pi$ is $\tau$\emph{-twin reducible} for some function $\tau: \mathbb{N} \to \mathbb{N}$ if the sets $YES(\Pi_k)$ and $NO(\Pi_k)$ are closed under the deletion of any vertex with at least $\tau(k)$ open twins.
\end{defn}

\begin{toappendix}

\begin{lemma}\label{lem:monotone-increasing-funcs-tau}
    If a problem $\Pi$ is $\tau$-twin reducible then there exists a function $\tau'$ which is monotone increasing such that $\Pi$ is $\tau'$-twin reducible.
\end{lemma}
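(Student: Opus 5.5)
The plan mirrors the proof of \cref{lem:monotone-increasing-funcs-match}: replace $\tau$ by a pointwise larger, monotone increasing function, and check that enlarging the threshold preserves twin reducibility. Concretely, set
\[
\tau'(x) = x + \max\{\tau(i) : 0 \le i \le x\}.
\]
Then $\tau'(x) \ge \tau(x)$ for every $x$. Moreover $\tau'(x+1) \ge (x+1) + \max_{i\le x}\tau(i) > \tau'(x)$, so $\tau'$ is strictly increasing.

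Next, verify that $\Pi$ is $\tau'$-twin reducible, directly from \cref{def:tau-twin-reducible}. Fix $k$ and an instance $(G,k)$. Let $v$ be a vertex of $G$ with at least $\tau'(k)$ open twins. Since $\tau'(k) \ge \tau(k)$, the vertex $v$ also has at least $\tau(k)$ open twins. Because $\Pi$ is $\tau$-twin reducible, deleting $v$ keeps a yes-instance of $\Pi_k$ in $YES(\Pi_k)$ and a no-instance in $NO(\Pi_k)$. So both sets are closed under deleting any vertex with at least $\tau'(k)$ open twins, which is exactly $\tau'$-twin reducibility.

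The only step that needs any care is the direction of the inequality. The twin-reducibility property is upward closed in the threshold: any vertex that qualifies for deletion under the larger threshold $\tau'(k)$ already qualifies under $\tau(k)$. Once that is noted, the rest is immediate.

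Computability is not needed for the statement. Even so, if $\tau$ is computable then so is $\tau'$, which is what later algorithmic uses will want.
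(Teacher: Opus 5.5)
Your proposal is correct and matches the paper's proof: the paper uses the same construction $\tau'(x)=\max\{\tau(i)+x : 1\le i\le x\}$ and the same observation that raising the threshold pointwise preserves twin reducibility. Including $i=0$ in your maximum is a small improvement, since it avoids the paper's empty maximum at $x=0$.
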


\begin{proof}
    Let $f:\mathbb N \to \mathbb N$ be any function. Define $f'(x) = \max \{f(i) + x: 1 \le i \le x\}$. It is clear that $f'$ is monotone increasing, and additionally that if a problem is $f$-twin reducible then it is also $f'$-twin reducible (since $f$ is  pointwise at most $f'$).

\end{proof}

\end{toappendix}

\begin{lemmarep}\label{lem:ds-tau-twin}
    \textsc{DomSetVC} is $\tau$-twin reducible for $\tau(k)=k+1$.
\end{lemmarep}
\begin{proof}
    Suppose $S$ is a set of open twins. We will show that, if $|S| > k$, then for any $v \in S$ we have that $(G,c,k)$ is a yes-instance if and only if $(G-v,c,k)$ is.

    We begin by assuming that $D$ is a dominating set of $G$. If $S \cap D = \emptyset$, then $D$ dominates $G-v$. Otherwise, consider some $v \in D \cap S$. Since $|S|> k \geq |D|$, there must be some vertex in $S\setminus D$; call it $u$. Then, since $u$ and $v$ are open twins, $(D\setminus \{v\})\cup \{u\}$ dominates $G-v$.

    Now consider a dominating set $D'$ of $G-v$. Note that, since $|S|> k \geq |D'|$, there must be a vertex $u$ in $S\setminus D$ which is dominated by $D'$. Since $v$ shares a neighborhood with $u$, it must be dominated by the same vertex in $D'$ in $G$. Therefore, $D'$ is a dominating set of size $k$ in $G$, and we have a yes-instance.
\end{proof}

As our problem has matching number at most $m(k)$, it contains a vertex cover with size at most $2m(k)$. It follows that the remaining vertices of the graph can be partitioned into $2^{2m(k)}$ equivalence classes. From twin reducibility, we need only consider $\tau(k)$ vertices from each class, thus yielding a kernel with size only a function of $k$. This gives us the following theorem -- the main algorithmic result of this section.

\begin{theoremrep}[The Census Method]\label{thm:matchTwin}
    For any parameterised graph problem $\Pi$, if there exist functions $\tau: \mathbb{N} \to \mathbb{N}$ and $m: \mathbb{N} \to \mathbb{N}$ such that $\Pi$ is $\tau$-twin reducible and has $m$-bounded matching number, then a kernel for $\Pi$ is computable in time $O(|E(G)|+f(k))$.
\end{theoremrep}
\begin{proof}
    Let $\Pi_k$ be some parameterized graph problem which is both $\tau$-twin reducible and has $m$-bounded matching number. For any instance of $\Pi_k$ with corresponding graph $G = (V,E)$ we describe the following kernelization algorithm running in time $O(|E|) +f(k)$ for some fixed function $f$. Let $M \subseteq E$ be some arbitrary maximal matching of $G$ (note that such a matching can be found in time $O(|V|)$ via depth first search \cite{CLRS}). As $\Pi_k$ has $m$-bounded matching number, if $|M| > m(k)$, then 
    we return the graph $K_{2m(k)+2}$.
        
    Assume now that $|M| \leq m(k)$. We now describe the \textit{matching census} representation of a graph with respect to $M$. Note that this will allow us to find and remove twin vertices in the necessary time bounds. Let $M_V = V(M)$. As $M$ is a maximal matching, $M_V$ is a vertex cover and $G$ can be partitioned into the \textit{matching} graph $G[M_V]$ (which admits $M$ as a perfect matching) and the graph $G[V \setminus M_V]$ consisting of only isolated vertices. From this, we describe the following encoding of $G$. We first fix some arbitrary bijection $f: 2^{|M_V|} \to 2^{M_V}$. Let $\vec{c}$ be that vector of length $2^{|M_V|}$ such that, for every $i \in [2^{|M_V|}]$, there are $\vec{c}[i]$ vertices in $V \setminus M_V$ with neighborhood $f(i)$. We call the tuple $(G[M_V],\vec{c})$ the matching census representation of $G$ with respect to $M$.

    For any $t \geq 1$, we call the array $\vec{c'}$ such that $\vec{c'}[i] = min(t, \vec{c}[i])$ for every $i \in [2^{|M_V|}]$, the $t$-pruning of $\vec{c}$. Further, we call the graph corresponding to $(G[M_V], \vec{c}')$ the $t$-pruning of $G$, with respect to $M$. As $\Pi_k$ is $\tau$-twin reducible, the $\tau(k)$-pruning of $G$ is a kernel for $G$.
    
    We now claim that given such a graph $G$ and maximal matching $M \subseteq G$ with size at most $m(k)$,  we can return the matching census representation of the $\tau(k)$-pruning of $G$ with respect to $M$, in time $O(|E|) + f(k)$ for some fixed function $f$. As $|M| \leq 2m(k)$, the $\tau(k)$-pruning of $G$ with respect to $M$, has size at most $2m+ \tau(k) \cdot 2^{2m(k)}$. Hence, the problem $\Pi_k$ admits a linear time kernelization.

    Algorithm~\ref{alg:census} computes the census vector of the $\tau(k)$-pruning of $G$ with respect to a some maximal matching $M$, where $s = 2|M|$. By Lemma~\ref{lem:words-big-as-fn-of-k}, we can assume that $2^k+1 \leq \log n$, it follows that the word operations in Algorithm~\ref{alg:census} take constant time. We fix an arbitrary bijection $V \to [n]$ such that the vertices in $[s]$ correspond to those contained in $M$. Since $s \le 2m(k)$, the arrays $\vec{c}$ and $\vec{f}$ can be constructed in time $f(k)$, for some fixed function $f$. Moreover, in lines $3$ though $7$, each edge of $G$ is processed at most once. It follows that Algorithm~\ref{alg:census} runs in time $O(|E|) +f(k)$. 
    
    \begin{algorithm}
    \caption{Solving $\tau$-twin reducible and $m$-matching bounded graph problems.}\label{alg:census}
    \begin{algorithmic}[1]
    \Require $G=([n],E)$ is a graph on $n$ vertices; $[s]$ is a vertex cover of $G$; and $\tau(k)$ is some constant.
    \State Allocate $\vec{c}$ to be an array of zeroes of length $2^{s}$.
    \State Let $\vec{f}$ be an array of length $s$ such that $\vec{f}[i]=2^i$ for each $0 \le i < s$. 
    \ForAll{$v \in \{s,s+1,\ldots,n-1\}$}
    \State Let $p=0$. \Comment{Stores target index.}
    \ForAll{$u \in N(v)$}
    \State Let $p+=\vec{f}[u]$.
    \EndFor
    \State Let $\vec{c}\,[p] = min(\tau(k), \vec{c}\,[p]+1)$ 
    \EndFor
    \State \Return $\vec{c}$.
    \end{algorithmic}
    \end{algorithm}
\end{proof}

\begin{toappendix}
 Combining \cref{lem:ds-bounded-matching,lem:ds-tau-twin} and \cref{thm:matchTwin} gives us the following result.
 \begin{thm}\label{thm:ds-tlfpt}
     \textsc{DomSetVC} admits a TLFPT time kernelization.
 \end{thm}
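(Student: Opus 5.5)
The statement is a direct instance of the Census Method, so the plan is to verify its two hypotheses and apply \cref{thm:matchTwin}.

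\textbf{Hypotheses.} By \cref{lem:ds-bounded-matching}, \textsc{DomSetVC} has negative $m$-bounded matching number with $m(k)=k$. By \cref{lem:ds-tau-twin}, it is $\tau$-twin reducible with $\tau(k)=k+1$.

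\textbf{Preprocessing.} Both lemmas implicitly assume that isolated vertices have already been removed, each removal decrementing $c$. I would make this an explicit first step. Computing the degree array and then deleting degree-zero vertices takes $O(|V|+|E|)$ time, exactly as in lines~\ref{alg-line:deg}--\ref{alg-line:induced-subgraph-round-two} of \cref{alg:d-deg-m-match}. The updated budget is $c' = c - |I|$, where $I$ is the set of isolated vertices. If $c'<0$, we output a fixed no-instance.

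\textbf{Main step.} Apply \cref{thm:matchTwin} to the preprocessed instance $(G',c',k)$. This yields, in time $O(|E(G')|+f(k))$, an equivalent instance of size bounded by a function of $k$. Concretely:
\begin{itemize}
\item Find a maximal matching $M$.
\item If $|M|>k$, return a trivial no-instance. This is valid because the vertex cover number is at most $k$, so every matching has size at most $k$.
\item Otherwise, build the $(k+1)$-pruned census with respect to $V(M)$. This produces at most $2k+(k+1)2^{2k}$ vertices.
\end{itemize}
Together with the preprocessing, the total time is $O(|V|+|E|)+f(k)$, which is a TLFPT-time kernelization. If an $O(|x|+k)$ kernelization is also wanted, \cref{cor:kern-iff-tlfpt-reduction-iff-tlfpt} upgrades it.

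\textbf{Main obstacle.} The subtle point is that pruning twins must not create new isolated vertices or otherwise invalidate the preprocessing. Each pruned vertex keeps a twin with the same nonempty neighbourhood, so no vertex in the matched part loses all of its neighbours. I would check this, and also check that \cref{lem:ds-tau-twin}'s exchange argument needs $|S|>c$. Since $c\le k$ whenever the instance is nontrivial (after preprocessing a vertex cover dominates), the threshold $k+1$ suffices. If $c\ge k$, we return a trivial yes-instance directly.
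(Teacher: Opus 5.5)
Your proposal is correct and is essentially the paper's proof, which simply combines \cref{lem:ds-bounded-matching,lem:ds-tau-twin} with the Census Method (\cref{thm:matchTwin}), after the isolated-vertex removal and the reduction to $c<k$ that the paper describes in the main text. Your explicit handling of that preprocessing makes precise what the paper leaves implicit rather than changing the route. Isolated vertices are mutual open twins, so \textsc{DomSetVC} is only $\tau$-twin reducible on instances without them, and your check that pruning creates no new isolated vertices is what justifies applying \cref{thm:matchTwin} to the preprocessed instance.
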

     
\end{toappendix}

In the following subsections we will apply \cref{lem:ds-bounded-matching,lem:ds-tau-twin} and \cref{thm:matchTwin} to a variety of parameterized graph problems, thus showing they are also in TLFPT.

\begin{toappendix}
\subsection{Minimum Maximal Matching}\label{sec:min-max-match}
We continue to show that this method gives inclusion in TLFPT for some other problems. We begin with \textsc{Minimum Maximal Matching}, which is NP-hard~\cite{yannakakis1980edge}.

\begin{framed}
    \noindent
    \textsc{Minimum Maximal Matching}\\
    \textbf{Input:} graph $G=(V,E)$, integer $k$.\\
    \textbf{Question:} does there exist a maximal matching $M\subseteq E$ with size at most $k$?
\end{framed}

\begin{theorem}\label{thm: min max match}
    {\sc Minimum Maximal Matching} admits a TLFPT time kernelization.
\end{theorem}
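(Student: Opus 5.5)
The plan is to apply the Census Method (\cref{thm:matchTwin}). This needs two facts about \textsc{Minimum Maximal Matching}: it has negative $m$-bounded matching number for some $m$, and it is $\tau$-twin reducible for some $\tau$. The kernel computable in time $O(|E(G)|+f(k))$ then gives TLFPT membership by \cref{cor:kern-iff-tlfpt-reduction-iff-tlfpt}. As in the \textsc{Vertex Cover} warmup, we may first delete isolated vertices in linear time. They are irrelevant to any matching.

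\textbf{Bounded matching number.} Take $m(k)=2k$. If $M$ is a maximal matching with $|M|\le k$, then $V(M)$ is a vertex cover of size at most $2k$. Any matching has at most one edge per cover vertex, so every matching in $G$ has size at most $2k$. Hence if $G$ contains a matching of size greater than $2k$, the instance is negative.

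\textbf{Twin reducibility.} Take $\tau(k)=2k+1$. Let $v$ have at least $2k+1$ open twins, so the class $S\ni v$ has $|S|\ge 2k+2$. The twins are pairwise non-adjacent with common neighbourhood $N$.

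For the direction from $G$ to $G-v$, let $M$ be a maximal matching of $G$ with $|M|\le k$. At most $k$ vertices of $S$ are matched, so some $u\in S\setminus\{v\}$ is unmatched. If $v$ is matched via $vw$, replace that edge by $uw$. The result $M'$ is a matching of $G-v$ of the same size. It is maximal: every other unmatched twin $x$ has all its neighbours in $N$. Those neighbours were matched in $M$, since $x$ was unmatched and $M$ is maximal. They remain matched in $M'$ because $w$ is still matched. Vertices outside $S$ keep their status.

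For the direction from $G-v$ to $G$, let $M'$ be a maximal matching of $G-v$ with $|M'|\le k$. Some twin $u\ne v$ is unmatched. Because $M'$ is maximal, $u$ has no unmatched neighbour, so all of $N$ is matched. Then $v$ also has no unmatched neighbour in $G$, so $M'$ stays maximal in $G$.

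The no-instance direction is the contrapositive of these two arguments. So both $YES(\Pi_k)$ and $NO(\Pi_k)$ are closed under deleting $v$.

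\textbf{Main obstacle.} The delicate step is the swap argument when $v$ itself is matched. The replacement must preserve maximality, and this relies on all twins sharing the neighbourhood $N$, so that the matched status of $N$ is unchanged by the swap. It also needs enough twins to guarantee an unmatched twin in both directions, which is why $\tau(k)$ is taken linear in $k$. With both properties in hand, \cref{thm:matchTwin} yields the kernel.
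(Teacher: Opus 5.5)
Your proof is correct and follows essentially the same route as the paper: you apply the Census Method with negative $2k$-bounded matching number and twin reducibility, and your explicit swap $vw \mapsto uw$ spells out the paper's ``without loss of generality $v$ is unmatched''. Your threshold $\tau(k)=2k+1$ is looser than needed. The paper uses $\tau(k)=k+1$, because $S$ is independent and so each matching edge covers at most one twin, as you yourself observe. The isolated-vertex preprocessing is harmless but unnecessary.
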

\begin{proof}
    We first claim that {\sc Minimum Maximal Matching} parameterized by $k$ is $\tau$-twin reducible and has $\ell$-bounded matching number for the functions $\tau(k) = k+1$ and $\ell(k) =2k$.

    Let $(G, k)$ be an instance of {\sc Minimum Maximal Matching}. Suppose $G$ has a maximal matching $M \subseteq E$ of size at most $k$. Let $M'$ be some arbitrary matching in $G$. Since $M$ is maximal, every edge $uv \in M'$ must have at least one endpoint (either $u$ or $v$) that is incident to an edge in $M$. As each edge in $M$ can be incident to at most two vertices, it follows that $|M'| \leq 2|M| \leq 2k$. Thus, if $G$ contains a matching of size at least $2k + 1$, then $G$ cannot have any maximal matching of size at most $k$ and so {\sc Minimum Maximal Matching} has $\ell$-bounded matching number.
    
    We now claim that {\sc Minimum Maximal Matching} is $\tau$-twin reducible. Suppose there exists a set  $S$ of open twins of size $k + 2$ and let $v$ be one of its elements. We claim that $G$ has a maximal matching of size at most $k$ if, and only if, the same holds for $G - v$.

    Let $M$ be some maximal matching in $G$ with size at most $k$. As $G[S ]$ is an independent set (Observation~\ref{obs: open twins are independent}) and $S$ has size $k+2$, there are at least two vertices in $S$ that do not correspond to the endpoint of some edge in $M$. Without loss of generality, one of these is $v$. It follows that $M$ is also a maximal matching in $G-v$.

    Conversely, suppose $M$ is a maximal matching in $G - v$ with size at most $k$. As $G[S]$ is an independent set (Observation~\ref{obs: open twins are independent}) and $S$ has size $k+2$ there exists some vertex $v' \in S$ distinct from $v$ which is not the endpoint of any edge in $M$. As $M$ is a maximal matching, every vertex in $N(v')$ is the endpoint of some edge in $M$. Given that $N(v) = N(v')$, $M$ is also a maximal matching in $G$. Thus, {\sc Minimum Maximal Matching} is $\tau$-twin reducible.

    It follows that applying Theorem~\ref{thm:matchTwin} we obtain a TLFPT time kernelization algorithm for {\sc Minimum Maximal Matching}.
\end{proof}

\subsection{Diverse Pair of Matchings}\label{sec:diverse-match}

We turn to a different matching problem; namely, \textsc{Diverse Pair of Matchings}. For two sets $X$, $Y$ we let $X \triangle Y$ denote the symmetric difference. That is $X \triangle Y = (X \setminus Y) \cup (Y \setminus X)$.

\begin{framed}
    \noindent
    \textsc{Diverse Pair of Matchings}\\
    \textbf{Input:} graph $G=(V,E)$, integer $k$.\\
    \textbf{Question:} does there exist a pair of matchings $M, M' \subseteq E$ such that $|M \triangle M'| \geq k$?
\end{framed}

We note that the ideas necessary to show that {\sc Diverse Pair of Matchings} admits a TLFPT time kernelization were already presented in \cite{FGJPS24}. Moreover, the algorithm described in Theorem~\ref{thm:matchTwin} can be adapted to produce the same $O(k^2)$ kernel in time $O(|V| + k)$. 
For completeness we repeat those ideas of \cite{FGJPS24} as well as giving that TLFPT time kernelization algorithm.

\begin{theorem}\label{thm:diverse-match}
    {\sc Diverse Pair of Matchings} admits a TLFPT time kernelization.
\end{theorem}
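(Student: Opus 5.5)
Following the pattern of \cref{thm: min max match}, I will apply \cref{thm:matchTwin}. This requires showing that \textsc{Diverse Pair of Matchings}, parameterized by $k$, has positive $m$-bounded matching number and is $\tau$-twin reducible for suitable functions $m$ and $\tau$.

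\textbf{Bounded matching number.} Take $m(k)=k$. If $G$ has a matching $M$ with $|M|\ge k$, then the pair $(M,\emptyset)$ satisfies $|M\triangle\emptyset|=|M|\ge k$. So any instance whose graph contains a matching of size greater than $m(k)$ is a yes-instance. Since every matching in the pair has size at most the matching number $\nu(G)$, we also have $|M\triangle M'|\le 2\nu(G)$. Combining the two bounds, a non-trivial instance satisfies $k/2\le \nu(G)<k$.

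\textbf{Twin reducibility.} I will show the problem is $\tau$-twin reducible with $\tau(k)=2k+1$. Let $v$ have at least $\tau(k)$ open twins, so $v$ lies in a set $S$ of open twins with $|S|\ge 2k+2$.
\begin{itemize}
\item Deleting $v$ can only shrink the solution space. So if $(G-v,k)$ is a yes-instance, then so is $(G,k)$.
\item Conversely, suppose $M,M'$ witness that $(G,k)$ is a yes-instance. We may truncate $M\triangle M'$: remove edges from $M$ or $M'$ until exactly $k$ edges remain in the symmetric difference, and keep only those. Now $M\cup M'$ has at most $k$ edges and touches at most $2k$ vertices. Hence some $u\in S\setminus\{v\}$ is untouched by both matchings.
\item If $v$ is matched, in $M$ say via edge $vw$, replace $vw$ by $uw$; this is valid since $N(u)=N(v)$. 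Do the same in $M'$, using the same $u$.
\item The map $v\mapsto u$ is injective on edges, and $u$ is unused, so both results are still matchings. Edges agree between $M$ and $M'$ exactly when they did before, so the symmetric difference keeps its size. Thus $(G-v,k)$ is a yes-instance.
\end{itemize}

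Together these give the hypotheses of \cref{thm:matchTwin}, and hence a kernel computable in TLFPT time.

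\textbf{Matching the $O(k^2)$, $O(|V|+k)$ claim.} The paper also asserts this sharper bound. For it, I would follow \cite{FGJPS24} instead of the $2^{2m(k)}$ census.
\begin{itemize}
\item Compute a maximal matching by DFS. If it has size at least $k$, output a trivial yes-instance.
\item Otherwise its vertex set $M_V$ is a vertex cover of size less than $2k$.
\item For each $x\in M_V$, keep only $O(k)$ neighbours outside $M_V$. The argument is exchange-based: a witness pair uses at most $2k$ vertices, so a spare neighbour is always available. This yields $O(k^2)$ vertices.
\item Implement this by a single scan of the adjacency lists, stopping each list after $O(k)$ entries, giving running time $O(|V|+k)$.
\end{itemize}

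\textbf{Main obstacle.} The hardest step is the exchange argument. I must check that rerouting the edge at $v$ to $u$ simultaneously in both matchings preserves that both are matchings and preserves $|M\triangle M'|$. This is exactly why I first truncate the symmetric difference to size $k$: otherwise $M\cup M'$ could touch too many twins and no free $u$ would be guaranteed.
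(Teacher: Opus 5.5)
Your proof is correct and follows the paper's route: you establish positive bounded matching number via the pair $(M,\emptyset)$ and twin reducibility, then invoke \cref{thm:matchTwin}. The only difference is that you prove twin reducibility directly (truncating $M\triangle M'$ to $k$ edges and rerouting $v$'s edges to an untouched twin, with $\tau(k)=2k+1$), whereas the paper appeals without proof to the stronger neighbourhood-based reduction rule of \cite{FGJPS24} to get $\tau(k)=2k$. Your optional $O(k^2)$ sketch is fine in outline, but its $O(|V|+k)$ running time is not justified, since computing the maximal matching by DFS already costs $O(|V|+|E|)$ in general (the paper's refined claim is only proved with an $O(|E|)+f(k)$ bound), though this does not affect the theorem.
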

\begin{proof}
    We first show that {\sc Diverse Pair of Matchings} is $\tau$-twin reducible and has $\ell$-bounded matching number, for the functions $\tau(k) = 2k$ and $\ell(k) = k-1$. That is, by Theorem~\ref{thm:matchTwin} we obtain a TLFPT time kernelization algorithm for {\sc Diverse Pair of Matchings}.

    Let $(G,k)$ be an instance of {\sc Diverse Pair of Matchings}. If a $G$ contains some matching $M$ of size $k$, then as $M \triangle \emptyset \geq k$, $G$ is a yes-instance of {\sc Diverse Pair of Matchings}. That is, {\sc Diverse Pair of Matchings} has $\ell$-bounded matching number.

    We now show that {\sc Diverse Pair of Matchings} is $\tau$-twin reducible for the function $\tau(k) = 2k$. To show this we repeat the stronger claim which appears in \cite{FGJPS24}. Suppose, for some vertex $v$, there is a set $X \subseteq G$ such that, $v \notin X$, $G[X \cup \{v\}]$ is an independent set and, for every $u \in N(v)$, $|N(u) \cap X| \geq 2k$. Then $(G,k)$ is a yes-instance of  {\sc Diverse Pair of Matchings} if, and only if, $(G-v,k)$ is a yes-instance of  {\sc Diverse Pair of Matchings}. It follows that, if there is some set $S$ of size $2k$ such that $v \notin S$ and, for every $v' \in S$, the vertices $v$ and $v'$ are open twins, then $(G,k)$ is a yes-instance of  {\sc Diverse Pair of Matchings} if, and only if, $(G-v,k)$ is a yes-instance of  {\sc Diverse Pair of Matchings}. That is, {\sc Diverse Pair of Matchings} is $\tau$-twin reducible for $\tau(k) = 2k$.

    It follows, by Theorem~\ref{thm:matchTwin}, that {\sc Diverse Pair of Matchings} admits a TLFPT time kernelization. We note that that kernel resulting from Theorem~\ref{thm:matchTwin} has size $O(2k \cdot 2^{k-1})$, we now claim that we can adapt this algorithm to obtain that kernel of size $O(k^2)$.

    \begin{claim}
        {\sc Diverse Pair of Matchings} admits a kernel of size $O(k^2)$ which can be found in time $O(|E|)+f(k)$, for some fixed function $f$.
    \end{claim}
    \begin{claimproof}
        Our kernelization algorithm for {\sc Diverse Pair of Matchings} will follows very similarly to that of Theorem~\ref{thm:matchTwin} bar a few notable differences which we highlight below.

        Let $(G,k)$ be an instance of {\sc Diverse Pair of Matchings}. Let $M \subseteq E$ be some arbitrary maximal matching of $G$. If $|M| \geq k$, then $G$ is a yes-instance of {\sc Diverse Pair of Matchings}, hence hence we return the graph $K_{2k}$.
        
        Assume now that $|M| \leq k-1$. We will again make use of the matching census representation of a graph and return the matching census representation of our kernel. Note that in the algorithm of Theorem~\ref{thm:matchTwin} we greedily include a vertex $v \in V \setminus V(M)$ in the kernel if there are at most $2k-1$ vertices $v' \in V \setminus V(M)$ such that $v'$ is included in the kernel and $v$, $v'$ are open twins. In this algorithm we will instead include a vertex $v \in V \setminus V(M)$ in the kernel if there exists some vertex $u \in N(v)$ such that there are at most $2k-1$ vertices in $N(u) \cap (V \setminus V(M))$ which are included in the kernel. Algorithm~\ref{alg:diverse-pair} computes the census vector corresponding to those vertices of $V \setminus V(M)$ contained in this kernel. 

        Once again, we let $s = 2|M|$ and fix an arbitrary bijection $V \to [n]$ such that the vertices in $[s]$ correspond to those contained in $M$. Since $s \le 2m(k)$, the arrays $\vec{c}$, $\vec{n}$ and $\vec{f}$ can be constructed in time $f(k)$, for some fixed function $f$. Moreover, in lines $4$ though $14$, each edge of $G$ is processed at most twice, once in lines $7$ through $10$ and once in lines $12$ through $14$. It follows that Algorithm~\ref{alg:diverse-pair} runs in time $O(|E|) +f(k)$.

        \begin{algorithm}
        \caption{Diverse Pair Algorithm}\label{alg:diverse-pair}
        \begin{algorithmic}[1]
        \Require $G=([n],E)$ is a graph on $n$ vertices and vertex cover $M_v = [s]$.
        \State Allocate $\vec{c}$ to be an array of zeroes of length $2^{s}$.
        \State Allocate $\vec{n}$ to be an array of zeroes of length $s$.
        \State Let $\vec{f}$ be a vector of length $s$ such that $\vec{f}[i]=2^i$ for each $0 \le i < s$.

        \ForAll{$v \in \{s,s+1,\ldots,n-1\}$}
            \State Let $p=0$. \Comment{Stores target index.}
            \State Let $f=0$. \Comment{Flag for adding to kernel.}
            \ForAll{$u \in N(v)$}
                \State Let $p+=\vec{f}[u]$.
                \If{$\vec{n}[u] \leq 2k-1$}
                    \State $f=1$
                \EndIf
            \EndFor
            \If{$f=1$}
                \ForAll{$u \in N(v)$}
                    \State $\vec{n}[u] += 1$
                \EndFor
                \State Let $\vec{c}\,[p] += \vec{c}\,[p]+1)$
            \EndIf
        \EndFor
        \State \Return $\vec{c}$.
        \end{algorithmic}
        \end{algorithm}
    \end{claimproof}
\end{proof}

\subsection{\texorpdfstring{$(n-k)$}{(n-k)}-Coloring}
We further show that there exist problems which themselves are not $\tau$-twin reducible nor do they have $m$-bounded matching number but we can reduce them to a problem with these properties. To illustrate this we use the problem $(n-k)$-Coloring. We note that a \textit{proper} coloring of a graph is one where the endpoints of any edge are a colored differently.

\begin{framed}
    \noindent
    \textsc{$(n-k)$-Coloring}\\
    \textbf{Input:} A graph $G=(V,E)$ with $n$ vertices and integer $k$.\\
    \textbf{Question:} Does there exist an $(n-k)$-proper coloring of $G$?
\end{framed}

Although we are simply properly coloring the vertices in the graph, an intuitive way to think about this problem is as if the input graph is trivially colored with $n$ colors (each vertex assigned a different color) and we have to \textit{save} at least $k$ colors by reusing them for multiple vertices; resulting in an $(n-k)$-proper coloring of the input graph.

For {\sc $(n-k)$-Coloring} it will be useful to consider the complement of our input graph. Let $G$ be a graph and let $\overline{G}$ be its complement. If $M$ is some matching in $\overline{G}$, then for every edge $uv \in M$, it is possible to assign $u$ and $v$ the same color in a proper coloring of $G$. That is, if $|M|\geq k$  then $(G,k)$ is a yes-instance for \textsc{$(n-k)$-Coloring}. Since we are working in the complement of the original graph, we consider the following auxiliary problem \textsc{$(n-k)$-Coloring Complement}.

\begin{framed}
    \noindent
    \textsc{$(n-k)$-Coloring Complement}\\
    \textbf{Input:} A graph $G=(V,E)$ with $n$ vertices and integer $k$.\\
    \textbf{Question:} Does there exist an $(n-k)$-coloring of the complement of $G$?
\end{framed}

We now show that \textsc{$(n-k)$-Coloring Complement} is $\tau$-twin reducible and has positive $m$-bounded matching number, for the functions $m(k)=k-1$ and $\tau(k)=k$. That is, \textsc{$(n-k)$-Coloring Complement} is in TLFPT.

We now give some intuition regarding the $\tau$-twin reducibility of the problem. Note that if there is some vertex which is unique in its color class, then the removal of this vertex decreases both the number of vertices of the graph and the number of colors needed; preserving yes- and no-instances. As any set of open twins in $G$ form a clique in $\overline{G}$ (Observation~\ref{obs: open twins are independent}), each of these vertices must be colored differently. It follows that either at least one of these vertices is unique in its color class, or the graph can be colored with $n-k-1$ colors. That is, it is safe to remove a vertex from this set.

\begin{lemma}\label{lem:color-complement-tau}
    The problem $(n-k)$\textsc{-Coloring Complement} is $\tau$-twin reducible and has positive $m$-bounded matching number, where $\tau(k)=k$ and $m(k)=k-1$.
\end{lemma}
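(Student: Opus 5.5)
The statement has two parts, and I would prove them separately. Throughout, $(G,k)$ is an instance of $(n-k)$\textsc{-Coloring Complement} with $n=|V(G)|$. It asks whether $\overline{G}$ admits a proper coloring with at most $n-k$ colors. Equivalently, it asks whether $V$ can be partitioned into at most $n-k$ cliques of $G$, since color classes of $\overline{G}$ are exactly cliques of $G$.

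\emph{Positive matching number with $m(k)=k-1$.} Suppose $G$ contains a matching $M$ with $|M|>k-1$, i.e.\ at least $k$ edges. Each edge $uv\in M$ is a clique of size two in $G$, so $u$ and $v$ are non-adjacent in $\overline{G}$ and may share a color. Give the two endpoints of each of $k$ chosen edges of $M$ a common color, one color per edge, and give every other vertex its own fresh color. This is a proper coloring of $\overline{G}$ with $n-k$ colors. Hence the instance is positive, which is exactly what the definition of positive $m$-bounded matching number requires.

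\emph{Twin reducibility with $\tau(k)=k$.} Let $v$ have at least $k$ open twins in $G$, so together with $v$ there is a set $S$ of at least $k+1$ open twins. By the observation that open twins are independent, $S$ is independent in $G$, hence a clique in $\overline{G}$. Deleting $v$ turns the instance into $(G-v,k)$, whose question is a coloring of $\overline{G-v}$ with $(n-1)-k$ colors. I need both directions of the equivalence.

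For the forward direction, take a coloring $c$ of $\overline{G}$ with $n-k$ colors.
\begin{itemize}
\item If $v$ is alone in its color class, restricting $c$ to $G-v$ gives a coloring with $n-k-1$ colors, and we are done.
\item Otherwise, the members of $S$ get pairwise distinct colors, because $S$ is a clique in $\overline{G}$.
\item If some other twin $v'\in S$ is alone in its class, swap the roles of $v$ and $v'$. Since $N_G(v)=N_G(v')$, their neighbourhoods in $\overline{G}$ agree outside $\{v,v'\}$, and the two are adjacent in $\overline{G}$. So exchanging their colors is again proper, and we reduce to the first case.
\item It remains to handle the case where every member of $S$ shares its class with a non-$S$ vertex. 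Counting colors then shows there are at least $|S|\ge k+1$ merges: there are $n-(\#\text{classes})\ge k$ merged vertices, and here each $S$-class contributes at least one. I would argue that the saving is then strictly more than $k$, so that $v$ can be recolored into a fresh class and the coloring still uses at most $n-k-1$ colors on the remaining vertices. This counting step is the delicate one, and I expect it to be the main obstacle. My first attempt is to note that in this case the $k+1$ classes containing members of $S$ each have size at least $2$, which already yields $n-(\#\text{classes})\ge k+1$. Removing $v$ from its class then gives a coloring of $G-v$ with the same number of classes, at most $(n-1)-k$.
\end{itemize}

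For the converse, take a coloring of $\overline{G-v}$ with $n-1-k$ colors and give $v$ a new color. This yields a proper coloring of $\overline{G}$ with $n-k$ colors, so this direction is immediate.

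Closure of the NO-instances under the deletion follows from the equivalence just established, which completes $\tau$-twin reducibility.
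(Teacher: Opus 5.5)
Your proof is correct and follows the paper's own argument: merging the endpoints of $k$ matching edges for the positive matching bound, and, for twin reducibility, the same three cases ($v$ alone in its class, swapping colors with a twin that is alone in its class, and counting that $k+1$ classes of size at least two force at most $n-k-1$ colors), together with the trivial converse of giving $v$ a fresh color. The counting step you flagged as delicate is already complete as you wrote it, since deleting $v$ from a class of size at least two leaves the number of classes unchanged; the remark about recoloring $v$ into a fresh class is unnecessary and can be dropped.
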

\begin{proof}
    We first show that $(n-k)$\textsc{-Coloring Complement} has $m$-bounded matching number. Let $G$ be some graph, $\overline{G}$ be its complement and let $n = |V|$. Now $(\overline{G},k)$ is an instance of $(n-k)$\textsc{-Coloring Complement}. Recall that $(\overline{G},k)$ is a yes-instance if, and only if, $G$ can be colored with $n-k$ colors.
    
     Suppose there is a matching $M$ of size $k$ in $\overline{G}$. We construct a $(n-k)$-proper coloring of $G$ as follows. Begin by coloring each vertex differently. This is a trivial proper coloring. For each edge $uv \in M$ we now recolor the vertex $v$ giving it the same color as $u$. As $uv$ is an edge in $\overline{G}$ it is a non-edge in $G$, that is, this is a proper coloring of $G$. As for each edge in $M$ this reduces the number of colors used by $1$ and $|M| = k$, this describes an $n-k$ coloring. Thus, $(\overline{G},k)$ is a yes-instance of $(n-k)$\textsc{-Coloring Complement} and $(n-k)$\textsc{-Coloring Complement} has positive $m$-bounded matching number for $m(k) =k-1$.
    
    We continue by showing the problem is $\tau$-twin reducible where $\tau(k)=k$. Suppose $S$ is a set of open twins of size $k + 1$ and let $v$ be one of its elements. We claim that $(\overline{G},k)$ is a yes-instance of $(n-k)$\textsc{-Coloring Complement} if, and only if, $(\overline{G}-v,k)$ is a yes-instance of $(n-k)$\textsc{-Coloring Complement}. As $\overline{G}-v$ has size $n-1$, this is equivalent to showing that there is an $(n-k)$-coloring of $G$ if and only if there is an $(n-k-1)$-coloring of $G-v$.
    
    We first show the backwards direction, that is, if there is an $(n-k)$-coloring of $G$ then there is an $(n-k-1)$-coloring of $G-v$. Let $c \colon V(G)\to [n-k]$ be an $(n-k)$-coloring of $G$. Recall that, by definition, every pair of vertices in $S \cup \{v\}$ are open twins in $\overline{G}$. That is, $S \cup \{v\}$ forms a clique in $G$ and each vertex in $S \cup \{v\}$ is assigned a different color by $c$. We now consider the following three cases.
    \begin{enumerate}
        \item Suppose $v$ is unique in its color class: it follows that $c$ restricted to the graph, $G-v$ uses at most $n-k-1$ colors. 
        \item Assume that $v$ is contained in some color class of size at least $2$. Suppose now that there is some vertex $u \in S$ such that $u$ is unique in its color class. As $u$ and $v$ are twins in $\overline{G}$, they are also twins in $G$. It follows that swapping the colors assigned to $u$ and $v$ we obtain an $n-k-1$ coloring of $G-v$. 
        \item Assume that every vertex in $S \cup \{v\}$ is contained in some color class of size at least $2$. By definition $|S \cup \{v\}| = k+1$. It follows that, if every vertex in $S \cup \{v\}$ is contained in some color class of size at least $2$, then $c$ uses at most $n-k-1$ colors. That is, $c$ is an $n-k-1$ coloring of $G-v$. This concludes the backward direction of our proof.
    \end{enumerate}
    
    We now consider the forward direction. Suppose that there exists an $(n-k-1)$-coloring $c'$ of $G-v$. We extend $c'$ to a coloring of $G$ by assigning $v$ a unique color. By definition, this is a proper coloring of $G$ that uses at most $n-k$ colors. This concludes the proof of our statement. That is, $(n-k)$\textsc{-Coloring Complement} is $\tau$-twin reducible and has positive $m$-bounded matching number, where $m(k)=k-1$ and $\tau(k) = k$.
\end{proof}

By Theorem~\ref{thm:matchTwin}
it follows that $(n-k)$\textsc{-Coloring Complement} is in TLFPT with respect to $k$. To show now that the same holds for \textsc{$(n-k)$-Coloring}, we note the trivial reduction from \textsc{$(n-k)$-Coloring} to $(n-k)$\textsc{-Coloring Complement}: if $G = (V, E)$ is a graph and $\overline{G}$ its complement, by definition $(\overline{G},k)$ is a yes-instance of $(n-k)$\textsc{-Coloring Complement} if, and only if, $(G,k)$ is a yes-instance of \textsc{$(n-k)$-Coloring}. It now remains to show that this is a TLFPT-reduction.

We first show that either we have a trivial instance of $(n-k)$\textsc{-Coloring} or the size of our instance of $(n-k)$\textsc{-Coloring Complement} obtained from our reduction is bounded by the size of our instance of \textsc{$(n-k)$-Coloring}. That is, either we have a trivial instance of $(n-k)$\textsc{-Coloring} or $|\overline{E}| <|E|$. In particular, we show that if $|E| < |\overline{E}|$, i.e. $|E| < \frac{n(n+1)}{2}$, then $(G,k)$ is a yes-instance of \textsc{$(n-k)$-Coloring}. Note that we can count the number of edges in $G$ in time $O(|E|)$ and return a trivial yes instance of $(n-k)$\textsc{-Coloring Complement} if $|E| < \frac{n(n+1)}{4}$. We then assume that, since $\overline{G}$ is sufficiently sparse, $|\overline{E}| <|E|$ and we can build $\overline{G}$ in time $O(|V|+|E|)$. This gives us a TLFPT-reduction from \textsc{$(n-k)$-Coloring} to $(n-k)$\textsc{-Coloring Complement}.

We now argue that if $G$ is sufficiently sparse then $\overline{G}$ contains a matching of size $k$ and so $(G,k)$ is a yes-instance of \textsc{$(n-k)$-Coloring}. To make this argument, we make use of Mader's theorem~\cite{mader1967,mader1968}. Here we state an equivalent theorem; the original statement relates the constant $c_H$ with the average degree of the graph, rather than the number of edges in the graph.
\begin{theorem}[Mader~\cite{mader1967,mader1968}]\label{thm:mader}
    For every graph $H$, there exists a constant $c_H\in\mathbb{N}$ such that any $n$ vertex graph $G$ with at least $c_H\cdot n$ edges contains $H$ as a minor.
\end{theorem}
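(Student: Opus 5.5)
Since $H$ is a minor of $K_t$ for $t=|V(H)|$, and the minor relation is transitive, it suffices to prove the statement for complete graphs. I would show that every graph $G$ with $|E(G)| \ge 2^{t-3}|V(G)|$ contains $K_t$ as a minor. Then $c_H := \max(1,\lceil 2^{|V(H)|-3}\rceil)$ works. This is the classical argument of Mader, in the form found in Diestel's textbook.

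\textbf{Induction on $t$.} For $t\le 2$ the claim is immediate: a graph with at least $\tfrac12|V|>0$ edges contains an edge, hence a $K_2$ minor, and $K_1$ is trivial. For $t\ge 3$, write $\varepsilon(G)=|E(G)|/|V(G)|$ and assume $\varepsilon(G)\ge 2^{t-3}$. Among all minors $G'$ of $G$ with $\varepsilon(G')\ge 2^{t-3}$, choose one with $|V(G')|+|E(G')|$ minimal. This is where the whole argument lives.

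\textbf{Consequences of minimality.}
\begin{itemize}
\item $G'$ has an edge, since a nonempty graph with no edges has $\varepsilon=0$.
\item Take any vertex $x$ incident to an edge $xy$. Contracting $xy$ removes one vertex and exactly $1+|N(x)\cap N(y)|$ edges. Minimality forces the contracted graph to violate the bound, so $|E(G')|-1-|N(x)\cap N(y)| < 2^{t-3}(|V(G')|-1)$.
\item Combining this with $|E(G')|\ge 2^{t-3}|V(G')|$ gives $|N(x)\cap N(y)| > 2^{t-3}-1$, so $|N(x)\cap N(y)|\ge 2^{t-3}$.
\end{itemize}

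\textbf{Finishing the induction.} Fix such an $x$ and let $F=G'[N(x)]$. Every vertex $y$ of $F$ has at least $2^{t-3}$ neighbours in $F$. Hence $F$ has minimum degree, and so average degree, at least $2^{t-3}$, which means $\varepsilon(F)\ge 2^{t-4}$. By the induction hypothesis $F$ contains a $K_{t-1}$ minor. Its branch sets all lie in $N(x)$, so adding $\{x\}$ as one further branch set yields a $K_t$ minor of $G'$, and hence of $G$.

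The main obstacle is the minimality step. One has to choose the right extremal object (a minor-minimal graph under the edge-density constraint) so that contracting any single edge destroys the density condition. That is exactly what converts a global edge count into the local statement that every edge lies in many triangles, which is what allows the induction to proceed inside a neighbourhood. Edge cases, such as $F$ being small, are absorbed by the induction hypothesis, which holds for every graph meeting the density bound.
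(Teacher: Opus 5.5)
Your proof is correct. It is Mader's original density-minimality argument, the version in Diestel's textbook, and each step holds:
\begin{itemize}
\item Contracting $xy$ in a simple graph loses exactly $1+|N(x)\cap N(y)|$ edges.
\item The contracted graph is a nonempty minor of $G$ with strictly smaller $|V|+|E|$, so it violates the density bound.
\item This gives $|N(x)\cap N(y)|\ge 2^{t-3}$ for every neighbour $y$ of $x$, since the argument applies to every edge at $x$.
\item Hence $\delta(F)\ge 2^{t-3}$, which yields $\varepsilon(F)\ge 2^{(t-1)-3}$, exactly what the induction needs.
\end{itemize}

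The paper gives no proof of this theorem. It imports it from Mader as a black box. In the proof of Lemma~\ref{lem-complement-graph-sparse} it then invokes the sharper quantitative form $c_{K_r}\le 8(r-2)\lfloor\log_2(r-2)\rfloor$. Your argument does not deliver that bound; it gives the exponential constant $c_{K_t}=2^{t-3}$.

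For the theorem as stated, which only asserts that some $c_H$ exists, this makes no difference. It is also harmless for the paper's application. That lemma only needs $c_{K_{2k}}$ to be bounded by some function of $k$, and Lemma~\ref{lem:words-big-as-fn-of-k} lets one assume $2^{2k-3}\le (n+1)/4$ in place of $k^2\le (n+1)/4$.

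So your route trades the cited $O(r\log r)$ bound, whose proof is considerably more involved, for a short self-contained argument with an exponential constant. That suffices everywhere the paper uses this result.
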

We use this to show that, if the input graph is sufficiently sparse, it must have a dense complement that contains a matching of at least $k$ edges. Let $c$ be that constant found by Mader~\cite{mader1967}, such that every graph with $n$ vertices and at least $cn$ edges contains the complete graph on $2k$ vertices as a minor and so also a matching of size $k$.

Note that by Lemma~\ref{lem:words-big-as-fn-of-k}, we can assume any function of $k$ is bounded above by $n$. We use this to make the assumption that $k^2\leq\frac{n+1}{4}$.

\begin{lemma}\label{lem-complement-graph-sparse}
    Let $G=(V,E)$ be a graph and let $\overline{G} = (V, \overline{E})$ be the complement of $G$. For any $k$, such that $k^2\leq \frac{n+1}{4}$, either $\overline{G}$ contains a matching of size at least $k$ or $|\overline{E}| \leq |E|$.
\end{lemma}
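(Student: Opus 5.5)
The plan is a direct counting argument. Mader's theorem (\cref{thm:mader}) is not needed for this particular statement, since the bound follows from elementary matching and vertex-cover facts. We show the contrapositive of the first alternative: assume $\overline{G}$ has no matching of size $k$, and deduce $|\overline{E}| \le |E|$.

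\textbf{Step 1 (small vertex cover in $\overline{G}$).} Take any maximal matching $M$ of $\overline{G}$; by assumption $|M| \le k-1$. As in the census method, $C := V(M)$ is a vertex cover of $\overline{G}$ with $|C| \le 2(k-1)$. Two consequences follow. Every edge of $\overline{G}$ has an endpoint in $C$, so
\[
|\overline{E}| \le |C|\,(n-1) \le 2(k-1)(n-1).
\]
Also $V \setminus C$ is independent in $\overline{G}$, hence a clique in $G$ on at least $n-2k+2$ vertices, which gives
\[
|E| \ge \binom{n-2k+2}{2} = \tfrac{1}{2}(n-2k+2)(n-2k+1).
\]

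\textbf{Step 2 (compare the two bounds).} It remains to show
\[
(n-2k+2)(n-2k+1) \ge 4(k-1)(n-1)
\]
whenever $n \ge 4k^2-1$, which is the hypothesis $k^2 \le \frac{n+1}{4}$. If $k=1$ the right side is $0$; indeed $\overline{E}=\emptyset$ in that case, so there is nothing to prove. For $k \ge 2$, note that $n \ge 4k^2-1 \ge 4k$, so $n-2k+1 \ge n/2$. The left side is then at least $n^2/4$, and it suffices that $n \ge 16(k-1)$. This holds because $4k^2-1 \ge 16(k-1)$ for every $k \ge 1$: the difference $4k^2-16k+15 = (2k-3)(2k-5)$ is negative only for $k=2$, where it equals $-1$. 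So $k=2$ needs separate treatment.

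\textbf{Main obstacle.} The only real risk is this arithmetic in the small-$k$ boundary cases, where the crude estimate $n-2k+1 \ge n/2$ is too lossy. For $k=2$ we have $n \ge 15$, and I would check the exact inequality $(n-2)(n-3) \ge 4(n-1)$ directly: at $n=15$ it reads $156 \ge 56$. Since the left side minus the right side is increasing in $n$ there, it holds for all $n \ge 15$. If the general estimate turns out tight for other small $k$, I would fall back on the exact quadratic in $n$. Its larger root is $O(k)$, well below $4k^2-1$ for $k \ge 2$, so the inequality holds throughout the admissible range.
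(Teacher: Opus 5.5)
Your argument is correct, and it takes a genuinely different and more elementary route than the paper. The paper invokes Mader's theorem. Assuming $|\overline{E}|>|E|$, it bounds $|\overline{E}|$ below by roughly $n^2/4\ge nk^2$, compares this with the extremal constant $c_{K_{2k}}$ to get a $K_{2k}$ minor in $\overline{G}$, and reads off a matching of size $k$ by pairing up the $2k$ branch sets.

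You instead take a maximal matching of size at most $k-1$ in $\overline{G}$, which gives a vertex cover $C$ with $|C|\le 2k-2$. This caps $|\overline{E}|$ at $|C|(n-1)$ and makes $V\setminus C$ a clique of $G$, so $|E|\ge\binom{n-2k+2}{2}$. The rest is arithmetic, and you handle it correctly, including the exceptional case $k=2$. Your fallback also works: the quadratic $n^2-(8k-7)n+(k-1)(4k+2)$ has its real roots, if any, in $[0,8k-7]$, and $8k-7\le 4k^2-1$.

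Your route buys more than economy, because the paper's computation is not airtight as written. It uses $|E|+|\overline{E}|=n(n+1)/2$ instead of $\binom{n}{2}$. Its ``coarse bound'' $8(2k-2)\lfloor\log_2(2k-2)\rfloor\le k^2$ is false for small and moderate $k$: it reads $16\le 4$ at $k=2$ and $576\le 100$ at $k=10$. So the Mader argument only goes through for large $k$, or under a stronger size assumption on $n$ obtained via \cref{lem:words-big-as-fn-of-k}. Your proof establishes the lemma exactly as stated, for every $k$, with no external extremal result. The Mader approach would only earn its weight if one needed a dense minor in $\overline{G}$ rather than just a matching, which the \textsc{$(n-k)$-Coloring} application does not.
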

\begin{proof}
   Mader~\cite{mader1967} shows that $c_{K_r}\leq 8(r-2)\lfloor\log_2(r-2)\rfloor$, where $K_r$ is the complete graph on $r$ vertices. For this problem we note that if this threshold is met for $r=2k$, then the graph contains a matching of size $k$. For ease of writing, let $c =c_{K_r}$ from this point onward.

   Let $G =(V,E)$ be a graph and let $\overline{G}= (V,\overline{E})$ be the complement of $G$. Let $n = |V|$, $m = |E|$ and $\overline{m} = \overline{|E|}$. We now claim that either $\overline{m} \leq m$ or $\overline{m} \geq n \cdot c$. That is, $\overline{G}$ contains a matching of size at least $k$, or it is sparse enough to be computed in linear time. For our purposes, the coarse bound
    \[n \cdot c \leq 8n (2k-2)\lfloor \log_2(2k-2)\rfloor\leq nk^2\]
    is sufficient. We now show that if $\overline{m} >  m$ then $\overline{m} \geq n \cdot c$. Recall our assumption that $k^2\leq \frac{n+1}{4}$. Combining this with the fact that $m+\overline{m}= n(n+1)/2$, if $\overline{m} >  m$ then we find that:
    \begin{align*}
    \overline{m} & > \frac{n(n+1)}{4} \geq n \cdot k^2 \geq n\cdot c.
    \end{align*}

    Thus concluding the proof of this lemma.
\end{proof}

We now consider the case where $|\overline{E}| \geq |E|$. We can now apply the following lemma to obtain the complement of $G$, $\overline{G}$ in time $O(|V|+|E|)$.

\begin{algorithm}
\caption{Complement builder}\label{alg:complement-builder}
\begin{algorithmic}[1]
\Require A graph $G=([n],E')$.
\State Initiate $\overline{G}= ([n],E'=\emptyset)$.
\ForAll{$v\in [n]$}
    \State Initiate $N_{\overline{G}}(v)= [n]$.
    \ForAll{$w\in N_G[v]$}
        \State Remove $w$ from $N_{\overline{G}}(v)$.
    \EndFor
\EndFor
\ForAll{$v\in [n]$}
    \ForAll{$w>v$ in $N_{\overline{G}}(v)$}
        \State Add $(v,w)$ to $E'$.
    \EndFor
\EndFor
\State \Return $\overline{G}$.
\end{algorithmic}
\end{algorithm}

\begin{lemma}\label{lem:complement-time}
    Given a graph $G = (V,E)$, we can build the complement of $G$, $\overline{G} = (\overline{V}, \overline{E})$, in time $O(|V|+|E|+|\overline{E}|)$.
\end{lemma}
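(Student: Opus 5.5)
We will prove the bound by analysing Algorithm~\ref{alg:complement-builder}, after fixing a data structure that supports it in the claimed time. The naive reading of line~3, which initialises $N_{\overline{G}}(v)=[n]$ for every $v$, already costs $\Theta(n^2)$. So the plan is to avoid materialising $[n]$ per vertex and instead use a marking array shared across iterations. The graph is given in adjacency-list form over vertex set $[n]$.

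First, we allocate a single boolean array \texttt{mark} of length $n$, initialised to all zeroes, in time $O(|V|)$. For each vertex $v\in[n]$ in turn, we do three things:
\begin{enumerate}
\item set $\texttt{mark}[w]=1$ for every $w\in N_G[v]$, by scanning the adjacency list of $v$ and also marking $v$ itself;
\item scan $w$ over $[n]$ and, for each unmarked $w$ with $w>v$, append the edge $vw$ to $\overline{E}$ (and to both adjacency lists of $\overline{G}$);
\item reset $\texttt{mark}[w]=0$ for all $w\in N_G[v]$ by rescanning the same adjacency list.
\end{enumerate}
Steps~1 and~3 cost $O(\deg_G(v)+1)$, so summed over all $v$ they cost $O(|V|+|E|)$. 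Correctness is immediate: $vw\in\overline{E}$ exactly when $w\neq v$ and $w\notin N_G(v)$. The restriction $w>v$ makes each non-edge appear exactly once.

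The main obstacle is the cost of step~2. Scanning all of $[n]$ for each $v$ gives $\Theta(n^2)$ rather than $O(|V|+|E|+|\overline{E}|)$. We charge each scanned $w$ as follows:
\begin{itemize}
\item if $w$ is unmarked, then $vw$ is a non-edge, which is charged to $|\overline{E}|$ (each non-edge is charged at most twice, once from each endpoint);
\item if $w$ is marked, then $w\in N_G[v]$, which is charged to $\deg_G(v)+1$.
\end{itemize}
Each scanned position therefore pays for either an output non-edge or an input edge or vertex. Summing, step~2 costs $O\bigl(\sum_v(\deg_G(v)+1)+|\overline{E}|\bigr)=O(|V|+|E|+|\overline{E}|)$.

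If one insists on literally following Algorithm~\ref{alg:complement-builder} instead, we store $N_{\overline{G}}(v)$ as a doubly linked list over $[n]$ with a position array, so that each removal takes $O(1)$. The per-vertex initialisation of this list then costs $O(n)$, and it is charged exactly as above: $n=|N_{\overline{G}}(v)|+|N_G[v]|$. Enumerating the surviving elements in lines~7--9 costs $O(|N_{\overline{G}}(v)|)$. Both versions therefore give the total bound $O(|V|+|E|+|\overline{E}|)$, as required.
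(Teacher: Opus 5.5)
Your proof is correct and is essentially the paper's argument. The paper observes that the per-vertex $O(n)$ initialisation totals $O(|V|^2)=O(|V|+|E|+|\overline{E}|)$ because $|E|+|\overline{E}|=\binom{|V|}{2}$, which is exactly what your per-vertex charging $n=|N_{\overline{G}}(v)|+|N_G[v]|$ establishes, so the $\Theta(n^2)$ cost you set out to avoid never needs avoiding (as your own step~2 shows); your marking-array and linked-list details usefully make explicit the $O(1)$ removals that the paper leaves implicit.
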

\begin{proof}
    In particular we show that Algorithm~\ref{alg:complement-builder} builds $\overline{G}$ in the required time.
    
    The algorithm begins by initiating the graph $\overline{G}$ with the same vertex set as $G$ and no edges. This takes at most $O(n)$ time. Following this, Algorithm~\ref{alg:complement-builder} then generates the neighborhood of each vertex in $\overline{G}$. It does this by, for each vertex $v$, initiating its neighborhood as the entire vertex set, then iterating over all neighbors of $v$ in $G$ and removing them. Initiating the neighborhood requires $O(|V|^2)=O(|E|+|\overline{E}|)$ time. The inner loop requires $O(|E|)$ time since there are two iterations of this loop for each edge. The algorithm then uses the neighborhoods generated to find the edge set of $\overline{G}$. This requires $O(|\overline{E}|)$ time since each edge in $\mathcal{G}$ is added once to $\overline{E}$. It is clear from construction that the graph returned by Algorithm~\ref{alg:complement-builder} is the complement of the input graph. This requires $O(|V|+|E|+|\overline{E}|)$ time in total.
\end{proof}

We now have those components necessary to prove the following theorem.

\begin{theorem}\label{thm: n-k coloring}
     $(n-k)$\textsc{-Coloring} is in TLFPT with respect to $k$.
\end{theorem}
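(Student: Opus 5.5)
The plan is to exhibit a TLFPT-reduction from \textsc{$(n-k)$-Coloring} to $(n-k)$\textsc{-Coloring Complement} and then invoke \cref{lem:tlfpt-reducibility-transitive}. The target problem is in TLFPT by \cref{lem:color-complement-tau}, \cref{thm:matchTwin} and \cref{cor:kern-iff-tlfpt-reduction-iff-tlfpt}. The reduction maps $(G,k)$ to $(\overline{G},k)$, so the parameter is unchanged and correctness holds by definition. The only real content is showing that the reduction runs in time $O(|V|+|E|)+f(k)$, since $\overline{G}$ may be quadratically larger than $G$.

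\textbf{Step 1 (small instances).} By \cref{lem:words-big-as-fn-of-k} we restrict to instances with $n$ large compared to $k$, in particular $k^2\le \frac{n+1}{4}$. Instances violating this have $n$ bounded by a function of $k$, so they are solved by brute force in time $f(k)$. (The lemma's simulation argument formally covers this, even without computability of the bound.)

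\textbf{Step 2 (sparse case).} Count $|E|$ in time $O(|V|+|E|)$ and compare it with the total number of vertex pairs, which is at most $n^2$ and so is computable in $O(1)$ word operations. If $|E|$ is less than half the pairs, then $|\overline{E}|>|E|$. By \cref{lem-complement-graph-sparse}, $\overline{G}$ then contains a matching of size $k$, so $G$ is an $(n-k)$-colorable yes-instance by the argument in \cref{lem:color-complement-tau}. We output a fixed constant-size yes-instance of the target problem, for example a perfect matching on $2k$ vertices as its complement graph, or simply any constant yes-instance, which exists since the problem is nontrivial.

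\textbf{Step 3 (dense case).} Otherwise $|\overline{E}|\le|E|$, and in particular $n^2=O(|E|)$. Run Algorithm~\ref{alg:complement-builder}. By \cref{lem:complement-time} it takes time $O(|V|+|E|+|\overline{E}|)=O(|V|+|E|)$, including the $O(n^2)$ initialisation of the neighbourhood arrays, and it outputs $(\overline{G},k)$ of size $O(|V|+|E|)$. This gives the TLFPT-reduction, and combining it with the TLFPT algorithm for the complement problem through \cref{lem:tlfpt-reducibility-transitive} finishes the proof.

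The main obstacle is Step 2. I must make sure the threshold used in the comparison matches the hypothesis of \cref{lem-complement-graph-sparse}; the paper's constants ($n(n+1)/2$ versus $\binom{n}{2}$) are slightly loose. I must also check that the Mader bound $n\cdot c_{K_{2k}}\le nk^2$ really yields a $K_{2k}$ minor and hence a matching of size $k$. A safer fallback is a direct argument: if $\overline{G}$ has more than $nk$ edges, then a maximal matching of fewer than $k$ edges covers at most $2k-2$ vertices, which span fewer than $2kn$ edges. So $|\overline{E}|\ge 2kn$ forces a matching of size $k$, and $|\overline{E}|>|E|$ together with $n\gg k$ guarantees this threshold.
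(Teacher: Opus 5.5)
Your proposal is correct and follows the paper's proof essentially step for step. Both assume $n$ is large via \cref{lem:words-big-as-fn-of-k}, count edges, and output a trivial yes-instance when $G$ is sparse, since $\overline{G}$ then has a size-$k$ matching by \cref{lem-complement-graph-sparse}. Otherwise both build $\overline{G}$ in $O(|V|+|E|)$ time by \cref{lem:complement-time} and conclude via \cref{lem:tlfpt-reducibility-transitive}.

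Your elementary fallback for the sparse case is valid: a maximal matching with fewer than $k$ edges is a vertex cover on at most $2k-2$ vertices, so $|\overline{E}|<2kn$. It is actually more robust than the paper's Mader-based step, whose inequality $8(2k-2)\lfloor\log_2(2k-2)\rfloor\le k^2$ fails for small $k$. Just state its threshold consistently as $|\overline{E}|\ge 2kn$ rather than ``more than $nk$''.
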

\begin{proof}
    Combining Theorem~\ref{thm:matchTwin} and Lemma~\ref{lem:color-complement-tau}, $(n-k)$\textsc{-Coloring Complement} is in TLFPT. We now prove that $(n-k)$\textsc{-Coloring} is in TLFPT via a TLFPT-reduction from \textsc{$(n-k)$-Coloring} to $(n-k)$\textsc{-Coloring Complement}. Let $(G,k)$, be an instance of $(n-k)$\textsc{-Coloring Complement}, such that $G = (V,E)$. Recall that, by Lemma~\ref{lem:words-big-as-fn-of-k}, we can assume that $k^2\leq\frac{n+1}{4}$.

    We can count the number of edges in $G$ in time $O(|E|)$. If $|E| <n(n+1)/4$, then combining Lemmas~\ref{lem-complement-graph-sparse} and \ref{lem:color-complement-tau} $(G,k)$ is a yes-instance of $(n-k)$\textsc{-Coloring} and our reduction will return some trivial yes-instance of $(n-k)$\textsc{-Coloring Complement}.

    Now suppose that $|E| \geq n(n+1)/4$. This gives us that $|\overline{E}| \leq |E|$. Using Lemma~\ref{lem:complement-time} we construct $\overline{G}$, the complement of $G$, in time $(|V|+|E|)$. By definition, $(G,k)$ is a yes-instance of $(n-k)$\textsc{-Coloring} if and only if $(\overline{G},k)$ is a yes-instance of $(n-k)$\textsc{-Coloring Complement}. That is, there is a TLFPT-reduction from \textsc{$(n-k)$-Coloring} to $(n-k)$\textsc{-Coloring Complement} and so $(n-k)$\textsc{-Coloring} is in TLFPT with respect to $k$.
\end{proof}

\end{toappendix}

\section{TLFPT via Depth and Pruning Kernelization}\label{sec: bounded depth}
In Section~\ref{sec: Matchings and Twins} we placed various problems in TLFPT via the Census Method (Theorem~\ref{thm:matchTwin}). Whereas the Census Method relies on problems having bounded matching number, here we will consider a larger class (c.f.~Corollary~\ref{cor: bounded matching implies bounded depth}) of problems: those with \textit{bounded depth}, as defined below.

\begin{defn}\label{def:l-bounded-depth}
    Let $\Pi$ be some parameterized graph problem. We say that $\Pi$ has \emph{positive (respectively negative) $\ell$-bounded depth} for some function $\ell: \mathbb{N} \to \mathbb{N}$, if, given any instance $(G,k)$ of $\Pi$, whenever $G$ contains a path of length greater than $\ell(k)$, then $(G,k)$ is a positive (resp.~negative) instance of $\Pi$. Note that this trivially holds if no instance of $\Pi$ contains a path of length greater than $\ell(k)$.
\end{defn}

\begin{corollary}\label{cor: bounded matching implies bounded depth}
    If some graph problem $\Pi$ has positive (respectively negative) $m$-bounded matching number, for some function $m: \mathbb{N} \to \mathbb{N}$. Then $\Pi$ has positive (resp.~negative) $\ell$-bounded depth, for the function $\ell(k) = 2m(k)-1$.
\end{corollary}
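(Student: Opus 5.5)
The plan is to turn a long path into a large matching by taking every other edge, and then appeal directly to \cref{def:m-bounded-matching}. Assume $\Pi$ has positive (respectively negative) $m$-bounded matching number. Let $(G,k)$ be an instance in which $G$ contains a path $P=v_0v_1\cdots v_t$ of length $t>\ell(k)=2m(k)-1$, that is, with $t\ge 2m(k)$ edges.

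\textbf{Step 1 (alternating edges).} Take the edges $v_{2i}v_{2i+1}$ for $0\le i<\lceil t/2\rceil$. They are pairwise vertex-disjoint because they use disjoint consecutive pairs of vertices of $P$, so they form a matching $M\subseteq E(G)$ with $|M|=\lceil t/2\rceil$. Under the hypothesis $t\ge 2m(k)$ this gives $|M|\ge m(k)$.

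\textbf{Step 2 (apply the definition).} We would like to conclude that $G$ contains a matching of size greater than $m(k)$. Then \cref{def:m-bounded-matching} makes $(G,k)$ a positive (resp.\ negative) instance, which is exactly $\ell$-bounded depth in the sense of \cref{def:l-bounded-depth}, with the same sign.

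\textbf{Main obstacle: an off-by-one gap.} Step 2 needs $|M|\ge m(k)+1$, which requires $\lceil t/2\rceil\ge m(k)+1$, that is, $t\ge 2m(k)+1$. Step 1 only guarantees $t\ge 2m(k)$. A path with exactly $2m(k)$ edges has maximum matching number exactly $m(k)$. So the boundary case $t=2m(k)$ cannot be handled by the alternating-edge argument under the edge-count convention for length, and no other argument can rescue it either. The problem $\Pi$ = ``$G$ has a matching of size greater than $m(k)$'' has positive $m$-bounded matching number, yet the path $P_{2m(k)+1}$, which has $2m(k)$ edges, is a NO-instance containing a path of length $2m(k)>2m(k)-1$.

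The first thing I would check is therefore the convention the paper intends for ``length'' and for the threshold in \cref{def:m-bounded-matching}. If the matching condition is read as ``a matching of size at least $m(k)$'', then Steps 1--2 give the stated $\ell(k)=2m(k)-1$ verbatim. This is how I would present the proof, with a remark noting the convention. Under the strict reading, the same argument proves the statement with $\ell(k)=2m(k)$. In either case the substance is the alternating-matching construction, and the sign (positive or negative) carries over unchanged.
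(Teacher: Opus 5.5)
Your diagnosis is correct. The paper states this corollary without proof, and the only argument it has in mind is your alternating-edge matching. Your counterexample ($\Pi$ = ``$G$ has a matching of size greater than $m(k)$'', evaluated on the path with $2m(k)$ edges) shows that $\ell(k)=2m(k)-1$ is genuinely false under the strict threshold in \cref{def:m-bounded-matching}. It fails whether path length counts edges or vertices, since that path has maximum matching exactly $m(k)$.

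The right repair is the one you give second: $\ell(k)=2m(k)$ with length measured in edges, which is how the proof of \cref{thm:matchTwinSubgraphs} reads path length off a DFS-tree depth. If length counted vertices, you would need $2m(k)+1$.

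Do not present it instead under a non-strict reading of \cref{def:m-bounded-matching}. The strict inequality is written explicitly there, and the paper's own instances rely on it. For example, \cref{lem:vc-match} (\textsc{Vertex Cover} with $m(k)=k$) would be false under the non-strict reading: $kP_2$ has a matching of size $k$ and also a vertex cover of size $k$.

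The off-by-one is harmless for the rest of the paper, since \cref{thm:matchTwinSubgraphs} only needs some bounding function $\ell$.
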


Any non-trivial instance of a problem with bounded depth must have a short DFS tree. It then follows that such an instance has bounded treedepth \cite{NOdM12}.

\begin{defn} [\cite{NOdM12}]\label{def:treedepth}
The \emph{treedepth} of a graph $G$ is the minimum height of a rooted forest $F$ such that $G \subseteq \text{clos}(F)$ where the closure of $F$ $\text{clos}(F)$ has the vertex set $V(F)$ and the edge set $\{\{x,y\}: x \text{ is an ancestor of }y, x \neq y\}$.
\end{defn}

\begin{lemma}[\cite{NOdM12}]\label{lem:pathtd}
  For a graph~$G$ with a longest path of length $\ell$,
  $\log(\ell) \leq  td(G)\leq \ell$.
\end{lemma}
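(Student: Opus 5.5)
The plan is to prove the two inequalities separately: the upper bound by exhibiting a depth-first search forest, and the lower bound by an induction on the height of an arbitrary forest whose closure contains $G$. I fix the convention that the length of a path and the height of a rooted forest are both counted in vertices. This matches \cref{def:treedepth}, where height is the number of vertices on a longest root-to-leaf path. If $\ell$ is instead measured in edges, both bounds shift by at most one and the argument is unchanged.

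\emph{Upper bound $td(G)\le \ell$.} Run depth-first search on each connected component of $G$ and take the resulting DFS forest $F$, rooted at the starting vertices. The key standard property of DFS trees is that every non-tree edge of an undirected graph joins a vertex to one of its ancestors. Tree edges trivially join ancestor to descendant, so every edge of $G$ is an ancestor--descendant pair, and therefore $G\subseteq \mathrm{clos}(F)$. Every root-to-leaf path of $F$ consists of tree edges, so it is a path of $G$ and has at most $\ell$ vertices. Hence $F$ has height at most $\ell$, and $td(G)\le \ell$.

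\emph{Lower bound $\log \ell\le td(G)$.} Let $F$ be a rooted forest of height $d$ with $G\subseteq \mathrm{clos}(F)$. I will show by induction on $d$ that every path of $G$ has at most $2^d-1$ vertices. The base case $d=0$ is the empty forest, which has no vertices and so no paths. For the inductive step, let $P$ be a path in $G$. Since $P$ is connected and every edge of $\mathrm{clos}(F)$ joins an ancestor to a descendant, $P$ lies within a single tree $T$ of $F$; let $r$ be the root of $T$.
\begin{itemize}
\item If $r\notin P$, then $P$ is a connected subgraph of $\mathrm{clos}(T)-r$. This graph is the disjoint union of the closures of the child subtrees of $r$, each of height at most $d-1$. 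So $P$ lies in one of them, and by induction it has at most $2^{d-1}-1$ vertices.
\item If $r\in P$, then deleting $r$ from $P$ leaves at most two subpaths. Each lies in a child subtree of height at most $d-1$, so $|P|\le 2(2^{d-1}-1)+1=2^d-1$.
\end{itemize}
Applying this to a longest path gives $\ell\le 2^{td(G)}-1<2^{td(G)}$, hence $\log \ell\le td(G)$.

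The only step needing care is the claim that a connected subgraph avoiding $r$ stays inside a single child subtree. This holds because $\mathrm{clos}(T)$ has no edges between distinct child subtrees of $r$: vertices in different subtrees are never in an ancestor--descendant relation. The remaining care is keeping the length and height conventions consistent, which affects only the additive constants.
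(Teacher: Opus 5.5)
Your proof is correct and is the standard argument behind this cited result (the paper itself gives no proof and defers to \cite{NOdM12}). A DFS forest witnesses $td(G)\le \ell$, and your induction on forest height, which splits a path at the root into at most two pieces, is a direct proof that $td(P_n)=\lceil \log_2(n+1)\rceil$ combined with subgraph monotonicity. Counting $\ell$ in vertices is in fact necessary for the upper bound: under the edge-count convention used elsewhere in the paper, $K_2$ has $\ell=1$ but $td(K_2)=2$. So you were right to fix that convention explicitly.
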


This allows us to show that {\sc $3$-Coloring} parameterized by treedepth is both twin reducible and has bounded depth.

\begin{theoremrep}\label{thm:3col-td-tlfpt-kern}
    {\sc $3$-Coloring} is in TLFPT parameterized by treedepth.
\end{theoremrep}
\begin{proof}
    Note that from Lemma~\ref{lem:pathtd}, the set of instances containing some path of length greater than $2^k$ is empty. As these instances are trivially negative. {\sc $3$-Coloring} parameterized by treedepth has negative $\ell$-bounded depth, where $\ell(k) = 2^k$.

    We now claim that $\tau$-twin reducible where $\tau(k) = 1$. Consider a graph $G$ and two vertices $u,v \in V(G)$ such that $u$ and $v$ are open twins in $G$. We claim that $G$ is $3$-colorable if, and only if, $G-v$ is $3$-colorable. As any $3$-coloring of $G$ also describes a $3$-coloring of $G-v$, the backwards direction holds. Suppose now we have some $3$-coloring of $G-v$ as $u$ and $v$ are open twins, there is some $3$-coloring of $G$ where $u$ and $v$ are assigned the same color. This proves our claim.
    
    As {\sc $3$-Coloring} parameterized by treedepth has $\ell$-bounded depth and is $\tau$-twin reducible, where $\ell(k) = 2^k$ and $\tau(k) = 1$. It follows from Theorem~\ref{thm:matchTwin} that {\sc $3$-Coloring} parameterized by treedepth admits a TLFPT time kernelization.
\end{proof}

\subsection{Combining Bounded Depth with Twin Subgraphs}\label{sec: bounded depth and twin subgraphs}

We now extend our methods based on bounded depth and twins by generalizing the notion of twin reducibility to \textit{twin subgraphs} (Definition~\ref{def: twin subgraphs}). We use this notion to generalize the equivalence relation defined for open twin vertices to larger isomorphic subgraphs, yielding the notion of \textit{twin subgraphs reducibility} (Definition~\ref{def:tau-twin-subgraph-reducible}).

\begin{defn}\label{def: twin subgraphs}
    We say that a pair subgraphs $H$ and $H'$ of a graph $G$ are twins if:
    \begin{enumerate}
        
        \item there is an isomorphism $i \colon H \to H'$ such that for any $v \in H$, the vertices $v$ and $i(v)$ have the same neighborhood outside of $H \cup H'$; that is $N(v) \setminus V(H \cup H') = N(i(v)) \setminus V(H \cup H')$,
        \item there is no edge $uv \in E(G)$ such that $u \in V(H)$ and $v \in V(H')$.
    \end{enumerate}
    We call $i$ a twin isomorphism.
\end{defn}

\begin{defn}\label{def:tau-twin-subgraph-reducible}
    Let $\Pi$ be a some parameterized graph problem. We say that $\Pi$ is $\tau$\emph{-twin subgraph reducible} for some function $\tau: \mathbb{N}^2 \to \mathbb{N}$ if the sets $YES(\Pi_k)$ and $NO(\Pi_k)$ are closed under the deletion of any subgraph $H$ of size $n$ such that $H$ has at least $\tau(k,n)$ twin subgraphs.
\end{defn}

As for Section~\ref{sec: Matchings and Twins}, the following result yields a general ``Depth and Prune'' method for obtaining TLFPT results by kernelization. Intuitively, we find a bounded height depth-first search tree. Careful use of data-structures allows us to work from the leaves upwards finding and removing subtrees which are twin subgraphs. Due to its bounded height alongside properties of depth first search trees, the resulting graph has bounded size.

\begin{theoremrep}[Depth \& Prune Method]\label{thm:matchTwinSubgraphs}
    Let $\Pi_k$ be some parameterized problem such that for functions $\ell: \mathbb{N} \to \mathbb{N}$ and $\tau: \mathbb{N}^2 \to \mathbb{N}$, $\Pi_k$ is $\tau$-twin subgraph reducible and has $\ell$-bounded depth. $\Pi_k$ admits a linear time kernelization.
\end{theoremrep}
\begin{proof}
	Let $\Pi_k$ be some parameterized graph problem which is both $\tau$-twin subgraph reducible and has $\ell$-bounded depth, for functions $\ell: \mathbb{N} \to \mathbb{N}$ and $\tau: \mathbb{N}^2 \to \mathbb{N}$.
    For any instance $G = (V,E)$ of $\Pi_k$ 
    
    we describe the following $O(|E|) + f(k)$ kernelization algorithm, for some fixed function $f$.
	
	Let $T$ be a depth-first-search tree of $G$ and denote its root by $root(T)$.
    
    We let $depth(T) = max_{v \in V} dist_T(root(T),v)$. If $depth(T) \geq \ell(k)+1$ then $G$ contains a path of length $\ell(k)+1$ and we return the graph $P_{\ell(k)+1}$. As $\Pi_k$ has $\ell$-bounded depth, this is a kernel for $G$. 
    
	We now assume that $depth(T) \leq \ell(k)$. Let $d=depth(T)$.
    Our kernelization algorithm will then proceed by removing subgraphs
    
    of size $r$ having at least $\tau(k,r)$ twin subgraphs. As $\Pi_k$ is size $\tau$- subgraph reducible, the resulting instance is a yes-instance of $\Pi_k$ if and only if $G$ is. To allow this to be done in the necessary time bounds, we will only remove those twin subgraphs admitting an isomorphism that preserves the parent child relations of $T$. Towards this, we define some notation regarding trees. For every $v \in V$, let $parent_T(v)$ denote the parent of $v$. Let $dec_T(v)$ be the set of descendants of $v$ and let $anc_T(v)$ be the set of ancestors of $v$. We write $dec_T[v] := dec_T(v) \cup \{v\}$. 

    We will now define the \textit{type} of a vertex. This will be an equivalence relation on pairs of vertices at the same depth in the tree $T$. Our definition will rest upon an auxiliary notion of \textit{ancestor consistency}, which we define below.
    
    Let $v$ be some vertex, the \textit{ancestor vector} of $v$, $\vec{a}$, is that binary vector of length $depth_T(v)-1$ such that the $i$th element of $\vec{a}$ is $1$ if, and only if, $v$ is adjacent to that vertex with depth $i$ in $anc_T(v)$. We say that a pair of vertices $(v,v')$ are ancestor consistent if they have the same ancestor vector. 
    Note that any two vertices with the same ancestor vector necessarily have the same depth.

    We are now define types recursively: two vertices $v$ and $v'$, both at some given depth $r$, are of the \underline{same type} if: they are ancestor consistent and have the same number of children of each type. 
    
    We illustrate the concept of types with a small example in \cref{fig:dfs-types}.

    \begin{figure}
        \centering
        \includegraphics[width=0.8\linewidth, page=8]{dck-ipefigs.pdf}
        \caption{An 9-vertex graph $H$ (left), and a DFS tree $T$ of $H$ (right). Edges of $H$ which are not edges of $T$ are shown in dashed light blue. In the DFS tree shown:
        \begin{itemize}
            \item $v$ and $t$ are of the same type, because they have the same ancestor vector, i.e. $\vec{a}(v)=\vec{a}(t)=(1)$ (both vertices are adjacent to $r$), and neither has any children. Call their type $\alpha$.
            \item Analogously, $w$ and $z$ are of the same type, because $\vec{a}(w)=\vec{a}(z)=(0)$. Call their type $\beta$.
            \item Each of $x, u, s, y$ has the same (empty) ancestor vector: however, seeing as $x$ and $y$ each have a single child of type $\beta$ whereas $u$ and $s$ each have a single child of type $\alpha$. Consequently, $x$ and $y$ are of the same type, which is different from the shared type of $u$ and $s$.
            \item The root $r$ has its own type. As we shall see, the type of $r$ ``encodes'' the entire structure of the graph.
        \end{itemize}}
        \label{fig:dfs-types}
    \end{figure}

    \begin{claim}
        Let $u$ and $v$ be vertices with the same type. If $parent_T(u) = parent_T(v)$, then $G[dec_T[u]]$ and $G[dec_T[v]]$ are twin subgraphs.
    \end{claim}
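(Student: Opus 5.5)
The plan is to build the twin isomorphism $i\colon G[dec_T[u]] \to G[dec_T[v]]$ explicitly, by induction on the height of the subtree rooted at $u$, and then check the two conditions of Definition~\ref{def: twin subgraphs}. First I will strengthen the claim to a statement that does not require a common parent. If $u$ and $v$ have the same type, then there is a bijection $i\colon dec_T[u] \to dec_T[v]$ with three properties:
\begin{itemize}
\item it is a rooted-tree isomorphism of the subtrees of $T$ rooted at $u$ and $v$;
\item it preserves depth in $T$;
\item for every $x \in dec_T[u]$, the vertices $x$ and $i(x)$ have the same ancestor vector.
\end{itemize}

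The construction is as follows. Since $u$ and $v$ have the same type, for each type $t$ they have the same number of children of type $t$. I pair these children off by type and, by induction, obtain an isomorphism for each pair of corresponding children's subtrees. Then $i$ is the union of these maps together with $u \mapsto v$. The base case is a leaf. There, equal type just means equal ancestor vectors, and the map $u \mapsto v$ works.

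Next I show that $i$ is a graph isomorphism between the induced subgraphs $G[dec_T[u]]$ and $G[dec_T[v]]$. The key fact about DFS trees is that every edge of $G$ joins an ancestor and a descendant in $T$. So an edge inside $G[dec_T[u]]$ joins some $x$ to a proper ancestor $y$ of $x$ with $y \in dec_T[u]$. Whether $xy$ is an edge is recorded by the bit of $x$'s ancestor vector at position $depth(y)$. Since $i$ is a tree isomorphism, $i(y)$ is the ancestor of $i(x)$ at that same depth. Since $x$ and $i(x)$ have the same ancestor vector, $i(x)i(y)$ is an edge exactly when $xy$ is. This gives edge preservation in both directions.

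Then I check the two twin conditions, and here the common parent is used. For condition 2: $u$ and $v$ are distinct children of the same parent, so no vertex of $dec_T[u]$ is an ancestor or descendant of a vertex of $dec_T[v]$. By the DFS edge property there is therefore no edge between the two subgraphs. For condition 1, take $x \in dec_T[u]$. Any neighbour of $x$ outside $dec_T[u]\cup dec_T[v]$ cannot be a descendant of $x$, since those lie in $dec_T[u]$. So it is a proper ancestor of $u$, which is the same as a proper ancestor of $v$, because $u$ and $v$ share all proper ancestors. Such a neighbour sits at some depth $j < depth(u)$ and is a common ancestor of $x$ and $i(x)$. Its adjacency to $x$ and to $i(x)$ is recorded by bit $j$ of the two equal ancestor vectors, so $x$ and $i(x)$ have identical neighbourhoods outside $H \cup H'$.

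I expect the main obstacle to be the strengthened induction, specifically making sure that ancestor consistency is carried down to every descendant and not only to $u$ and $v$. This is where the recursive definition of type is essential: two children matched by type have equal ancestor vectors by definition, and so on recursively down the subtree. A second point to get right is the indexing of ancestor vectors. Depths must be measured from the root, so that a given bit position refers to the same depth for $x$ and $i(x)$; in the edge-preservation step it then refers to the corresponding ancestors $y$ and $i(y)$, and in condition 1 to the literally identical common ancestor.
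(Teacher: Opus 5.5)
Your proposal is correct and follows essentially the same route as the paper. Both arguments build a type-preserving, parent-preserving bijection $i\colon dec_T[u]\to dec_T[v]$ from the recursive definition of type. Both then use the fact that every edge of $G$ joins an ancestor--descendant pair of the DFS tree, together with equality of ancestor vectors, to carry internal edges and outside neighbours across. Your version is somewhat more careful than the paper's: you make the induction explicit without the common-parent hypothesis, and you verify condition~2 directly via the sibling argument.
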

    \begin{claimproof}
        As $u$ and $v$ have the same type, they have the same depth in $T$. Let $r = depth_T(u) = depth_T(v)$. By definition, every pair of vertices with the same type have the same number of children of each type. It follows that, as $u$ and $v$ have the same type, there is a  bijection $i: dec_T[u] \to dec_T[v]$ that preserves both vertex types and parent–child relations. That is, for every vertex $c \in dec_T[u]$, $c$ and $i(c)$ have the same type
        and $parent_T(c) = i^{-1}(parent_T(i(c))$.

        We now claim that $i$ is a twin isomorphism. That is, for every edge $(c,c') \in E(G)$ such that $c \in dec_T[u]$, either: \begin{enumerate*}
            \item $c' \in dec_T[u]$ and $(i(c),i(c')) \in E(G)$, or
            \item $c' \in V \setminus (dec_T[u] \cup dec_T[v])$ and $(i(c),c') \in E(G)$
        \end{enumerate*}. The case where $c \in dec_T[u]$ will follow symmetrically.
        
        As $T$ is a depth first search tree, $c'$ is that unique vertex in $anc_T(c)$ with depth $depth_T(c')$. As $anc_T(c) \cap dec_T[c] = \emptyset$, $c' \notin dec_T[v]$. We first consider case (i), that is, $c' \in dec_T[u]$. As $parent_T(w) = parent_T(i(w))$, for every $w \in dec_T[u]$, $i(c')$ is the unique vertex in $anc_T(i(c))$ with depth $depth_T(c')$. By definition of $i$, $c$ and $i(c)$ have the same type and so are also ancestor consistent. It follows that the edge $i(c)i(c') \in E(G)$. 
        
        We now consider consider case (ii), that is, $c' \in V \setminus (dec_T[u] \cup dec_T[v])$. Given that $parent_T(u) = parent_T(v)$, it follows that $c'$ is that unique vertex in $anc_T(i(c))$ with depth $depth_T(c')$. As $c$ and $i(c)$ are ancestor consistent, $i(c)i(c') \in E(G)$.
    \end{claimproof}

    We can partition the vertices of $G$ into \textit{equivalence classes} such that any pair of vertices belong to the same equivalence class if, and only if, they have the same type. For every equivalence class $C$, by definition there exist constants $r$ and $s$ such that every vertex $v$ in $C$ has depth $r$ and $|dec_T[v]| =s$. We say that the class $C$, and its corresponding type, has depth $r$, subtree size $s$ and capacity $\tau(k,s)$.

    We say $G$ is $\tau$-pruned with respect to $T$ if, for every vertex $v$ and equivalence class $C$, $children_T(v) \cap C \leq \tau(k,s)$ where $s$ is the subtree size of $C$.

    \begin{claim}
        There exist functions $eq: \mathbb{N}^3 \to \mathbb{N}$ and $maxSize: \mathbb{N}^2 \to \mathbb{N}$, such that if $G$ is $\tau$-pruned with respect to $T$, then: (1) for every $r \in \{1,\ldots, d\}$ there are at most $eq(k,d,r)$ equivalence classes with depth $r$ and (2) $G$ has size at most $maxSize(k,d)$.
    \end{claim}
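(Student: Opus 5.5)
We argue by induction on depth, from the deepest level $d$ upward to the root, defining $eq(k,d,r)$ and a bound $S(k,d,r)$ on the subtree size $|dec_T[v]|$ of any vertex at depth $r$. Both quantities are defined simultaneously, and the final bound will be $maxSize(k,d)$.

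\emph{Base case, $r=d$.} A vertex at depth $d$ has no children. Its type is therefore determined entirely by its ancestor vector, a binary vector of length at most $d$. This gives $eq(k,d,d)\le 2^{d}$ and $S(k,d,d)=1$.

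\emph{Inductive step.} Suppose the bounds hold at depth $r+1$, and consider a vertex $v$ at depth $r$.
\begin{itemize}
\item Its type is determined by two pieces of data: its ancestor vector (at most $2^{r}$ choices) and the multiset of types of its children.
\item Every child lies at depth $r+1$, so there are at most $eq(k,d,r+1)$ possible child types.
\item Each child type $C$ has subtree size $s_C\le S(k,d,r+1)$.
\item Since $G$ is $\tau$-pruned, $v$ has at most $\tau(k,s_C)$ children of type $C$. Let $t(k,d,r+1)=\max_{s\le S(k,d,r+1)}\tau(k,s)$. This is a function of $(k,d,r)$ only, and it is well defined because the maximum is over finitely many values. (Alternatively, take $\tau$ monotone, as in Lemma~\ref{lem:monotone-increasing-funcs-tau}.)
\item Hence the number of children of each type ranges over $\{0,\dots,t\}$.
\end{itemize}
This yields
\[eq(k,d,r)\le 2^{r}\,(t(k,d,r+1)+1)^{eq(k,d,r+1)}\]
and
\[S(k,d,r)\le 1+eq(k,d,r+1)\,t(k,d,r+1)\,S(k,d,r+1).\]
Both bounds depend only on $k$, $d$ and $r$, which proves part (1).

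\emph{Part (2).} First handle the roots. The DFS may produce a forest, one tree per component, with roots at depth $0$ (or $1$). To cover this, add a virtual super-root $\rho$ above all the roots, adjacent to nothing. Its children then play the role of the top level, and the $\tau$-pruned condition applied at $\rho$ bounds how many roots of each type remain. (If $T$ is a single tree, this step is unnecessary.) Since $\rho$ sits one level above the roots, we set $maxSize(k,d)=S(k,d,\cdot)$ evaluated at the root level, with this extra level added. The number of edges is at most $\binom{maxSize}{2}$, so the size of $G$ is also bounded.

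The main obstacle is bookkeeping rather than any deep difficulty. First, the capacity $\tau(k,s)$ depends on the subtree size $s$, so the size bound and the class-count bound must be carried together in the induction. Second, the indexing conventions for depth and for the length of the ancestor vector must be kept consistent across the recursion, so we fix them at the start: depth $0$ at the roots, and ancestor vectors of length equal to the depth.
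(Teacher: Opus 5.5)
Your proof is correct and follows essentially the same route as the paper. It runs a bottom-up recursion on depth that carries the class count and the subtree-size bound together, uses the $\tau$-pruned condition to cap the number of children of each type, and reads off $maxSize$ from the subtree of the root.

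The differences are only in bookkeeping. You cap all capacities at depth $r+1$ uniformly by $t=\max_{s\le S}\tau(k,s)$ rather than using the paper's exact product $\prod_i(m_i+1)$ over types, which spares you from treating types as abstract objects independent of $G$. Your super-root handling of a DFS forest goes beyond the paper, which tacitly takes $T$ to be a single tree. Note, though, that the cap at $\rho$ extends the definition of $\tau$-pruned rather than following from it.
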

    \begin{claimproof}
        We define the function $eq$ recursively. In the process of this, we also determine the subtree size associated with each type, which will later allow us to define $maxSize$.

        Consider first a vertex $v$ of depth $d$. By definition, $v$ is a leaf, and so its type is determined only by its ancestor vector. Since there are $2^{d-2}$ possible ancestor vectors for a vertex of depth $d$, it follows that $eq(k,d,d) = 2^{d-2}$. Moreover, every type at depth $d$ has subtree size $1$.

        We now define $eq(k,d,r)$ for $r \in \{1,\ldots,d-1\}$. We first fix some total ordering of the $eq(k,d,r+1)$ types of vertices at depth $r+1$, and let $m_i$ and $s_i$ denote the capacity and subtree size, respectively, of the $i$th type of vertex with depth $r+1$. Since $G$ is $\tau$-pruned, any vertex of depth $r$ has at most $m_i$ children of type $i$. That is, for each type $i$, there are $m_i+1$ possible choices for the number of children of type $i$, and hence the total number of distinct such choices is $\prod_{i=1}^{eq(k,d,r+1)} (m_i+1)$.

        Further,  there are $2^{r-1}-1$ possible ancestor vectors for a vertex of depth $r$. Combining these observations, $eq(k,d,r) = (2^{r-1} -1)\cdot \prod_{i=1}^{eq(k,d,r+1)} (m_i + 1)$.
        
        Let $v$ be a vertex of depth $r$ with exactly $c_i$ children of type $i$ for each $i \in \{1,\ldots,eq(k,d,r+1)\}$. Recall that the $i$th type of vertex with depth $r+1$ has subtree size $s_i$, that is, for every vertex $w$ with this type $dec_T[w] = s_i$. That is, the subtree size of the equivalence class containing $v$ is $1 + \sum_{i=1}^{eq(k,d,r+1)} c_i \cdot s_i$. 
        
        Finally, we define $maxSize(k,d)$. We again fix a total ordering of the $eq(k,d,2)$ types of vertices at depth $2$, and let $m_i$ and $s_i$ denote the capacity and subtree size, respectively, of the $i$th type. Then $G$ has size at most $maxSize(k,d) = \sum_{i=1}^{eq(k,d,2)} m_i \cdot s_i$.
    \end{claimproof}

     We define the \textit{$\tau$-pruning} of $G$ as follows. Beginning with the leaves, we process vertices in order of decreasing depth. For a vertex $v$, if there exist at least $\tau(k, |dec_T[v]|)$ vertices of the same type as $v$ and share the same parent in $T$, then remove the entire subtree $dec_T[v]$ from the graph.

    As $\Pi_k$ is $\tau$-twin subgraph reducible, the $\tau$-pruning of $G$ is a kernel for $G$. It remains to show that the $\tau$-pruning of $G$ can be computed in time $O(|E(G)|) + f(k)$, for some function $f$ depending only on $k$.

     Our algorithm will, for each vertex in $G$, return a ID corresponding to its type in the $\tau$-pruning of $G$ (supposing it has not been removed). That is, there is a bijection between the set of IDs assigned to vertices of $G$ and the equivalence classes of the $\tau$-pruning of $G$. The ID of $v$ consists of two parts. The first part, which we call the \textit{ancestor ID}, encodes the ancestor vector of $v$ and the second, which we call the \textit{subtree ID}, encodes the number of children $v$ has of each type. To build intuition regarding the subtree ID, we begin by defining the \textit{subtree vector} of $v$.

     For each $r \in \{1,\ldots,d\}$, we fix a total order on the $eq(k,d,r$) equivalence classes with depth $r$, and refer to the $i$th class under this order as the $i$th equivalence class with depth $r$. 
     
     The subtree vector of $v$, $\vec{s}$, is a vector of length $eq(k,d,depth_T(v)-1)$ with each element of $\vec{s}$ corresponding to the number of children that $v$ has in each equivalence class. For $i \in \{1, \ldots, eq(k,d,depth_T(v)-1)\}$, let $C_i$ be the $i$th equivalence class with depth $depth_T(v)-1$ and let $m_i$ be the capacity of $C_i$. Let $$\vec{s}[i] := min(|children_T(v)\cap C_i|, m_i).$$

     For each $v \in V$, let $\vec{a}_v$ denote the ancestor vector of $v$ and let $\vec{s}_v$ denote the subtree vector of $v$. Since $\sum_{v\in V} |\vec{a}_v|+|\vec{s}_v| = f(k) \cdot |V|$, for some function $f$, it is not possible to naively allocate the memory for these vectors in TLFPT time. Instead, as both the length of these vectors and the value of their elements is bounded by some function of $k$, Lemma~\ref{lem:words-big-as-fn-of-k} will allow us to assume that $k$ is small enough that each vector can be encoded within a single machine word of size $\log n$. As we assign two words for each vertex in $G$ this requires only time $O(|V|)$.

     We now explicitly define the encoding of these vectors and give a bound on $k$ which will be sufficient for our purposes.
     The ancestor ID, $ancID(v) := \sum_{i \in \{1, \ldots, |\vec{a}_v|\}} \vec{a}_v[i] \cdot 2^i$. As $\vec{a}_v$ is a binary vector whose final element is $1$, the ancestor vector is uniquely determined by $ancID(v)$ and $ancID(v) \leq 2^{depth_T(v)-1}+1$. 
     Likewise, let $m_{max}$ denote the maximum capacity of any equivalence class of depth $depth_T(v)+1$, we define the subtree ID of $v$ by $subtreeID(v) := \sum_{i \in \{1, \ldots, |\vec{s}|\}} \vec{s}[i] \cdot (m_{max}+1)^i$. This construction results in a bijection between the possible subtree vectors and the subtree IDs that a vertex with depth $r$ could take. Further, $subtreeID(v) \leq (m_{max}+1)^{eq(k,d,depth_T(v)+1)}$. 
     We then let the ID of $v$ be $ID(v) = ancVec(v) + subtreeID(v) \cdot 2^{|\vec{a}|}$. If $m_{root}$, is the maximum capacity of any equivalence class of depth $2$, then for every vertex $v$, $ID(v) \leq (m_{root}+1)^{eq(k,d,2)}$. That is, applying Lemma~\ref{lem:words-big-as-fn-of-k} we assume that $(m_{root}+1)^{eq(k,d,2)} \leq \log n$ and so operations on these IDs can be done in constant time.

    Recall that the ID of $root(T)$ encodes the $\tau$-pruning of $G$. We can now begin to compute the ID of each vertex of $G$. To this end, we introduce the following auxiliary functions, which we precompute and store in arrays so as to allow constant-time access thereafter.

    Let $subtreeIDtoVec: \mathbb{N}^4 \to \mathbb{N}^{\mathbb{N}}$ be the function such that a vertex with depth $r$ and subtree ID $i$ has subtree vector $subtreeIDtoVec(k,d,r,i)$. Let $IDtoType: \mathbb{N}^3 \to \mathbb{N}$ be the function such that a vertex with ID $i$ belongs to the $IDtoType(k,d,i)$th equivalence class (with the same depth). Let $typetoCapacity: \mathbb{N}^4 \to \mathbb{N}$ be that function such that the $i$th equivalence class with depth $r$ has capacity $typetoCapacity(k,d,r,i)$.
    
    Since these functions will always be taken with the same first two arguments, $k$, $d$, to simplify notation we suppress these parameters in the following. We compute the value $eq(k,d,r)$, for every $r \in \{1, \ldots, d\}$ and store this in an array $eq$ such that $eq[r] = eq(k,d,r)$. Each of these values can thereafter be accessed in constant time. Likewise, we compute the vector $subtreeIDtoVec(k,d,r,i)$, for each possible subtree ID $i$ which could appear at depth $r$, and store this in a (two dimensional) array $subtreeIDtoVec$ such that $subtreeIDtoVec[r,i] = subtreeIDtoVec(k,d,r,i)$. We compute the value $IDtoType(k,d,i)$, for each possible ID $i$ and store this in an array $IDtoType$ such that $IDtoType[i] = IDtoType(k,d,i)$. We compute the value $typetoCapacity(k,d,r,i)$, for every $i \in \{1, \ldots, eq(k,d,r)\}$ and store this in a (two dimensional) array $typetoCapacityt$ such that $typetoCapacity[r,i] = typetoCapacity(k,d,r,i)$. Finally, we compute the array $maxCapacity$ such that $$maxCapacity[r] = max_{i \in\{1, \ldots, eq(k,d,r)\}} typetoCapacity[r,i].$$  
    By definition $d \leq \ell(k)$, that is these arrays can be constructed in time $f(k)$, for some fixed function $f$.

    \begin{algorithm}
        \caption{id(v)}\label{alg:extDFS}
        \begin{algorithmic}[1]
        \Require Graph $G$; DFS tree $T$ of $G$ with depth $d$; some vertex $v$ with depth $r$ in $T$; arrays $eq$, $subtreeIDtoVec$, $IDtoType$, $typetoCapacity$, $maxCapacity$.
        \State Let $\vec{f}_a$ be a vector of length $r-1$ such that $\vec{f}_a[i]=2^i$ for each $0 \le i < r-1$.
        \State Let $\vec{f}_s$ be a vector of length $eq(r+1)$ such that $\vec{f}_s[i]=(maxCapacity[r+]+1)^i$ for each $0 \le i < eq[r+1]$.        
        \Comment{We construct $\vec{f}_a$ and $\vec{f}_s$ only once for each depth $r \in \{1, \ldots, d\}$.}

        \State Let $ancID = 0$.
        \State Let $subtreeID = 0$.
        \ForAll{$u \in N(v)$}
            \If {$depth_T(u) = r +1$} \Comment{$u$ is a child of $v$}
                \State $subtreeVec = subtreeIDtoVec[r, subtreeID]$
                \State $type_u = IDtoType[r+1,id(u)]$
                \If{$subtreeVec[type_u] \leq typetoCapacity[r+1,type_u]-1$}
                    \State $subtreeID = subtreeID + \vec{f}_s[type_u]$ \Comment{increment count of children of $type_u$}
                \EndIf
            \EndIf
            \If {$depth_T(u) < r-1$} \Comment{$u$ is an ancestor of $v$} 
            	\State $ancID = ancID + \vec{f}_a[depth_T(u)]$.
            \EndIf
        \EndFor
        \State \Return $ancID + subtreeID(2^{r-1})$
        \end{algorithmic}
    \end{algorithm}

    We note that for every $v \in V$, Algorithm~\ref{alg:extDFS} returns $ID(v)$. Further, Algorithm~\ref{alg:extDFS} is called on each vertex of $V$ at most once (assuming memoization). A single iteration of $id(v)$ requires $O(|N(v)|)$ time, that is $id(root(T))$ can be computed in $O(|E|)$ time. It follows that in total our kernelization algorithm takes $O(|E|)+f(k)$, for some function $f$.
\end{proof}

We now turn to three classical problems: {\sc $k$-Path},
{\sc $k$-Vertex Ranking},
and {\sc Satisfiability}.

\begin{framed}
    \noindent
    \textsc{$k$-Path}\\
    \textbf{Input:} graph $G=(V,E)$,\\
    \textbf{Parameter:} $k$,\\
    \textbf{Question:} does $G$ contain $P_k$ as a subgraph?
\end{framed}

\begin{theorem}\label{thm:kpath-tlfpt-kern}
    {\sc $k$-Path} admits a TLFPT time kernelization.
\end{theorem}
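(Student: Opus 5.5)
The plan is to apply the Depth \& Prune Method (\Cref{thm:matchTwinSubgraphs}). This requires two properties: that \textsc{$k$-Path} has positive $\ell$-bounded depth, and that it is $\tau$-twin subgraph reducible for a suitable $\tau$. Linear time kernelization then follows, and by \cref{cor:kern-iff-tlfpt-reduction-iff-tlfpt} this is equivalent to a TLFPT time kernelization.

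Bounded depth is immediate. If $G$ contains a path of length greater than $\ell(k)=k$, it contains $P_k$ as a subgraph, so the instance is positive. Hence \textsc{$k$-Path} has positive $k$-bounded depth. Concretely, the DFS tree found by the method either has depth exceeding $k$, giving a yes-instance (returned as a fixed path), or has depth at most $k$.

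The main step is twin subgraph reducibility. I would take $\tau(k,s)=k+1$ (independent of $s$) and show the following. Let $H$ have twin subgraphs $H_1,\dots,H_{k+1}$ with twin isomorphisms $i_j\colon H\to H_j$. Then $G$ contains $P_k$ if and only if $G-V(H)$ does.

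The backward direction is trivial, since a subgraph of $G-V(H)$ is a subgraph of $G$. For the forward direction, take a path $P$ on $k$ vertices in $G$ that meets $V(H)$. Since $P$ has only $k$ vertices, at least one twin $H_j$ is disjoint from $V(P)$. Replace each vertex $v\in V(P)\cap V(H)$ by $i_j(v)$. Edges of $P$ inside $H$ map to edges inside $H_j$ because $i_j$ is an isomorphism. An edge $vw$ with $v\in H$ and $w\notin H$ also survives: by condition 2 of \Cref{def: twin subgraphs}, $w\notin H_j$, and by the hypothesis on $j$, $w\notin H_j$ anyway, so $w$ lies outside $H\cup H_j$ and $i_j(v)w$ is an edge by condition 1. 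The new walk is injective because $H_j$ was disjoint from $P$ and $i_j$ is injective. It therefore gives a copy of $P_k$ avoiding $H$.

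The subtle point is that twin relations are pairwise: $H_j$'s neighbourhood agreement with $H$ is only stated outside $H\cup H_j$. Vertices of $P$ lying in other twins $H_{j'}$ cause no problem, since they are outside $H\cup H_j$. Finally, I would note that the deletion in \Cref{def:tau-twin-subgraph-reducible} preserves both YES and NO membership, as the equivalence is an ``iff''. Combining the two properties with \Cref{thm:matchTwinSubgraphs} yields the kernel.
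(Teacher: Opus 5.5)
Your proposal is correct and follows the same route as the paper: show that \textsc{$k$-Path} has bounded depth and is twin subgraph reducible, then invoke the Depth \& Prune Method (\Cref{thm:matchTwinSubgraphs}). The only difference is in the reducibility step. The paper claims $\tau=\lfloor k/2\rfloor$ by bounding how many pairwise non-adjacent twins a short path can meet, and leaves the rerouting implicit. You take the generous $\tau(k,s)=k+1$ and spell out the rerouting through an untouched twin via $i_j$. Since any function $\tau$ suffices, this costs nothing. It also avoids an off-by-one in the paper's count: for $k=3$, the path $a$--$w$--$b$ with $a,b$ open twins shows that $\lfloor k/2\rfloor$ is too small.
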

\begin{proof}
	We first claim that {\sc $k$-Path} parameterized by $k$ is $\lfloor \frac{k}{2} \rfloor$-twin subgraph reducible and has $k-1$-bounded depth.

	Let $(G, k)$ be an instance of {\sc $k$-Path}. By definition, if $G$ contains a path of length $k$ then $(G,k)$ is a yes-instance of {\sc $k$-Path}, that is {\sc $k$-Path} has $k-1$-bounded depth.

    We now claim that {\sc $k$-Path} is $\lfloor \frac{k}{2} \rfloor$-twin subgraph reducible. To show this, let $S$ be a set of pairwise twin subgraphs, let $P$ be some path of length $k$ in $G$ and let $Z \subseteq S$ be those subgraphs containing some vertex from $P$. By definition, any path between distinct subgraphs $H, H' \in Z$ must contain some vertex in $G \setminus S$. It follows that $|Z| \leq \lfloor \frac{k}{2} \rfloor$ meaning {\sc $k$-Path} is $\lfloor \frac{k}{2} \rfloor$-twin subgraph reducible.
\end{proof}

A \textit{$t$-ranking} of a graph $G$, for some integer $t$, is a $t$-coloring of $G$ such that, for every pair of vertices $x$ and $y$ with $c(x) = c(y)$ and for every path between $x$ and $y$, there is a vertex $z$ on this path with $c(z) > c(x)$. The rank of $G$, is the smallest $t$ for which $G$ admits a $t$-ranking.
The \textsc{$k$-Vertex Ranking} problem is defined as follows.

\begin{framed}
    \noindent
    \textsc{$k$-Vertex Ranking}\\
    \textbf{Input:} graph $G=(V,E)$,\\
    \textbf{Parameter:} $k$,\\
    \textbf{Question:} does there exist a $k$-ranking of $G$?
\end{framed}

The \textsc{$k$-Vertex Ranking} problem is also interesting as it is known that the treedepth of a graph is equal to its vertex ranking number and also to the minimum height of any elimination tree of $G$ (see \cite{NOdM12}, Chapter 6 for a fuller discussion of treedepth). It follows that, by showing $k$-\textsc{Vertex Ranking} is in TLFPT, the same also holds for computing treedepth.

We also show that \cref{thm:matchTwinSubgraphs} can be leveraged for problems over objects which are not necessarily graphs. We showcase this with the classical problem of boolean {\sc Satisfiability}, which asks whether a given Boolean formula $\psi$ in conjunctive normal form (CNF) is satisfiable. In our choice of parameter, we aim to capture the structural properties of an instance, particularly the relationships between its variables and clauses. We do this by encoding the formula as a graph. The \textit{incidence graph}, $I(\psi)$, of a formula $\psi$, is that bipartite graph such that one part contains the variables of $\psi$ and the other part contains the clauses of $\psi$. Two vertices are adjacent to one another in $I(\psi)$ if one of these vertices corresponds to a clause and the other corresponds to some variable that appears in this clause. We will focus in particular on the treedepth of the incidence graph, which we call its incidence treedepth.
\begin{toappendix}

\begin{thm}\label{thm:treedepth-tlfpt-kern}
    {\sc $k$-Vertex Ranking} admits a TLFPT time kernelization.
\end{thm}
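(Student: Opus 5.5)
We apply the Depth \& Prune Method (\cref{thm:matchTwinSubgraphs}). This requires two facts about \textsc{$k$-Vertex Ranking} parameterized by $k$: that it has negative $\ell$-bounded depth for some $\ell$, and that it is $\tau$-twin subgraph reducible for some $\tau$.

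\textbf{Bounded depth.} The rank of $G$ equals its treedepth. By \cref{lem:pathtd}, $td(G)\ge \log(\ell)$ whenever $G$ contains a path of length $\ell$. So any graph with a path of length greater than $2^k$ has treedepth greater than $k$, and the instance is a no-instance. This gives negative $\ell$-bounded depth with $\ell(k)=2^k$.

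\textbf{Twin subgraph reducibility.} We aim for $\tau(k,n)=1$: deleting a subgraph $H$ that has at least one twin $H'$ preserves the answer.
\begin{itemize}
\item One direction is immediate. The restriction of a $k$-ranking of $G$ to $G-H$ is still a $k$-ranking, since every path in $G-H$ is also a path in $G$.
\item For the other direction, take a $k$-ranking $c$ of $G-H$, where $H'$ is a twin of $H$ with twin isomorphism $i$, and $H'$ survives in $G-H$. Define $c(v):=c(i(v))$ for each $v\in V(H)$.
\end{itemize}
We must check the ranking condition for every pair $x,y$ with $c(x)=c(y)$ and every path between them in $G$. Given such a path $P$, map it to a walk in $G-H$ by replacing each vertex $v\in V(H)$ with $i(v)$. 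This is a valid walk for two reasons. Edges inside $H$ map to edges inside $H'$ because $i$ is an isomorphism. Edges from $H$ to the outside go to vertices outside $H\cup H'$ (there are no $H$--$H'$ edges), and $i$ preserves those outside neighbourhoods. The colours along the walk are exactly the colours along $P$.

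The main obstacle comes next: the image walk may repeat vertices, or may join $x$ to $y'$ where $y'=i(y)$ coincides with $x$ or with another vertex of the same colour. I would handle this as follows. If the image endpoints are distinct, shorten the walk to a path inside it. That path uses a subset of the colours of $P$, so it contains a vertex of colour larger than $c(x)$, and that vertex already lies on $P$. If the image endpoints coincide, say $y\in H$ with $i(y)=x$, or the reverse, then $P$ is a closed walk in the image and the shortening argument fails. For this case I would use a cleaner formulation of rankings: $c$ is a ranking iff for every $t$, each connected component of $G[\{v:c(v)\le t\}]$ contains a unique vertex of the maximum colour in that component. The walk-mapping shows the projection $v\mapsto i(v)$ is a homomorphism of $G$ onto $G-H$ that fixes everything outside $H$. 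I would then argue componentwise. Suppose a component $C$ of $G[c\le t]$ contains two vertices of the maximum colour $t$. If both lie in $H$, or both outside $H$, projecting gives a contradiction in $G-H$. If $x\in H$ and $y\notin H$, the path between them must leave $H$ through vertices outside $H\cup H'$, so $y$ and $i(x)$ lie in a common component of $(G-H)[c\le t]$. A contradiction follows unless $y=i(x)$. In that remaining case the connecting path leaves $H$ through an external neighbour $w$ of colour at most $t$. Then $w$ is adjacent to both $x$ and $i(x)$, and its component in $G-H$ still contains only $i(x)$ at colour $t$. To rule this configuration out I would first modify the extension: give $H$ the colours of $H'$ shifted or reflected, or instead assume the stronger threshold $\tau(k,n)=k+1$. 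With $k+1$ twin copies, some copy contains no vertex achieving the top colour of a given component, so it can be used for the extension. This is where I expect the real work, and I would fall back to the larger $\tau$ if the $\tau=1$ argument breaks.

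With both properties established, \cref{thm:matchTwinSubgraphs} yields a linear time kernelization, so \textsc{$k$-Vertex Ranking} is in TLFPT by \cref{cor:kern-iff-tlfpt-reduction-iff-tlfpt}.
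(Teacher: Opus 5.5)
Your depth bound is fine, but the twin-subgraph reducibility, which is the whole substance of the theorem, is not established. Your primary target $\tau(k,n)=1$ is false, so no choice of extension (shifted, reflected or otherwise) can rescue it. Take $C_4$ with vertices $a,u,b,v$ in cyclic order. Then $\{u\}$ and $\{v\}$ are twin subgraphs, and $C_4-v\cong P_3$ has rank $2$, but $C_4$ has rank (treedepth) $3$. So for $k=2$, deleting a subgraph with one twin turns a no-instance into a yes-instance. This is exactly your leftover configuration $y=i(x)$ with a common neighbour $w$.

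The fallback ``$\tau(k,n)=k+1$'' is only asserted, and the one-line reason you give does not prove it, for three reasons. First, the components that matter depend on the colouring you put on $H$. Second, a single copy has to be good simultaneously for every level $t$ and every component of $(G-H)[c\le t]$ that meets the common outside neighbourhood $N$ of the copies. Third, you never bound $|N|$, so you have no bound on how many copies can be obstructed.

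Two ingredients are missing. The first is a bound on $N$. If $H$ has $m\ge k$ connected twins in $G-H$, contracting them exhibits a $K_{|N|,m}$ minor. Since the rank of $K_{s,t}$ is $\min(s,t)+1$, this forces $|N|\le k-1$.

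The second is a colouring of $H$ that works against every violating path at once. The paper gets this by pigeonhole over the at most $k^n$ colourings of an $n$-vertex copy. With $(k-1)k^n+1$ twins, at least $k$ of them are coloured identically by $c$, and $H$ receives that colouring. A violating path can only pass between copies through $N$, so it meets at most $|N|+1\le k$ copies, one of which is $H$. Hence it misses some identically coloured copy $X$, and replacing its $H$-vertices by their images in $X$ gives a violating path in $G-H$, a contradiction.

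Alternatively, your component formulation can be completed. Call a copy bad if, for some $t$, one of its colour-$t$ vertices lies in a component of $(G-H)[c\le t]$ that meets $N$. Each such pair (level, component) contains at most one colour-$t$ vertex, since $c$ is a ranking of $G-H$. So at most $k|N|\le k(k-1)$ copies are bad. Copying the colouring of a good copy onto $H$ then survives your case analysis, including the case $y=i(x)$. Either way, an argument of this kind has to be supplied.
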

\begin{proof}
    We first claim that {\sc $k$-Vertex Ranking} parameterized by $k$ has $2^{k+1}$-bounded depth and is $\tau$-twin subgraph reducible where $\tau(k,i)= (k-1) \cdot k^{i} +1$ for $i \geq 1$. The class of graphs which admit a $k$-ranking are closed under taking minors (Lemma 6.2 in \cite{NOdM12}) and a path of length $l$ (on $l+1$ vertices) has rank $\lceil \log_2(l+2) \rceil$ (Eqn. 6.2 of \cite{NOdM12}).

    Let $(G, k)$ be an instance of {\sc $k$-Vertex Ranking}. If $G$ contains a path of length $2^{k+1}$, then $(G,k)$ has rank at least $k+1$, hence, {\sc $k$-Vertex Ranking} has $2^k$-bounded depth. We now claim that {\sc $k$-Vertex Ranking} is $\tau$-twin subgraph reducible where $\tau(k,i)= (k-1) \cdot k^{i} +1$ for $i \geq 1$. Let $S$ be some subgraph and let $T$ be a set such that $T \cup \{S\}$ is a set of pairwise twin subgraphs. By definition of twin subgraphs, each of these subgraphs has the same size, say they each have size $n$. We now claim that if $|T| = (k-1) \cdot k^{n}+1$, then there is a $k$-ranking of $G - V(S)$ if, and only if there is a $k$-ranking of $G$.

    If there is some $k$-ranking of $G$, by definition, this is also a $k$-ranking of $G - V(S)$. Let us now consider the backwards direction. Suppose that there is some $k$-ranking $c$ for $G - V(S)$. We now describe a coloring $c'$ of $G$ and claim that this is a $k$-ranking. Towards this, we say that a pair of subgraphs $X,Y \in T \cup \{S\}$ have the same coloring under $c$, if that neighborhood preserving isomorphism between $X$ and $Y$ also preserves the colors of $c$. As there are $k^{n}$ possible colorings of the subgraphs of $T \cup \{S\}$ and $|T| = (k-1) \cdot k^{|S|}+1$, it follows that there is some set $E \subseteq T$ such that $|E| \geq k$ and the subgraphs of $E$ each have the same coloring under $c$. We obtain $c'$ from $c$ by coloring the vertices of $S$ such that, $S$ has the same coloring as every subgraph in $E$ under $c'$.

    We now claim that $c'$ is a $k$-ranking of $G$. Assume for the sake of contraction that $c'$ is not a $k$-ranking of $G$, that is, there is some path $P = (p_1, \ldots, p_\ell)$ such that $c'(p_1)=c'(p_\ell)$ and, for every $i \in \{1, \ldots, \ell\}$, $c'(p_i) \leq c'(p_1)$. As $c$ is a $k$-ranking of $G -V(S)$, $P$ is not a path in $G -V(S)$. It follows that $V(S) \cap V(P) \neq \emptyset$.
    
    We claim that $P$ contains vertices from at most $k$ different subgraphs in $T \cup \{S\}$. This will allow us to reason that there is some subgraph $X$ in $E$ which does not contain any vertices in $P$. This will allow us to map the path $P$ to a path $P' = (p'_1, \ldots, p'_\ell)$ such that $V(S) \cap V(P') = \emptyset$ and $c'(p'_i) = c'(p_i)$, for every $i \in \{1, \ldots, \ell\}$. As $P'$ is a path in $G- V(S)$ we can conclude that $c$ was not a $k$-ranking of $G- V(S)$. 

    We will now show that $P$ contains vertices from at most $k$ different subgraphs in $T \cup \{S\}$. By definition of twin subgraphs, for every $S' \in T$, $N(S) \setminus V(S) = N(S') \setminus V(S')$. Let $N$ denote the set $N(S) \setminus V(S)$.
    It follows that $G - V(S)$ contains the complete bipartite graph $K_{|N|, |T|-1}$ as a subgraph. Given that $K_{|N|, |T|-1}$ has rank $\min (|N|, |T|-1)+1$ (this is a folklore result with an easy proof). 
    As $|T| > k$ and $G - V(S)$ has rank at most $k$, it follows that $|N| \leq k-1$. Combining this with the fact that, by definition of twin subgraphs, the subgraphs of $T$ are connected components in $G - N$, we find that $P$ contains vertices from at most $k$ different subgraphs in $T$. That is, there is some subgraph $X \in E$ which does not contain any vertices from $P$.

    We are now ready to construct that path $P'$. Let $i: V(S) \to V(X)$ be that neighborhood preserving isomorphism between $S$ and $X$, by definition this isomorphism also preserves the color of vertices. Let $P' = (p'_1, \ldots, p'_\ell)$ be the path such that for $j \in \{1, \ldots, \ell\}$, $p'_j = i(p_j)$, if $p_j \in V(S)$, and $p'_j = p_j$ otherwise. As $P'$ contains no vertices from $S$ and $c'(p_j)=c'(p'_j)$ for each $j \in \{1, \ldots, \ell\}$, it follows that $c$ was not not a $k$-ranking $G \setminus S$. As this is a contradiction, we find that $c'$ is a $k$-ranking of $G$. 
    
    It follows that there is a $k$-ranking of $G \setminus S$ if, and only if there is a $k$-ranking of $G$ and so {\sc $k$-Vertex Ranking} is $\tau$-twin subgraph reducible where $\tau(k,i)= (k-1) \cdot k^{i} +1$ for $i \geq 1$.
\end{proof}
\end{toappendix}

\begin{toappendix}
    
We now give the definition of \textsc{Satisfiability}.

\begin{framed}
    \noindent
    \textsc{Satisfiability}\\
    \textbf{Input:} Boolean formula $\psi$ in conjunctive normal form (CNF).\\
    \textbf{Question:} is $\psi$ satisfiable?
\end{framed}

 Many parameterizations of SAT aim to capture the structural properties of an instance, particularly the relationships between its variables and clauses. A common way to express this structure is to encode the formula as a graph and analyze structural parameters of that graph. Notable examples include the treewidth of the associated graph, as well as the parameter known as \textit{maximum deficiency}, as introduced in \cite{franco2003perspective}, which is based on matchings in this graph.

There are several ways to encode a SAT instance as a graph.

\apxblue{Recall that the \textit{incidence graph}, $I(\psi)$,  of $\psi$ is that bipartite graph such that one part contains the variables of $\psi$ and the other part contains the clauses of $\psi$. Two vertices are adjacent to one another in $I(\psi)$ if one of these vertices corresponds to a clause and the other corresponds to some variable that appears in this clause. We will focus on the incidence graph and in particular the treedepth of the incidence graph, which we call its incidence treedepth.}

\end{toappendix}

\begin{toappendix}
    
It will be useful to consider the following variant of the incidence graph which we will call the \textit{literal incidence graph}. For a CNF formula $\psi$, the literal incidence graph, $I_L(\psi)$, of $\psi$ is that bipartite graph such that one part contains the clauses of $\psi$ and one part contains the literals of $\psi$. That is, for each variable $x$ in $\psi$, $I_L(\psi)$ contains a pair of vertices, one corresponding to $x$ and the other to $\neg x$. The following lemma will allow us to work on the literal incidence graph.

\begin{lemma}\label{lem:SAT-var-to-lit}
	For every CNF formula $\psi$, $td(I_L(\psi)) \leq 2 \cdot td(I(\psi))$. 
\end{lemma}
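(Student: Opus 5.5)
We take an optimal elimination forest $F$ for $I(\psi)$, of height $t = td(I(\psi))$ with $I(\psi) \subseteq \text{clos}(F)$, and build from it a rooted forest $F_L$ of height at most $2t$ with $I_L(\psi) \subseteq \text{clos}(F_L)$. The idea is to split each variable vertex into its two literals and stack them vertically, so each root-to-leaf path at most doubles in length.

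The construction proceeds in three steps.
\begin{enumerate}
\item Each clause vertex $C$ of $F$ is kept as a single vertex.
\item Each variable vertex $x$ of $F$ is replaced by a chain of two vertices: $x$ on top and $\neg x$ as its unique child.
\item Each child of $x$ in $F$ is reattached as a child of $\neg x$.
\end{enumerate}
Formally, $F_L$ is obtained from $F$ by subdividing, below every variable node, the edges to its children with a new node labelled $\neg x$ (adding $\neg x$ as a leaf child if $x$ had no children).

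Any root-to-leaf path in $F_L$ is the image of a root-to-leaf path in $F$ in which every variable vertex contributes two vertices and every clause vertex contributes one. Hence the height is at most $2t$. In fact it is at most $t$ plus the number of variables on a path, which is a stronger bound than needed.

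For the closure condition, every edge of $I_L(\psi)$ joins a clause $C$ to a literal $\ell \in \{x, \neg x\}$ where $x$ occurs in $C$. Then $Cx$ is an edge of $I(\psi)$, so in $F$ either $x$ is an ancestor of $C$ or $C$ is an ancestor of $x$.

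\emph{Case $C$ is an ancestor of $x$.} In $F_L$, $C$ is an ancestor of both $x$ and $\neg x$, because both lie in the subtree that replaced $x$.

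\emph{Case $x$ is an ancestor of $C$.} Here $C$ is a proper descendant of $x$ in $F$, so it sits in the subtree of some child of $x$. In $F_L$ that child hangs below $\neg x$, so both $x$ and $\neg x$ are ancestors of $C$.

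In both cases the edge $C\ell$ lies in $\text{clos}(F_L)$. Ancestor relations among vertices not involving the split are preserved by the subdivision, so nothing else breaks.

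The step I expect to need the most care is the second case, where $C$ lies strictly below $x$. The construction must place $\neg x$ \emph{between} $x$ and all of $x$'s former descendants, rather than as a sibling leaf, and the argument has to check this.

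A minor bookkeeping point concerns literals that do not occur in $\psi$: they become isolated vertices, or are simply placed in the chain as above. They cost nothing extra in height, so the bound $td(I_L(\psi)) \le 2\, td(I(\psi))$ follows.
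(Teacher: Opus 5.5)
Your proposal is correct and follows essentially the same approach as the paper: both take an optimal elimination forest for $I(\psi)$, make $\neg x$ the unique child of each variable node $x$ with $x$'s former children reattached below $\neg x$, and note that the height at most doubles while every clause--literal edge stays an ancestor--descendant pair. Your ``subdividing the edges'' phrasing should be read as inserting a single node $\neg x$ above all of $x$'s children, as your steps 1--3 state; your explicit two-case check is slightly more careful than the paper's version of the same argument.
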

\begin{proof}
	Let $\psi$ be some CNF formula. To simplify notation, let $G_I$ be the incidence graph of $\psi$ and let $G_L$ be the literal incidence graph of $\psi$. Let $T= (V(G_I), E_T)$ be some minimal treedepth decomposition of $G_I$. That is, every edge in $G_I$ is between an ancestor and a descendant in $T$. Further, $T$ has depth $td(G_I)$. We now give a treedepth decomposition of $G_I$, $T'$, with height at most $2 \cdot td(G_I)$. For every variable $x$ in $\psi$, let $parent_{T'}(x) = parent_{T}(x)$, $parent_{T'}(\neg x) = x$ and $children_{T'}(\neg x) = children_{T}(x)$. Note that $T'$ has depth at most $2\cdot depth(T)$. It remains to show that $T'$ is indeed a treedepth decomposition for $G_L$. This follows from the observation that, for every variable in $\psi$, $anc_{T'}(\neg x) \setminus \{x\} =  anc_{T}(x)$ and $dec_{T'}(\neg x) =  dec_{T}(x)$. As every edge in $G_I$ is between an ancestor and a descendant in $T$, every edge in $G_L$ is between an ancestor and a descendant in $T'$. It follows that $td(G_L) \leq 2 \cdot td(G_I)$.
\end{proof}

\begin{theorem}\label{thm:SAT}
    {\sc Satisfiability}, when parameterized by its incidence treedepth, admits a TLFPT time kernelization.
\end{theorem}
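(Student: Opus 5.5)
The plan is to apply the Depth \& Prune Method (\cref{thm:matchTwinSubgraphs}) to \textsc{Satisfiability} encoded through its literal incidence graph $I_L(\psi)$. We cannot work with $I(\psi)$ directly, because twin reducibility must respect polarity. First, build $I_L(\psi)$ from $\psi$ in time $O(|\psi|)$: scan the clauses and add an edge from each clause vertex to each literal vertex it contains. Also add the edge $\{x,\neg x\}$ for every variable $x$, so that complementary literals are tied together. Adding these edges increases the treedepth only by a constant factor, since the decomposition in \cref{lem:SAT-var-to-lit} already makes $\neg x$ a child of $x$. We therefore have treedepth at most $2k$, where $k$ is the incidence treedepth. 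The vertices carry a fixed bipartition label (clause, positive literal, negative literal). This label can be folded into the ancestor ID of the Depth \& Prune construction at constant cost, so that twin isomorphisms preserve it.

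Next, verify bounded depth. By \cref{lem:pathtd}, any graph of treedepth at most $2k$ has no path longer than $2^{2k}$. Hence every valid instance has longest path at most $\ell(k)=2^{2k}$, and bounded depth holds vacuously: instances with longer paths do not exist for parameter $k$, so mapping them to a trivial no-instance is safe. If the DFS tree found has depth exceeding $\ell(k)$, the parameter promise is violated and we may output a fixed trivial instance.

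Then verify twin subgraph reducibility with $\tau(k,s)=2$, or more safely $\tau(k,s)=2^{s}+1$ (any function of $s$ suffices). Let $H$ have twin subgraphs $H_1,\dots,H_t$, all with label-preserving twin isomorphisms. In the intended DFS setting these are subtrees hanging off a common parent, so the $x/\neg x$ edges either lie inside a copy or go to the shared outside part. Deleting $H$ maps satisfying assignments of $\psi$ to satisfying assignments of the reduced formula: variables of $H$ disappear, and clauses of $H$ disappear. For the converse, take a satisfying assignment of $\psi - H$ and copy onto $H$ the assignment of any twin $H_1$ via the isomorphism. Every clause in $H$ then has neighbourhood $i(\cdot)$-equivalent to its counterpart in $H_1$: its literals lie either in $H$, where values are copied, or outside, where they are shared. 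Each such clause is therefore satisfied. Literals of $H$ occurring in outside clauses only help those clauses, and those clauses were already satisfied. So even $\tau\equiv 1$ works, provided the variable--literal pairing is respected, which the $x\neg x$ edges guarantee: either both literals of a variable lie in $H$ or both lie outside.

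The main obstacle is consistency of polarity and variable identity under twin isomorphisms. Without care, an isomorphism could map a literal vertex to a clause vertex, or split $x$ and $\neg x$ across the boundary. I would handle this by the extra $x\neg x$ edges together with typed vertices, checking that the claim inside the proof of \cref{thm:matchTwinSubgraphs} still goes through with types refined by labels. Finally, the kernel produced is a literal incidence graph of bounded size, which translates back into a CNF formula in linear time. Combining this with \cref{cor:kern-iff-tlfpt-reduction-iff-tlfpt} yields the TLFPT time kernelization.
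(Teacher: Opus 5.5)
Your outline matches the paper's: literal incidence graph, \cref{lem:SAT-var-to-lit}, \cref{lem:pathtd}, twin subgraph reducibility, then \cref{thm:matchTwinSubgraphs}. Your $x\neg x$ edges and clause/literal labels are a sensible refinement of a consistency point the paper's version leaves implicit. Suppose an edge joins $x\in V(H)$ to $\neg x\notin V(H)$. Either it joins $H$ to its twin, which is forbidden, or it forces the literal $i(x)$ to be adjacent to $\neg x$, i.e.\ $i(x)=x$. So both literals of a variable lie in $H$ or both lie outside. Your backward direction (copying a twin's assignment onto $H$) is also fine.

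The gap is the forward direction, which you treat as free. Deleting $H$ does more than remove the variables and clauses of $H$. It also strips the literals of $H$ out of the clauses in the shared outside neighbourhood. An outside clause that $a$ satisfies only through some $l\in V(H)$ loses its witness, and its other neighbour $i(l)$ in the twin may be false.

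Concretely, take the clauses $(x\vee y\vee z)$, $(\neg x\vee\neg y\vee w)$, $(\neg z\vee\neg w)$, $(\neg z)$, $(\neg w)$. The subgraphs on $\{x,\neg x\}$ and $\{y,\neg y\}$ are label-preserving twins. They are even sibling subtrees of equal type in a DFS tree rooted at $(x\vee y\vee z)$. The formula is satisfied by $x=1$, $y=0$, $z=w=0$. Yet deleting $\{x,\neg x\}$ leaves $(y\vee z)\wedge(\neg y\vee w)\wedge(\neg z\vee\neg w)\wedge(\neg z)\wedge(\neg w)$, which is unsatisfiable. So your claim that $\tau\equiv 1$ works is false. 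Since your argument never uses the number of twins, it does not establish the hedged $\tau(k,s)=2^s+1$ either.

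The missing ingredient is the pigeonhole step in the paper's proof, which takes $\tau(k,n)=2^n$. The literal vertices of $H$ admit at most $2^s$ truth patterns. Among $H$ and its at least $2^s$ twins, two therefore carry the same pattern under $a$, compared via the twin isomorphisms. The paper takes this pair to contain $H$ ``without loss of generality''. To justify that, if the pair is $X,Y\neq H$, compose $a$ with the map swapping $H$ and $Y$ along $i_{HY}$. This map is an automorphism by the two twin conditions, and it preserves satisfying assignments.

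You then have a twin $X$ with $a(i(l))=a(l)$ for every literal $l$ of $H$. Any clause outside $H$ satisfied by $l\in V(H)$ is not in $X$, since there are no $H$--$X$ edges. By the twin condition it is therefore adjacent to $i(l)\in V(X)$, which is true. Hence $a$ restricted to $\psi-H$ is satisfying. With this step inserted and $\tau$ chosen accordingly, the rest of your plan goes through.
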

\begin{proof}
    To apply those techniques developed previously, we will define a graph problem taking as input the literal incidence graph of our formula. We note that there is a unique formula corresponding to any literal incidence graph\footnote{To allow literals and clauses in $V(G_L)$ to be distinguished, we shall assume that there is an odd number of clauses (possibly by introducing a dummy clause containing some literal and its negation). Note that there is always an even number of literals.}, and vice versa. Due to this, it will be useful to consider the following auxiliary problem called {\sc Literal Incidence Graph Satisifability} or {\sc LIGSat} for short. 
    \begin{framed}
    \noindent
    \textsc{LIGSat}\\
    \textbf{Input:} A literal incidence graph $G_L$. \\ 
    \textbf{Question:} is the formula corresponding to $G_L$ satisfiable?
    \end{framed}
    We will now show that {\sc LIGSat} is in TLFPT when parameterized by treedepth. This will then allow us to show that \textsc{Satisfiability} parameterized by incidence treedepth is in TLFPT via a TLFPT reduction to {\sc LIGSat} parameterized by treedepth.

    We first show that {\sc LIGSat} is in TLFPT when parameterized by treedepth. In particular, we show that this problem is $\tau$-twin subgraph reducible and has $\ell$-bounded depth, for functions $\tau(k,n) = 2^n$ and $\ell(k) = 2^k$.

    From Lemma~\ref{lem:pathtd}, if a graph treedepth at most $k$ then it contains no path of length greater than $2^k$. That is, the set of instances of our problem containing a path of length at least $2^k$ is empty. That is, trivially, {\sc LIGSat} parameterized by treedepth has negative $\ell$-bounded depth, where $\ell(k) = 2^{2k}$.

    \begin{claim}
        \textsc{LIGSat} parameterized by treedepth is $\tau$-twin reducible.
    \end{claim}
    \begin{proof}
    Let $(G_L,k)$ be some instance of {\sc LIGSat} parameterized by treedepth.
    Let $H$ be some subgraph of $G_L$ and let $S$ be a set of subgraphs such that $S \cup \{H\}$ consists of pairwise twin subgraphs. By definition of twin subgraphs (\cref{def: twin subgraphs}), every graph in $S \cup \{H\}$ has the same size, which we shall denote by $n$. 
    
    Our task is to prove that if $|S| \geq 2^n$, then $(G_L,k)$ is a yes-instance of {\sc LIGSat}, if, and only if, $(G_L-V(H),k)$ is. We let $\psi$ denote the formula corresponding to $G_L$ and let $\psi'$ be that formula corresponding to $G_L-V(H)$.

    First suppose that $(G_L,k)$ is a yes-instance, and denote by $a$  some satisfying assignment for $\psi$. To simplify arguments, we let $a$ be an assignment over the literals of $\psi$. As $S \cup \{H\}$ consists of pairwise twin subgraphs, for any pair of subgraphs $X$, $Y$ in $S \cup \{H\}$ there is a twin isomorphism $i$ from $X$ to $Y$. We note that as $G_L$ is bipartite, $i$ must map clauses to clauses and literals to literals. We say that $X$ and $Y$ have the same assignment under $a$ if there exists a twin isomorphism $i$ from $X$ to $Y$ such that, for every literal $l$ in $X$, $a(l) = a(i(l))$. 
    Recall that every subgraph in $S \cup \{H\}$ has size $n$. Thus, the number of literal vertices in any such subgraph is at most $n$, and the number of possible truth assignments to these literal vertices is at most $2^n$.
    As $|S \cup \{H\}| \geq 2^n+1$, there must exist some pair of subgraphs in $S \cup \{H\}$ with the same assignment under $a$. Without loss of generality these are $H$ and some other subgraph $X \in S$. 
    
    Let $i_H$ be that twin isomorphism from $H$ to $X$. We now claim that $a$ is a satisfying assignment for $\psi'$. 
    Consider some clause $C$ in $\phi$ which is satisfied by some literal $l$ in $H$. Then either $C$ is in $H$, in which case $C$ does not appear in $\phi'$, or $i_H(l) \in C$, since $H$ and $X$ are twin subgraphs. In the latter case, $C$ is satisfied by the literal $i_H(l)$, since $X$ has the same assignment as $H$ under $a$. Thus, $a$ is a satisfying assignment for $\phi'$.

    We now consider the other direction. Suppose that there is some satisfying assignment $a'$ for $\psi'$. We extend this to an assignment $a$ for $\psi$ as follows. For some arbitrary subgraph $X \in S$, we take a twin isomorphism $i$ from $X$ to $H$ and we assign values to the literals of $H$ such that $X$ and $H$ have the same assignment under $a'$ (i.e. for each literal $l$ in $H$ $a(l)=a(i(l))$). 
    
    We now claim that $a$ is a satisfying assignment for $\psi$. 
    By assumption, every clause in $\psi'$ is satisfied by $a'$ and thus by $a$, which extends it. Let us now consider some clause $C$ which is in $\psi$ but not in $\psi'$, that is $C$ is in $H$. As $a'$ is a satisfying assignment $\psi'$, the clause $i(C)\in X$ is satisfied by some literal $l'$ in $G-V(H)$. 
    If $l' \notin X$ then $l'\in C$ by the definition of a twin subgraph isomorphism and $C$ is satisfied. Otherwise, $l' \in X$ and the literal $l = i^{-1}(l')$ is in $C$ and is also true under $a$, and again $C$ is satisfied. Thus $a$ satisfies every clause of $\psi$ and the claim follows.
    \end{proof}
    
    That is, we have now shown both directions and $(G_L,k)$ is a yes-instance, if, and only if, $(G_L-V(H),k)$ is a yes-instance. It follows that {\sc LIGSat} parameterized by treedepth is $\tau$-twin reducible and has $\ell$-bounded depth, for functions $\tau(k,n) = 2^n$ and $\ell(k) = 2^k$. Applying Theorem~\ref{thm:matchTwinSubgraphs}, {\sc LIGSat} parameterized by treedepth is in TLFT.

    We now show that there is a TLFPT reduction from \textsc{Satisfiability} parameterized by incidence treedepth to {\sc LIGSat} parameterized by treedepth. The literal incidence graph can be constructed from a formula in time linear in the number of variables and clauses, and conversely, the formula can be reconstructed from the graph within the same time bound. Further, by Lemma~\ref{lem:SAT-var-to-lit}, the treedepth of a literal incidence graph is at most $2$ times larger than the incidence treedepth of its corresponding formula. Thus, there is a TLFPT reduction from \textsc{Satisfiability} parameterized by incidence treedepth to {\sc LIGSat} parameterized by treedepth. It follows that \textsc{Satisfiability} parameterized by incidence treedepth is in TLFPT, thus concluding the proof of our theorem.
\end{proof}
\end{toappendix}

\section{TLFPT via Dynamic Programming: Parameterizing by BFS-Width}\label{sec:BFS-w}

In \cref{sec:degree-based-kern,sec: Matchings and Twins,sec: bounded depth}, we gave TLFPT results via kernelization. We now turn to another workhorse of parameterized complexity: dynamic programming.
It may seem that dynamic programming algorithms align more naturally with the multiplicative (rather than additive) definition of FPT. Maintaining a state space bounded by a function of $k$, a classical algorithm would need to expend $f(k)$ operations for each of the $O(n)$ parts the decomposition processed, yielding a runtime on the order of $f(k) \cdot O(n)$.
In fact, we show that by carefully considering our model of computation we can also apply dynamic programming with a TLFPT runtime.

In this section, we parameterize graph problems by \textit{BFS-width}, a parameter which in some sense mirrors treedepth: graphs of bounded treedepth have ``nice'' DFS trees, a structure we leverage in \cref{sec: bounded depth}, whereas graphs of bounded BFS-width have ``nice'' BFS trees. 
The techniques we apply here could plausibly be lifted to other decompositions (e.g. tree decompositions), but the question of \textit{producing} an optimal tree decomposition 

in time TLFPT parameterized by treewidth remains open. 
BFS-width, on the other hand, has the desirable property that producing a decomposition in linear time is straightforward.
This natural parameter was only recently formally studied from a parameterized perspective \cite{EGL25BFSW}.

\begin{definition}[BFS-decomposition, BFS-width, \cite{EGL25BFSW}]
    For a graph $G=(V,E)$, a BFS-decomposition of $G$ rooted at vertex $r$ is a partition of $V$ into layers $L_0, \ldots, L_d$ satisfying $L_i=\{u \in V : \mathrm{dist}(r,u)=i\}$. 
    The \emph{width} of a BFS-decomposition is the size of its largest layer. 
    The BFS-width of a graph is the maximum width of any BFS-decomposition it admits. 
\end{definition}

In this section, we show that many

classic problems can be solved in TLFPT time when parameterized by BFS-width, using algorithms that we call BFS-nice, defined below. 
We note that any BFS-decomposition is also a path decomposition (and hence a tree decomposition) and that many algorithms operating on a path or tree decomposition are also BFS-nice.  

At a high level, for a given (slice of a parameterized) problem $\Pi_k$, our approach consists of:
\begin{enumerate}
    \item Describing a ``BFS-nice'' dynamic programming algorithm $\mathcal A$ solving the problem $\Pi_k$ in time $O(n) \cdot f(k)$. (Note: $\mathcal{A}$ is never run on the input.) \label{step:bfs-nice-alg}
    \item Describing an algorithm $\mathcal DC$ which decomposes and compresses the input graph $G$ in time $O(n) + f(k)$. \label{step:decomp-and-compress}
    \item By simulating $\mathcal A$ (in time bounded by some function of $k$), ``compile'' it into a Deterministic Finite Automaton-like representation: obtain a state space (including a starting state), accepting states, and transition function which together describe the behavior of $\mathcal{A}$ on an input graph decomposition. \label{step:transition-from-simulation}
    \item Run $\mathcal DC$ on the input graph $G$ to obtain its compressed representation. Then, apply the compiled transition function of $\mathcal A$ on the compressed representation of $G$ in $O(n)$ time and return YES if $\mathcal{A}$ would. \label{step:apply-transition-func}
\end{enumerate}

\begin{defn}[Bicolored induced subgraph]
    Let $G=(V,E)$ be a graph and $A,B \subseteq V$ be disjoint sets of vertices. The \emph{bicolored induced subgraph} of $G$ with respect to $A$ and $B$ is the induced subgraph $G[A \cup B]$ together with a (not necessarily proper) 2-coloring of its vertices assigning color $0$ to all vertices in $A$ and color $1$ to all vertices in $B$. When the bicoloring is clear from context, we use the shorthand ``bicolored graph $G[A \cup B]$'' to mean the bicolored induced subgraph of $G$ with respect to $A$ and $B$.  
\end{defn}

Before formally presenting the main result of the section, we need to detail the steps above.

\FloatBarrier

\subsection{Step \ref{step:bfs-nice-alg}: What is a BFS-nice algorithm?}\label{subsec:bfs-nice-alg}

We take this opportunity to note that many algorithms operating on a path decomposition are also BFS-nice, and that a BFS-decomposition is also a path decomposition. 

\begin{defn}\label{defn:nice-algorithm-bfs-w}
    A dynamic programming algorithm $\mathcal A$ which takes as input a graph $G$ and a width-$k$ BFS-decomposition $L_0, \ldots L_d$ of $G$, and which decides parameterized problem $\Pi$ is \emph{BFS-nice} if there exists a function $f$ such that:
    \begin{itemize}
        \item $\mathcal A$ maintains an internal state $\sigma$ (initially $\sigma_0$, which is independent of the input but possibly dependent on $k$) which never exceeds $f(k)$ bits in size,
        \item For each $i \in \{1, \ldots, d\}$, $\sigma_i$ is computed as a function of $\sigma_{i-1}$ and the bicolored graph $G[L_i \cup L_{i-1}]$ alone, and
        \item $\mathcal{A}$ returns YES or NO as a function of $\sigma_d$ alone.
    \end{itemize}
    A parameterized problem $\Pi$ is BFS-nice if there exists a BFS-nice algorithm $\mathcal{A}$ solving $\Pi$.
\end{defn}

This definition captures many natural dynamic programming algorithms, including the one given in \cite{FG06PCT} 
to solve $K_3$-\textsc{Coloring} with a given a tree decomposition. We generalize this slightly and describe a BFS-nice algorithm for the more general $H$-\textsc{Coloring} (defined just below), proving that it is BFS-nice.
\begin{framed}
    \noindent
    \textsc{$H$-Coloring}\\
    \textbf{Input:} graph $G$.\\
    \textbf{Question:} does there exist a homomorphism from $G$ to $H$, that is, a map $m:V(G)\to V(H)$ such that $(u,v) \in E(G) \implies (m(u),m(v)) \in E(H)$?
\end{framed}

\begin{lemma}\label{lem:h-coloring-bfs-nice}
    For any fixed graph $H$, $H$-\textsc{Coloring} admits a BFS-nice algorithm.
\end{lemma}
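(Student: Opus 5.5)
We give an explicit dynamic program over the layers and check the three conditions of \cref{defn:nice-algorithm-bfs-w} one by one. The key structural fact is that in a BFS-decomposition $L_0,\ldots,L_d$ every edge of $G$ lies either inside a single layer or between two consecutive layers $L_{i-1}$ and $L_i$. Hence $m:V(G)\to V(H)$ is a homomorphism if and only if, for every $i$, its restriction to $L_{i-1}\cup L_i$ is a homomorphism of $G[L_{i-1}\cup L_i]$ (and its restriction to $L_0$ is one of $G[L_0]$). This locality is what lets a bounded state carry all the needed information from one layer to the next.

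\textbf{State.} The state $\sigma_i$ is the set of all maps $\phi: L_i\to V(H)$ such that $\phi$ extends to a homomorphism of $G[L_0\cup\cdots\cup L_i]$ into $H$. Since $|L_i|\le k$ and $H$ is fixed, at most $|V(H)|^k$ such maps exist. To make the state independent of vertex names, we index the vertices of each layer by positions $1,\ldots,|L_i|$, as given by the bicolored graph. Then $\sigma_i$ is a subset of the maps $[|L_i|]\to V(H)$, together with the integer $|L_i|$. This uses at most $2^{|V(H)|^k}+O(\log k)$ bits, which is a function of $k$ alone (with $H$ fixed).

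\textbf{Initial state.} Since $L_0=\{r\}$, we take $\sigma_0=V(H)$, all maps from one position to $V(H)$. This is independent of the input and so satisfies the first condition. If $H$ has a loop-free vertex, any colour of $r$ is consistent because $G[L_0]$ has no edges.

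\textbf{Transition.} Given $\sigma_{i-1}$ and the bicolored graph $G[L_{i-1}\cup L_i]$, we set $\sigma_i$ to be the set of maps $\psi:L_i\to V(H)$ for which some $\phi\in\sigma_{i-1}$ makes $\phi\cup\psi$ a homomorphism of $G[L_{i-1}\cup L_i]$. The colour classes of the bicoloring identify which vertices form $L_{i-1}$ and which form $L_i$, with positions read off in order. Hence this is a function of $\sigma_{i-1}$ and the bicolored graph alone, which is the second condition.

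\textbf{Correctness.} We prove by induction on $i$ that $\sigma_i$ is exactly the set of restrictions to $L_i$ of homomorphisms of $G[L_0\cup\cdots\cup L_i]$. The inductive step splits into two directions.
\begin{itemize}
\item If $\psi\in\sigma_i$ via some $\phi\in\sigma_{i-1}$, extend $\phi$ to a homomorphism $h$ of $G[L_{\le i-1}]$ and glue $h$ with $\psi$. Every new edge lies inside $L_i$ or between $L_{i-1}$ and $L_i$, so the result is a homomorphism of $G[L_{\le i}]$.
\item Conversely, a homomorphism of $G[L_{\le i}]$ restricts to a homomorphism of $G[L_{\le i-1}]$, so its restriction to $L_{i-1}$ lies in $\sigma_{i-1}$ and witnesses membership of its restriction to $L_i$.
\end{itemize}

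\textbf{Output.} We answer YES iff $\sigma_d\neq\emptyset$, which is the third condition. We assume $G$ is connected so that the layers cover $V$; disconnected graphs are handled componentwise, or the decomposition convention covers them.

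\textbf{Main obstacle.} The delicate point is the second condition, namely that the transition sees only the bicolored graph rather than actual vertex identities. The state must therefore be expressed in terms of positions that are recoverable from $G[L_{i-1}\cup L_i]$ alone. This in turn requires the ordering of $L_{i-1}$ used to build $\sigma_{i-1}$ to match the ordering seen in the next bicolored graph. To secure this, we fix a canonical ordering of each layer, for instance by vertex index, and assume the bicolored subgraph is presented with that ordering. Under that assumption the states compose consistently across consecutive transitions.
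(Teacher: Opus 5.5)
Your proposal is correct and is essentially the paper's proof: it uses the same dynamic program as \cref{alg:basic-h-coloring}. The state is the set of admissible labelings of the current layer, filtered against the bicolored graph $G[L_{i-1}\cup L_i]$, and the answer is YES iff the final set is nonempty. You go further than the paper in two ways: you give the induction (using that edges join only equal or consecutive layers), and you make the layer-ordering consistency explicit, which the paper leaves implicit in the $\mathrm{index}$ function of \cref{alg:bfsw-decomp-and-compress}; as a minor point, a subset of the $|V(H)|^k$ labelings needs only $|V(H)|^k$ bits, though your looser $2^{|V(H)|^k}$ bound is still valid.
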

\begin{proof}
\cref{alg:basic-h-coloring} is a simple dynamic program which decides $H$\textsc{-Coloring} for an input graph and its $k$-width BFS-decomposition. This is nothing revolutionary; cf.~\cite{FG06PCT} Example 11.35, which solves $K_3$-\textsc{Coloring} given a tree decomposition of width at most $k$.
It is easy to verify that this algorithm is BFS-nice: the internal state which here is denoted by $\mathcal L_i$ is of size bounded by a function of $k$ (namely, the number of assignments of $[k] \to V(H)$ -- recall that $H$ is part of the problem definition) and is updated exclusively as a function of the previous internal state and the bicolored graph $G[L_i \cup L_{i-1}]$, and the algorithm returns YES or NO as a function of its final state $\mathcal L_d$. 
\end{proof}

\begin{algorithm}[!ht]
\caption{BFS-nice $H$-\textsc{Coloring} algorithm}\label{alg:basic-h-coloring}
\begin{algorithmic}[1]
\Require Graph $G=(V,E)$ on $n$; integers $k$ and $d$; BFS-decomposition of $G$ into layers $L_0 \ldots L_d$ each of size at most $k$; graph $H$ with which to color $G$. 
\State Compute all $|V(H)|$ labelings of the root layer $L_0$ with $V(H)$. Denote $\mathcal L_0$ the set of these (partial) solutions (i.e. labelings) any of which is locally admissible for the vertices in bag $L_0$ (since $L_0$ consists of a single vertex, that vertex can be mapped to any vertex in $H$).
\ForAll{$i \in \{1, \ldots d\}$}
\State Compute all possible labelings of layer $L_i$. 
Denote these by $\mathcal L_i$.
\ForAll{$\ell_i \in \mathcal L_i$}
\If{$\nexists \ell_{i-1}\in \mathcal L_{i-1}$ such that $\ell_{i-1} \cup \ell_{i}$ is a an $H$-coloring of $G[L_i \cup L_{i-1}]$}
\State Remove $\ell_i$ from $\mathcal L_i$
\EndIf
\EndFor
\EndFor 
\If{$\mathcal L_d = \emptyset$}
\Return NO
\Else ~\Return YES
\EndIf
\end{algorithmic}
\end{algorithm}

\FloatBarrier

\subsection{Step \ref{step:decomp-and-compress}: Compressing small graphs into registers}\label{subsec:decomp-and-compress}

Leveraging the fact that we may assume that the word size $w$ (the number of bits which can be stored in a register) exceeds any $f(k)$ for any function $f$ we choose (\cref{lem:words-big-as-fn-of-k}), we can fit encodings of mathematical objects of size bounded by some function of $k$ in a single register. 
For example, a simple labeled $k$-vertex graph might be represented by its adjacency matrix using $k^2$ single-bit registers. The analogous ``compressed'' representation is a single $k^2$-bit word. Actually, a $\binom{k}{2}$-bit word suffices, since the diagonal is all-zero and the matrix is symmetric. As shown in the figure, the $\binom{k}{2}$-bit representation of a $k$-vertex graph $H$ on $k$ vertices can be constructed as the sum of $|E(H)|$ terms in $[2^{\binom{k}{2}}]$. We define \texttt{kPunch} to be the lookup table storing these terms, indexed by the adjacencies they encode. 
Analogously, a $2k$-vertex balanced bipartite graph $H$ can be represented as the sum of $|E(H)|$ terms from $[2^{k^2}]$, and we define \texttt{kkPunch} the lookup table storing the set of integers $[2^{k^2}]$ indexed by the adjacencies they encode. 

We provide a little more illustration of this idea now. 
A graph $G=(V,E)$ on $k$ vertices and $m$ edges with $E=\{e_1,\ldots,e_m\}$ can be obtained as the union of $m$ graphs, namely $(V,E)=\cup_{1 \le i \le m}(V,\{e_i\})$. Fixing $k$, we can denote by $P_{i,j}=([k], \{(i,j)\})$ the \textit{punch graph} encoding the edge $(i,j)$ for $k$-vertex graphs. Then any $k$-vertex graph $G$ with $m$ edges is the union of $m$ different punch graphs. The adjacency matrix of $G$ is correspondingly the sum of the adjacency matrices of $m$ different punch graphs. The table \texttt{kPunch} exactly stores the (compressed) adjacency matrices of all punch graphs on $k$ vertices. This is illustrated in \cref{fig:graph-to-word}.

\begin{figure}[!ht]
    \centering
    \includegraphics[width=\linewidth, page=7]{dck-ipefigs.pdf}
    \caption{Illustration of punch graphs and the lookup table \texttt{kPunch} when $k=3$. The word $x(H)=\texttt{011}$ represents the 3-vertex graph $H$ with $V(H)=\{0,1,2\}$ and $E(H)=\{(0,2), (1,0)\}$. More generally, for any graph $H$, $b(H)$ can be computed as
    $x(H)= \bigvee_{(u,v) \in E(H)} \texttt{kPunch}[u,v]$.}
    \label{fig:graph-to-word}
\end{figure}

Likewise, a balanced bigraph $G=([k] \cup [k+1,2k], E)$ on $2k$ vertices and $m$ edges can be obtained as the union of $m$ bigraphs on $2k$ vertices and $1$ edge. The graph $G$ and each of these one-edge bigraphs can be represented by a $k^2$-bit word (the compressed version of its biadjacency matrix). Again, the $k^2$ bit word representing $G$ can be computed as the sum of the $m$ different $k^2$-bit ``punch'' words. Punch words for balanced $2k$ vertex bigraphs are stored in the lookup table \texttt{kkPunch}, indexed by the adjacencies which they encode, so that \texttt{kkPunch}$[i,j]$ is the $k^2$-bit word encoding the bigraph with an edge from $i$ to $k+j$.

Our algorithm $\mathcal DC$ (\cref{alg:bfsw-decomp-and-compress}) takes as input a graph $G$ together with a vertex $r$ and returns a compressed representation of the entire graph. This is done by first creating its BFS decomposition rooted at $r$ and then computing a special symbol $b_i$ for each consecutive pair of layers of the decomposition so that $b_i$ fits in a single register but fully describes both layers and their relationship to one another. In particular, $b_i$ encodes: the size of each layer; the adjacency matrix of each layer; and the biadjacency matrix between the two layers.
This is illustrated in \cref{fig:bfs-w-compression-i,fig:bfs-w-compression-ii}. In that example the tuple $(|L_i|, |L_{i-1}|, x_i, x_{i-1}, y_i)=(2,3,\texttt{011},\texttt{100},\texttt{101010000})$ fully describes the two layers shown. Noting that $|L_i|=3=\texttt{11}_2$ and $|L_{i-1}|=2=\texttt{10}_2$, it follows that the word $b_i = \texttt{1110011100101010000}$
exactly encodes the two layers as well.

\begin{figure}[!ht]
    \centering
    \includegraphics[width=\linewidth, page=10]{dck-ipefigs.pdf}
    \caption{Illustrating the ideas used in \cref{alg:bfsw-decomp-and-compress} for $k=3$. Left: two consecutive layers of a decomposition. Center left: a bicolored graph representing the two layers. Center right: the inclusion of the layers' sizes allows the addition of a dummy vertex, so each graph is on $k$ vertices. Right: the graph is the union of a bigraph on $2k$ vertices and two graphs on $k$ vertices each.}
    \label{fig:bfs-w-compression-i}
\end{figure}

\begin{figure}[!ht]
    \centering
    \includegraphics[width=.6\linewidth, page=11]{dck-ipefigs.pdf}
    \caption{Illustrating how the two graphs and the bicolored bigraph from \cref{fig:bfs-w-compression-i} can be encoded are encoded binary strings.}
    \label{fig:bfs-w-compression-ii}
\end{figure}

In our algorithm, too, $x_i$ is used to denote that word encoding the internal adjacency matrix of each layer, and $y_i$ that word encoding the biadjacency matrix between any two layers.

\begin{algorithm}[!ht]
\caption{Decomposition and Compression algorithm ($\mathcal DC$)}\label{alg:bfsw-decomp-and-compress}
\begin{algorithmic}[1]
\Require Graph $G=(V,E)$ on $n$ vertices; vertex $r$ which roots a BFS tree of width at most $k$ in $G$. 
\State Compute the BFS tree rooted at $r$. Denote by $d$ the depth of the tree, and by $L_0,\ldots,L_d$ its layers such that $L_0=\{r\}$. \label{dc-line:bfs-tree} 
Define the following functions using lookup tables: \begin{itemize}
    \item $l:V \to [0,d]$ which returns the depth of a vertex (the index of the layer it belongs to)
    \item $\mathrm{index}:V \to [0,k-1]$ which returns the index of a vertex in its layer
    \item $|L_i|$ which returns the number of vertices in a layer $i$. 
\end{itemize} 
\State Initialize the two-dimensional arrays \texttt{kPunch} and \texttt{kkPunch} as defined above. \label{dc-line:init-punches}
\ForAll{$i \in [d]$} \label{dc-line:init-templates}
    \State Set $x_i = 0$ and $y_i=0$.
\EndFor
\ForAll{$(u,v)\in E$} \label{dc-line:loop-over-edges}
\State $i,~j = \mathrm{index}(u), \mathrm{index}(v)$ 
\If{$l(u)==l(v)$} \Comment{$u$ and $v$ are in the same layer.}
\State $x_{l(u)} = x_{l(u)} \lor~$\texttt{kPUNCH[i,j]} \Comment{Bitwise logical OR}
\Else \Comment{$|l(u)-l(v)|=1$ -- adjacent layers.}
\If{$l(u)<l(v)$} swap $u$ and $v$ and swap $i$ and $j$. \EndIf \Comment{Ensure $l(u)>l(v)$} 
\State $y_{l(u)} = y_{l(u)} \lor~$\texttt{kkPUNCH[i,j]}
\EndIf
\EndFor
\ForAll{$i \in \{1, \ldots, d\}$}\label{dc-line:loop-over-layers}
\State Do $b_i = |L_i|$. \Comment{Encoding size of layer $L_{i}$}
\State Do $b_i = b_i << {\lfloor \log_2 k \rfloor + 1}$ \Comment{Left-shift by $\lfloor \log_2 k \rfloor + 1$ bits.} 
\State Do $b_i = b_i \lor |L_{i-1}|$. \Comment{Encoding size of layer $L_{i-1}$}
\State Do $b_i = b_i << {\binom{k}{2}}$ \Comment{Left-shift by $\binom{k}{2}$ bits.}
\State Do $b_i = b_i \lor x_i$ \Comment{Encoding adjacencies within layer $i$.}
\State Do $b_i = b_i << {\binom{k}{2}}$ \Comment{Left-shift by $\binom{k}{2}$ bits.}  
\State Do $b_i = b_i \lor x_{i-1}$ \Comment{Encoding adjacencies within layer $i-1$.}
\State Do $b_i = b_i << {k^2}$ \Comment{Left-shift by $k^2$ bits.}  
\State Do $b_i = b_i \lor y_i$ \Comment{Encoding adjacencies between layers $i$ and $i-1$.}
\EndFor
\State \Return $b_1, \ldots, b_d$. \Comment{Each word $b_i$ fully encodes the bicolored graph $G[L_i \cup L_{i-1}]$.}
\end{algorithmic}
\end{algorithm}

\begin{lemma}\label{lem:bfsw-dc-runtime-and-correctness}
    \cref{alg:bfsw-decomp-and-compress} runs in time $O(n) + f(k)$ and fully encodes the graph $G$ into $d$ integers $b_1, \ldots, b_d$. 
\end{lemma}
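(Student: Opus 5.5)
The plan is to walk through \cref{alg:bfsw-decomp-and-compress} line by line and charge each step either to $O(n)$ (with $n=|V|+|E|$) or to a function of $k$ alone. Correctness is then checked by showing that the word $b_i$ can be decoded back into the bicolored graph $G[L_i\cup L_{i-1}]$. Throughout, we invoke \cref{lem:words-big-as-fn-of-k} to assume the word size is at least $2(\lfloor\log_2 k\rfloor+1)+2\binom{k}{2}+k^2$. Under that assumption every $b_i$, $x_i$, $y_i$ and every table entry fits in one register, and each shift or bitwise OR costs $O(1)$.

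\textbf{Running time.} Line~\ref{dc-line:bfs-tree} is a standard breadth-first search from $r$, which takes $O(|V|+|E|)$. The arrays for $l$, $\mathrm{index}$ and $|L_i|$ are filled in one pass over the layers: we assign indices $0,\dots,|L_i|-1$ in order of discovery, so this also takes $O(|V|)$. Line~\ref{dc-line:init-punches} builds tables with $k^2$ entries, each a single word obtained by one shift of $1$, so it costs $f(k)$. The initialisation loop at line~\ref{dc-line:init-templates} takes $O(d)\subseteq O(|V|)$. The edge loop at line~\ref{dc-line:loop-over-edges} does $O(1)$ lookups and word operations per edge, for $O(|E|)$ in total. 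The final loop at line~\ref{dc-line:loop-over-layers} does a constant number of shifts and ORs per layer, for $O(d)$ in total. Summing gives $O(n)+f(k)$.

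\textbf{Correctness.} We use the BFS property that every edge joins vertices whose levels differ by at most one. Hence each edge is handled either by the same-layer branch or by the adjacent-layer branch, and after the swap $y_{l(u)}$ receives the bit for the pair $(\mathrm{index}(u),\mathrm{index}(v))$ with $u\in L_{l(u)}$ and $v\in L_{l(u)-1}$. Since the punch words for distinct pairs occupy distinct bit positions, the OR of the punches yields exactly the compressed adjacency matrix $x_i$ of $G[L_i]$ and the biadjacency matrix $y_i$ between $L_i$ and $L_{i-1}$. Repeated edges are harmless under OR. The order of the endpoints inside a same-layer punch must not matter; we resolve this by defining $\texttt{kPunch}[i,j]=\texttt{kPunch}[j,i]$.

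The concatenation in $b_i$ uses fixed field widths. Each layer size is at most $k$ and so fits in $\lfloor\log_2 k\rfloor+1$ bits, and the remaining fields have widths $\binom{k}{2}$, $\binom{k}{2}$ and $k^2$. The map $(|L_i|,|L_{i-1}|,x_i,x_{i-1},y_i)\mapsto b_i$ is therefore injective, and the tuple is recovered by masking and shifting. This tuple determines the bicolored graph $G[L_i\cup L_{i-1}]$ up to the labelling of vertices by their in-layer index, with colors $1$ on $L_i$ and $0$ on $L_{i-1}$. Every edge of $G$ lies inside some $G[L_i\cup L_{i-1}]$, and every vertex lies in some layer, so $b_1,\dots,b_d$ together encode $G$. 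The special case $d=0$ (that is, $G=\{r\}$) is trivial.

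\textbf{Main obstacle.} The step needing most care is the word-size bookkeeping. We must confirm that the total field width is a function of $k$ only, so that \cref{lem:words-big-as-fn-of-k} applies. We must also check that the layer sizes are at most $k$ because of the width assumption on $r$. If a layer exceeded $k$, the algorithm could detect this during the BFS and abort; the cost of that check is still linear.
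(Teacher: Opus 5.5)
Your proposal is correct and follows essentially the same line-by-line accounting as the paper's proof: the BFS and the edge loop are charged to $O(|V|+|E|)$, the punch tables to $f(k)$, and the two per-layer loops to $O(d)$. Your correctness argument makes explicit what the paper only asserts as clear from the construction and its figures. It rests on three points: every edge joins the same or adjacent BFS layers; the fixed-width bit fields are disjoint, so the concatenation is injective; and \texttt{kPunch} is taken to be symmetric.
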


\begin{proof}
    Line \ref{dc-line:bfs-tree} can be done in linear time. A breadth-first search takes $O(|V|+|E|)$ time \cite{CLRS}; the tables defined each are of size linear in that of the input, and are indexed by the vertex set $V$ (in the case of $l$ and $\mathrm{index}$ and $d$ (in the case of $|L_i|$ both of which are bounded by $n$.\\
    Line \ref{dc-line:init-punches} takes time $f(k)$.\\
    The loop on line \ref{dc-line:init-templates} take time $d$ in total.\\
    Line \ref{dc-line:loop-over-edges} iterates exactly $|E|$ times. Each iteration of the loop takes constant time.\\
    Line \ref{dc-line:loop-over-layers} iterates $d$ times. Each iteration consists of a single bitwise operation which can be done in constant time. 
    It is clear from the description of the algorithm each word $b_i$ returned encodes the bicolored graph $G[L_i \cup L_{i-1}]$, as illustrated in \cref{fig:bfs-w-compression-i,fig:bfs-w-compression-ii}.
\end{proof}

\FloatBarrier

\subsection{Step \ref{step:transition-from-simulation}: state space and transition function of \texorpdfstring{$\mathcal{A}$}{A}}\label{subsec:transition-from-simulation}

We had alluded earlier to the internal state $\sigma$ of $\mathcal{A}$ and its properties. Recall in particular: 

\begin{itemize}
    \item $\sigma$ is initially $\sigma_0$, which is independent of the input but possibly dependent on $k$
    \item $\sigma$ never exceeds $f(k)$ bits in size for some (possibly non-computable) function $f$
    \item At each layer $i$ of the BFS decomposition, $\sigma_i$ is computed as a function of $\sigma_{i-1}$ and the bicolored graph $G[L_i \cup L_{i-1}]$  alone.
    \item $\mathcal{A}$ returns YES or NO as a function of the final state $\sigma_d$ alone.
\end{itemize}

We denote by $\mathcal{B}_k$ the set of bicolored (not necessarily bipartite) graphs with at most $k$ vertices of each color. These correspond to the possible values that a bicolored graph $G[L_i \cup L_{i-1}]$ can take. Also, we denote by $\Sigma_k$ the set of possible values of $\sigma_i$ over all inputs to $\mathcal{A}$ and integers $i$. Note that $\max_{\sigma \in \Sigma_k} |\sigma| \le f(k)$. We shall compute the transition function $\varsigma : \Sigma_k \times \mathcal{B}_k$ which exactly describes how $\sigma_i$ depends on $\sigma_{i-1}$ and the bicolored graph $G[L_i \cup L_{i-1}]$ in the execution of $\mathcal{A}$. 

This is very straightforward if $f$ is computable; simply enumerate all structures on at most $f(k)$ bits (since $\Sigma_k$ is necessarily a subset of this) and simulate $\mathcal{A}$ on tuples consisting of such a structure and a bicolored graph from $\mathcal{B}_k$. However, if $f$ is not computable, we can instead compute $\varsigma$ using (fittingly) breadth-first search, by simply maintaining a fringe of all the states reachable from $\sigma_0$ by $\mathcal{A}$; since $f(k)$ is finite, it follows that $\Sigma_k$ is finite, and our BFS will eventually terminate. 
Thus we can compute, with respect to any fixed $k$ which bounds the BFS-width of the input of $\mathcal{A}$: the state space $\Sigma_k$ of $\mathcal{A}$; the set of accepting states (in the finite-state automaton sense) $\Sigma_k^{\mathrm{YES}}$; and its transition function $\varsigma$.

The transition function $\varsigma: \Sigma_k \times \mathcal{B}_k \to \Sigma_k$ is then ``compressed'' (in the same sense as we compress $G$ above): we make use of the bijection described in \cref{alg:bfsw-decomp-and-compress} which encodes bicolored graphs as single words (or equivalently integers), which we denote $b:\mathcal{B}_k \to \mathbb \mathbb{N}$, and also fix an arbitrary bijection $s:\Sigma_k \to [|\Sigma_k|]$. Note that both the domain of $b$ $\mathrm{dom}(b)$ and the domain of $s$ $\mathrm{dom}(s)$ are (interpretable as) sets of bitstrings of length bounded by a function of $k$, namely ${k^2 + 2\left (\lfloor \log k \rfloor + 1 + \binom{k}{2}\right)}$ in the case of $b$ and $\lfloor \log |\Sigma_k| \rfloor + 1$ in the case of $s$. 

Now any tuple $(b_i,s_i) \in \mathrm{dom} (b) \times \mathrm{dom}(s)$ can be encoded as a single bitstring $b_i \bconcat s_i$ obtained by left-shifting $b_i$ by $\lfloor \log |\Sigma_k| \rfloor + 1$ bits and summing the result with $s_i$. Thus, the rightmost $\lfloor \log |\Sigma_k| \rfloor + 1$ bits of $b_i \bconcat s_i$ are equal to $s_i$ and the right-shift of $b_i \bconcat s_i$ by $\lfloor \log |\Sigma_k| \rfloor + 1$ bits is exactly $b_i$. Note that:
\begin{itemize}
    \item $b_i \bconcat s_i$ is of size bounded by a function of $k$ (and so at most the word size by \cref{lem:words-big-as-fn-of-k}).
    \item each of $b_i$ and $s_i$ can be retrieved in constant time from $b_i \bconcat s_i$
    \item $b_i \bconcat s_i$ can be computed in constant time from $b_i$ and $s_i$
\end{itemize}
This enables us to encode the transition function $\varsigma$ as a lookup table \texttt{transition} (again, of size bounded by a function of $k$) indexed by $\mathrm{dom}(b) \times \mathrm{dom}(s)$. Then \texttt{transition}$[b(B) \bconcat s(\sigma)] = s(\varsigma(B,\sigma))$ for each $B\in \mathcal{B}_k$, $\sigma \in \Sigma_k$.
Also, we can compress the starting state $\sigma_0$ as \texttt{start} = $s(\sigma_0)$.
Lastly, we produce a table \texttt{accepting} holding all elements of $\{s(\sigma) : \sigma \in \Sigma_k^\mathrm{YES}\}$. Note that checking for membership of \texttt{accepting} can be done in time $f(k)$.

\FloatBarrier

\subsection{Step \ref{step:apply-transition-func}: applying \texorpdfstring{\texttt{transition}}{transition} over \texorpdfstring{$(b_1,b_2,\ldots,b_d)$}{b1,b2,...,bd}}\label{subsec:apply-transition-func}

We can now simulate $\mathcal{A}$ on input graph $G$ by applying our \texttt{transition} table and the compressed encoding $(b_1,b_2,\ldots,b_d)$ of $G$ which is output by \cref{alg:bfsw-decomp-and-compress}. The algorithm is very simple:

\begin{algorithm}[!ht]
\caption{Applying the \texttt{transition} over the compressed input.}\label{alg:bfsw-appling-transition}
\begin{algorithmic}[1]
    \Require Graph $G=(V,E)$, vertex $r$ of $G$ which roots a BFS tree with layers $L_0, \ldots L_d$ each of size at most $k$, BFS-nice algorithm $\mathcal{A}$.
    \State Obtain $(b_1,b_2,\ldots,b_d)$ by running \cref{alg:bfsw-decomp-and-compress} from \cref{subsec:decomp-and-compress}. \label{lst:line:run-dc}
    \State Obtain the word \texttt{start}, tables \texttt{transition} and \texttt{accepting}, and functions $s$, $b$ and compressing its outputs as detailed in \cref{subsec:transition-from-simulation} above.\label{lst:line:simulating-a}
    \State Do \texttt{state} = \texttt{start}
    \ForAll{$1 \le i \le d$} \label{lst:line:state-update-loop}
        \State Do \texttt{state} = \texttt{transition}[$b_i \bconcat $\texttt{ state}] \label{lst:line:state-update-statement} 
    \EndFor
    \If{\texttt{state} $\in$ \texttt{accepting}} 
    \Return YES \label{lst:line:check-accepting}
    \Else \State \Return NO \EndIf
\end{algorithmic}
\end{algorithm}

\begin{lemma}\label{lem:bfsw-applying-transition}
    \Cref{alg:bfsw-appling-transition} terminates in time $O(n)+f(k)$ and returns YES on an input graph $G$ with BFS decomposition $L_0,\ldots, L_d$ if and only if $\mathcal{A}$ returns YES on the same input.
\end{lemma}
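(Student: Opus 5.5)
The plan is to prove the two claims of \cref{lem:bfsw-applying-transition} separately: first the running time, line by line, and then correctness, by induction on the layer index $i$.

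\textbf{Running time.} Line~\ref{lst:line:run-dc} costs $O(n)+f(k)$ by \cref{lem:bfsw-dc-runtime-and-correctness}. Line~\ref{lst:line:simulating-a} depends only on $k$ and on the fixed algorithm $\mathcal{A}$, never on $G$.
\begin{itemize}
    \item The state space $\Sigma_k$ is found by breadth-first exploration from $\sigma_0$. The set $\Sigma_k$ is finite and $\mathcal{B}_k$ has size bounded by a function of $k$.
    \item Each simulated step of $\mathcal{A}$ acts on a bicolored graph with at most $2k$ vertices, together with a state of at most $f(k)$ bits. So each step costs some $g(k)$.
    \item Building the tables \texttt{transition} and \texttt{accepting} therefore costs some $h(k)$.
\end{itemize}
The loop on line~\ref{lst:line:state-update-loop} runs $d\le n$ times. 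Each iteration forms $b_i \bconcat \texttt{state}$ with one shift and one OR, then does one table lookup. By \cref{lem:words-big-as-fn-of-k}, every index fits in one word, so each of these operations is $O(1)$. The final membership test in \texttt{accepting} costs $f(k)$. Summing gives $O(n)+f(k)$ after renaming the function.

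\textbf{Correctness.} Let $\sigma_0,\dots,\sigma_d$ be the states $\mathcal{A}$ passes through on $G$ with decomposition $L_0,\dots,L_d$. I will show by induction on $i$ that after $i$ iterations, $\texttt{state}=s(\sigma_i)$.
\begin{itemize}
    \item Base case: the initial assignment sets $\texttt{state}=\texttt{start}=s(\sigma_0)$.
    \item Inductive step: by \cref{lem:bfsw-dc-runtime-and-correctness}, $b_i$ equals $b(G[L_i\cup L_{i-1}])$. By BFS-niceness (\cref{defn:nice-algorithm-bfs-w}), $\sigma_i=\varsigma(G[L_i\cup L_{i-1}],\sigma_{i-1})$. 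The table was built so that $\texttt{transition}[b(B)\bconcat s(\sigma)]=s(\varsigma(B,\sigma))$, which gives $\texttt{state}=s(\sigma_i)$.
\end{itemize}
After the loop, $\texttt{state}=s(\sigma_d)$. Since $s$ is a bijection, $\texttt{state}\in\texttt{accepting}$ holds if and only if $\sigma_d\in\Sigma_k^{\mathrm{YES}}$. By the third condition of BFS-niceness, this holds exactly when $\mathcal{A}$ returns YES.

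\textbf{Main obstacle.} The delicate step is checking that the relevant keys actually appear in the table \texttt{transition}. Two facts are needed.
\begin{itemize}
    \item Every $\sigma_i$ arising on the real input lies in the explored set $\Sigma_k$. This holds because the exploration closes the reachable set from $\sigma_0$ under all of $\mathcal{B}_k$.
    \item Every $b_i$ lies in $\mathrm{dom}(b)$. This holds because each layer has at most $k$ vertices.
\end{itemize}
I would also note that the exploration terminates even when $f$ is not computable, since the reachable set is finite. Its cost is then still a function of $k$ alone, though it may not be computable, and the TLFPT definition permits this.
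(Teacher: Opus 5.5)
Your proposal is correct and takes essentially the same route as the paper. It counts the running time line by line and then maintains the invariant $\texttt{state}=s(\sigma_i)$ after the $i$th iteration; your explicit check that every needed key of \texttt{transition} is present is a detail the paper leaves implicit. One caveat, inherited from the paper's construction rather than from your argument: BFS-niceness only bounds the states $\mathcal{A}$ reaches on genuine inputs, so closing under \emph{all} of $\mathcal{B}_k$ tacitly assumes $\varsigma$ maps $\Sigma_k\times\mathcal{B}_k$ into $\Sigma_k$, and exploring pairs (state, last layer) while applying only bicolored graphs consistent with the stored layer removes that assumption.
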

\begin{proof}
    The runtime of line \ref{lst:line:run-dc} is $O(n+f(k))$ as already mentioned in \cref{lem:bfsw-dc-runtime-and-correctness}. \\
    As described in the \cref{subsec:transition-from-simulation} above, Line \ref{lst:line:simulating-a} terminates in time bounded by some function of $k$, since it does not depend on the input $G$, only on $k$ and on the algorithm $\mathcal{A}$ which is fixed. \\
    The loop at line \ref{lst:line:state-update-loop} iterates $d \le n$ times, and the execution of line \ref{lst:line:state-update-statement} takes constant time: each of computing $b_i \bconcat$ \texttt{state} a lookup in \texttt{transition} take constant time.\\
    To check whether \texttt{state} appears in the table \texttt{accepting} on line \ref{lst:line:check-accepting} takes time bounded by a function of $k$, since both objects are bounded in size by such a function.\\
    The correctness of the algorithm is straightforward to verify; on the $i$th iteration of the loop on line \ref{lst:line:state-update-loop} the value of \texttt{state} is equal to $s(\sigma_i)$ where $\sigma_i$ is the actual internal state which algorithm $\mathcal{A}$ would have immediately after processing the $i$th layer of the BFS-decomposition. The algorithm then returns YES if and only if $\mathcal{A}$ would return YES with $G$ and $L_0,,\ldots,L_d$ as input. 
\end{proof}

We can now state our main result.

\begin{theoremrep}\label{thm:bfs-nice-gives-tlfpt}
    If $\mathcal{A}$ is a BFS-nice algorithm solving parameterized problem $\Pi$, then there exists an algorithm $\mathcal{A'}$ solving $\Pi$ in time $O(n)+f(k)$. 
\end{theoremrep}
\begin{proof}
    Follows directly from \cref{alg:bfsw-appling-transition} and \cref{lem:bfsw-applying-transition}.
\end{proof}

\begin{corollary}[Of \cref{thm:bfs-nice-gives-tlfpt} and \cref{lem:h-coloring-bfs-nice}]
    For any fixed graph $H$, \textsc{$H$-Coloring} is in TLFPT parameterized by the BFS-width of the input graph.
\end{corollary}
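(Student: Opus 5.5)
The statement to prove is the corollary. I would obtain it by instantiating \cref{thm:bfs-nice-gives-tlfpt} with the BFS-nice algorithm from \cref{lem:h-coloring-bfs-nice}, namely \cref{alg:basic-h-coloring}. Only a little glue is needed, because \cref{thm:bfs-nice-gives-tlfpt} presupposes two things: a root $r$ whose BFS-decomposition has width at most $k$, and that $\mathcal{A}$ is run on exactly that decomposition.

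First, fix $H$ and take an input $(G,k)$, where $k$ bounds the BFS-width. By the definition of BFS-width as a maximum over all roots, every vertex $r$ yields a decomposition of width at most $k$. We may therefore pick any vertex as root, say vertex $1$, so no search over roots is required. If $G$ is disconnected, BFS from $r$ only reaches its own component. I would treat each connected component separately: run $\mathcal DC$ (\cref{alg:bfsw-decomp-and-compress}) from an arbitrary root in each component, and apply the compiled \texttt{transition} table to each resulting word sequence. The answer is YES exactly when every component is accepted, since an $H$-coloring of $G$ is precisely an $H$-coloring of each component.

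Second, the running time. The table \texttt{transition}, the word \texttt{start}, and the table \texttt{accepting} depend only on $k$ and $H$, so they are computed once, in time $f(k)$. Each component $C$ then costs $O(|V(C)|+|E(C)|)$ by \cref{lem:bfsw-dc-runtime-and-correctness} and \cref{lem:bfsw-applying-transition}. Summing over all components gives $O(n)+f(k)$.

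Third, correctness. By \cref{lem:bfsw-applying-transition}, the output on each component coincides with the output of \cref{alg:basic-h-coloring}. That algorithm is correct because $H$-coloring constraints only involve edges within a layer or between consecutive layers. So a labeling $\ell_i$ of $L_i$ survives iff some proper $H$-coloring of $G[L_0\cup\dots\cup L_i]$ extends it, which follows by induction on $i$.

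The main obstacle I expect is making the inputs fit the interface of \cref{thm:bfs-nice-gives-tlfpt}. This concerns the disconnected case and the requirement that a layer of size below $k$ be padded, which the encoding already handles via the stored layer sizes. Beyond that, the step needing the most care is a subtle point about the state. One must confirm that the state $\mathcal L_i$ depends only on index positions within layers, so that its size is at most $2^{|V(H)|^k}$. It must also be a genuine function of $\mathcal L_{i-1}$ and the bicolored graph $G[L_i\cup L_{i-1}]$, and hence of the word $b_i$, with no dependence on global vertex identities.
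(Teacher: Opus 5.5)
Your route is the paper's: the corollary is just \cref{lem:h-coloring-bfs-nice} plugged into \cref{thm:bfs-nice-gives-tlfpt}, and the paper gives no further argument. Your extra glue is sound and covers points the paper passes over silently.
\begin{itemize}
\item Because BFS-width is defined as a maximum over roots, any vertex is a legitimate root for \cref{alg:bfsw-decomp-and-compress}.
\item A BFS-decomposition only partitions $V$ when $G$ is connected, so some treatment of components is needed.
\item Your induction (a labeling of $L_i$ survives iff it extends to an $H$-coloring of $G[L_0\cup\dots\cup L_i]$, using that edges only join equal or consecutive layers) is the right correctness argument for \cref{alg:basic-h-coloring}.
\item Your remark that the state must be indexed by within-layer positions, so that it is a function of the word $b_i$, is what makes that algorithm BFS-nice.
\end{itemize}

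The one step that does not follow as written is the per-component running time. \Cref{lem:bfsw-dc-runtime-and-correctness} and \cref{lem:bfsw-applying-transition} give $O(n)+f(k)$ \emph{per invocation}. That $f(k)$ term covers building \texttt{transition}, \texttt{start} and \texttt{accepting}, which you hoist out. It also covers two costs you do not hoist:
\begin{itemize}
\item initializing \texttt{kPunch} and \texttt{kkPunch} inside \cref{alg:bfsw-decomp-and-compress};
\item the membership test in \texttt{accepting}, which the paper only bounds by $f(k)$.
\end{itemize}
Invoking the lemmas once per component therefore gives $O(n)+c\cdot f(k)$, where $c$ is the number of components. Since $c$ can be $\Omega(n/k)$ (e.g.\ disjoint copies of $K_{1,k}$), this is only an LFPT bound.

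The repair is easy. Build the punch tables once as well, and store the accepting states as a Boolean array indexed by $s(\sigma)$, so the final test is $O(1)$. Equivalently, make a single pass that resets \texttt{state} to \texttt{start} at each new component and rejects as soon as a component ends in a non-accepting state. With that, each component really costs $O(|V(C)|+|E(C)|)$, and your $O(n)+f(k)$ bound follows.
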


We take this opportunity to note that, interestingly, \textsc{3-Coloring} is then TLFPT parameterized by either treedepth (\cref{thm:3col-td-tlfpt-kern}) or BFS-width.

\FloatBarrier

\section{Discussion and Future Work}
Inspired by the realities of modern computational needs, ours is an early exploration of truly linear FPT. Here distinctions between the additive and multiplicative definitions of FPT make a difference: LFPT -- $O(n) \cdot f(k)$ -- strictly contains TLFPT -- $O(n) + f(k)$. 

Towards being of service to practical computation, we have concentrated on the \textit{positive toolkit} of TLFPT. Via appropriate parameterizations and paying attention to the data structures involved in linear-time algorithmics, we placed many problems in TLFPT including: {\sc SAT, Vertex Cover, Min-Max Matching, $(n-k)$-Coloring, Diverse Pair of Matchings, $k$-Path, $k$-Vertex Ranking} and {\sc $H$-Coloring}.

In the case of {\sc $H$-Coloring}, where we parameterized by BFS-width and used dynamic programming, the similarity to parameterizations by tree-width is evident. So we ask: 
\begin{question}
    Is MSO$_2$ model checking in TLFPT parameterized by BFS-width?
\end{question}

Turning to {\sc $k$-Path}, our result is a first step towards being able to integrate TLFPT and the graph minors machinery; thus we ask: 
\begin{question}\label{q:minors}
    For which families $\{H_i\}_{i \in {\mathbb N}}$ is $H_i$-minor TLFPT parameterized by $|H_i|$?
\end{question}

Although not our focus here, the negative toolkit is equally important. By diagonalization, we have shown that there exist parameterized problems in LFPT (and thus in FPT) that are not in TLFPT. However, still lacking natural TLFPT-hard problems and techniques for proving TLFPT-hardness, we identify the following direction for further work.
\begin{question}
    Which FPT problems are likely not to be in TLFPT?
\end{question}
 
As we mentioned in Section~\ref{sec:BFS-w}, it seems plausible to generalize our dynamic programming results to classes of ``bounded width''. 
Of course, this would be useful only if we have access to the corresponding decompositions. We ask:

\begin{question}\label{q:pathwidth-tlfpt}
    Is pathwidth TLFPT parameterized by pathwidth? 
\end{question}

\begin{question}\label{q:treewidth-tlfpt}
    Is treewidth TLFPT parameterized by treewidth? 
\end{question}

\begin{question}\label{q:rankwidth-tlfpt}
    Is rankwidth TLFPT parameterized by rankwidth? 
\end{question}

There is no shortage of directions for further research. Consider, for example,  programmatic questions such as determining how TLFPT fares in practice, or how to design TLFPT approximation algorithms. In general, the theory of TLFPT is incipient and we expect its development to spark healthy conversation between theory and practice.

\newpage

\bibliography{ref}
\appendix

\section{Extended discussion of word-RAM architecture}\label{sec:why-word-ram-appendix}

Because we are interested in truly linear computation, the details of accessing and updating information about a graph are very important.  We also therefore need to be clear on our model of computation.  We use the word-RAM model \cite{FW90-STOC,FW90-FOCS} because it is the canonical model used for the analysis and design of fine-grained algorithms \cite{HHRTT24, EvdHM24}, though this is sometimes left implicit.  

A RAM machine \cite{CR73} classically consists of a finite program operating on an infinite sequence of registers, each with the capacity to hold an arbitrary integer. There are two core versions of the RAM model: one in which reading (or writing) the integer $n$ to a register, or accessing the $n$th register, takes constant time (``unit-cost RAM''), and another in which the same operations take time $\log |n|$. The former is liable to abuse of the constant-time assumption (e.g. by generating operands of size polynomial in the original input size, each of which can be stored in a single register), whereas the latter is limited compared to the actual hardware algorithms are typically executed on, which can indeed access registers (up to some bound) in constant time.

The word-RAM model \cite{FW90-STOC,FW90-FOCS} is a so-called trans-dichotomous model -- it aims to bridge the gap between these two paradigms. 
    The operation set (and the cost of any operation) is the same as for the unit-cost RAM model, but with the additional constraint that each register can only hold words consisting of up to $b$ bits, with $b=\log_2 \lceil N \rceil$ where $N$ is the size of the program input. Thus, arithmetic operations with operands of size at most that of the input can be performed in constant time, but (consistent with the reality that computers have fixed word length) the model forbids ``computations that achieve hidden parallelism by doing operations on `long words'.'' \cite{FW90-FOCS} 

Importantly, we will see that it is possible for a word to be long with respect to a parameter $k$ while still being sublinear with respect to the overall input size $n$. This reflects a sound assumption: that if $k$ is sufficiently small with respect to $n$, the (real-world) architecture which enables constant-time operations with operands of size $n$ can be leveraged to operate on ``long words'' of size $f(k) \le n$.

Recall that $w = \log n$ and that we may assume $n >> k$ (if $n$ is upper-bounded by any function of $k$ then we can simply apply an arbitrary algorithm which decides the problem at hand; the runtime of this algorithm on such an input is then also bounded by a function of $k$). 
\end{document}